\documentclass{lmcs}
\pdfoutput=1

% LMCS Layouting Macros
\usepackage{lastpage}
\lmcsdoi{17}{2}{8}
\lmcsheading{}{\pageref{LastPage}}{}{}%
{Jun.~30,~2020}{Apr.~22,~2021}{}

\keywords{
higher inductive types,
homotopy type theory,
Coq,
bicategories}

\newtheorem{axiom}[thm]{Axiom}
\theoremstyle{definition}
\newtheorem{constrInternal}[thm]{Construction}
\theoremstyle{plain}

\usepackage{mathtools}
\usepackage{amssymb}
\usepackage[utf8]{inputenc}
\usepackage{bussproofs}
\usepackage{listings}
\usepackage{coq}
\usepackage{color}
\usepackage{xcolor}
\usepackage{stmaryrd}

\usepackage{xspace}
\usepackage{xifthen}

\usepackage[all,cmtip]{xy}

\newenvironment{construction}[2][]
{\pushQED{\qed}\begin{constrInternal}[{for Problem~\ref{#2}\ifthenelse{\isempty{#1}}{}{; #1}}]}
	{\popQED\end{constrInternal}}

\newenvironment{bprooftree}
{\leavevmode\hbox\bgroup}
{\DisplayProof\egroup}

% general

\newcommand{\etal}{\emph{et~al.}}

\newcommand{\fat}[1]{\textbf{#1}}

% math
\newcommand{\eqdef}{:\equiv}

% for bicategories
\newcommand{\constfont}[1]{\ensuremath{\mathsf{#1}}}
\newcommand{\cat}[1]{\ensuremath{\constfont{#1}}\xspace}

\newcommand{\tc}{\theta}
\newcommand{\tcB}{\gamma}
\newcommand{\tcC}{\tau}

\newcommand{\lunitor}[1]{\lambda}
\newcommand{\runitor}[1]{\rho}
\newcommand{\linvunitor}[1]{\lambda^{-1}}
\newcommand{\rinvunitor}[1]{\rho^{-1}}
\newcommand{\lassoc}[3]{\alpha}
\newcommand{\rassoc}[3]{\alpha^{-1}}

\newcommand{\identitor}[1]{{#1}_i}
\newcommand{\compositor}[1]{{#1}_c}

\newcommand{\id}{\operatorname{id}}
\newcommand{\vcomp}{\bullet}
\newcommand{\whiskerl}{\vartriangleleft}
\newcommand{\whiskerr}{\vartriangleright}
\newcommand{\onecell}{\rightarrow}
\newcommand{\twocell}{\Rightarrow}

\newcommand{\B}{\cat{B}}
\newcommand{\C}{\cat{C}}
\newcommand{\D}{\cat{D}}

\newcommand{\onetypes}{1\mbox{-}\cat{Type}}
\newcommand{\grpd}{\cat{Grpd}}

\newcommand{\idtoiso}{\constfont{idtoiso}}

\newcommand{\ap}[2]{\constfont{ap} \ #1 \ #2}
\newcommand{\apd}[2]{\constfont{apd} \ #1 \ #2}

% displayed stuff
\newcommand{\disp}[1]{\overline{#1}}

\newcommand{\ff}{\disp{f}}

\renewcommand{\gg}{\disp{g}}

\newcommand{\xx}{\disp{x}}
\newcommand{\yy}{\disp{y}}
\newcommand{\zz}{\disp{z}}
\newcommand{\dtc}{\disp{\tc}}

\newcommand{\FF}{\disp{F}}

\newcommand{\GG}{\disp{G}}
\newcommand{\HH}{\disp{H}}
\newcommand{\LL}{\disp{L}}
\newcommand{\RR}{\disp{R}}
\newcommand{\etaeta}{\disp{\eta}}
\newcommand{\thetatheta}{\disp{\theta}}
\newcommand{\epseps}{\disp{\epsilon}}
\newcommand{\tautaul}{\disp{\tau_1}}
\newcommand{\tautaur}{\disp{\tau_2}}
\newcommand{\mm}{\disp{m}}

\newcommand{\dob}[2]{\ensuremath{{#1}(#2)}} % displayed objects
\newcommand{\dmor}[3]{#1 \xrightarrow{#3} #2} % displayed morphisms
\newcommand{\dtwo}[3]{#1 \xRightarrow{#3} #2} % displayed 2-cells
\newcommand{\total}[2][]{\ensuremath{\textstyle \int_{#1}{#2}}} % total bicategory
 % displayed sigma

\newcommand{\dproj}[1]{\pi_{#1}}

% pseudofunctors and so on
\newcommand{\pseudo}{\cat{Pseudo}}
\newcommand{\pstrans}[2]{#1 \twocell #2}
\newcommand{\modif}[2]{#1 \Rrightarrow #2}
\newcommand{\disppsfun}[3]{\dmor{#1}{#2}{#3}}
\newcommand{\disppstrans}[3]{\dtwo{#1}{#2}{#3}}
\newcommand{\dispmodif}[3]{\xymatrix{#1 \ar@3[r]^-{#3} & #2}}

% algebras

\newcommand{\alg}{\cat{Alg}}

% macros with proper spacing
\newcommand{\spac}{\hskip 0.2em plus 0.1em}
\def\Lam #1.{\lambda\,#1.\spac}%
\def\Sum #1.{\sum_{#1}\spac}%
\def\Prod #1.{\prod_{#1}\spac}%

% HoTT
\newcommand{\Type}[0]{\textsc{Type}}
\newcommand{\type}[1]{\operatorname{\textsf{#1}}}
\newcommand{\constructor}[1]{\operatorname{\mathbf{#1}}}

\newcommand{\defeq}{\equiv}

\newcommand{\bool}[0]{\type{2}}
\renewcommand{\booltrue}{\constructor{true}}
\renewcommand{\boolfalse}{\constructor{false}}

\newcommand{\unit}[0]{\type{1}}
\newcommand{\unitt}{\constructor{tt}}
\newcommand{\idpath}{\constructor{refl}}

\newcommand{\functions}[2]{#1 \rightarrow #2}
\newcommand{\depprod}[2]{\prod #1 , #2}

\newcommand{\refl}[1]{\constructor{idpath}(#1)}
\newcommand{\concat}[2]{#1 \bullet #2}
\newcommand{\inverse}[1]{#1^{-1}}

\newcommand{\pathover}[4][]{#3 =_{#2}^{#1} #4}
\newcommand{\globeover}[4][]{#3 =_{#2}^{#1} #4}

\newcommand{\projl}{\pi_1}
\newcommand{\projr}{\pi_2}
\newcommand{\mappair}[2]{\langle #1 , #2 \rangle}

\newcommand{\inl}{\constructor{inl}}
\newcommand{\inr}{\constructor{inr}}

\newcommand{\funextsec}{\constfont{funext}}

% variables
\newcommand{\sign}{\Sigma}
\newcommand{\pover}[1]{\overline{#1}}

% Displayed bicategories needed for algebras
\newcommand{\DFAlg}{\constfont{DFalg}}
\newcommand{\DCell}{\constfont{DFcell}}
\newcommand{\FSub}{\constfont{FSub}}

% Signatures
\newcommand{\pointconstr}[1][]{\constfont{A}^{#1}}
\newcommand{\pathlabel}[1][]{\constfont{J}^{#1}_{\constfont{P}}}
\newcommand{\patharg}[1][]{\constfont{S}^{#1}}
\newcommand{\pathleft}[1][]{\constfont{l}^{#1}}
\newcommand{\pathright}[1][]{\constfont{r}^{#1}}
\newcommand{\homotlabel}[1][]{\constfont{J}^{#1}_{\constfont{H}}}
\newcommand{\homotpointarg}[1][]{\constfont{R}^{#1}}
\newcommand{\homotpathtarg}[1][]{\constfont{T}^{#1}}
\newcommand{\pathargleft}[1][]{\constfont{a}^{#1}}
\newcommand{\pathargright}[1][]{\constfont{b}^{#1}}
\newcommand{\homotpathleft}[1][]{\constfont{s}^{#1}}
\newcommand{\homotpathright}[1][]{\constfont{t}^{#1}}
\newcommand{\homotleft}[1][]{\constfont{p}^{#1}}
\newcommand{\homotright}[1][]{\constfont{q}^{#1}}

% Algebras
\newcommand{\prealgM}{\constfont{PreAlg}}
\newcommand{\pathalgM}{\constfont{PathAlg}}
\newcommand{\pathalgMD}{\constfont{DPathAlg}}
\newcommand{\algM}{\constfont{Alg}}
\newcommand{\prealg}[1]{\prealgM(#1)}
\newcommand{\pathalg}[1]{\pathalgM(#1)}
\newcommand{\algebra}[1]{\algM(#1)}
\newcommand{\prealgG}[1]{\prealgM_{\grpd}(#1)}
\newcommand{\pathalgG}[1]{\pathalgM_{\grpd}(#1)}
\newcommand{\pathalgGD}[1]{\pathalgMD_{\grpd}(#1)}
\newcommand{\algG}[1]{\algM_{\grpd}(#1)}

\newcommand{\AlgPoint}[1]{\constfont{c}^{#1}}
\newcommand{\AlgPath}[2]{\constfont{p}^{#1}_{#2}}
\newcommand{\AlgHomot}[2]{\constfont{h}^{#1}_{#2}}

\newcommand{\AlgMapPoint}[1]{\constfont{c}^{#1}}
\newcommand{\AlgMapPath}[2]{\constfont{p}^{#1}_{#2}}

\newcommand{\AlgCellPoint}[1]{\constfont{c}^{#1}}

\newcommand{\DispAlgPoint}[1]{\overline{\constfont{c}}^{#1}}
\newcommand{\DispAlgPath}[2]{\overline{\constfont{p}}^{#1}_{#2}}
\newcommand{\DispAlgHomot}[2]{\overline{\constfont{h}}^{#1}_{#2}}
\newcommand{\TotalAlg}[2][]{\ensuremath{\textstyle \int_{#1}{#2}}} % total algebra

% Groupoids
\newcommand{\mor}[3]{#1(#2, #3)}
\newcommand{\idgrpd}[1]{\id(#1)}
\newcommand{\compgrpd}[2]{#1 \cdot #2}

% Quotient
\newcommand{\quot}[2]{{#1 / #2 }}

% Groupoid quotient
\newcommand{\gquotType}{\type{GQuot}}
\newcommand{\gcl}{\constructor{gcl}}
\newcommand{\gcleq}{\constructor{gcleq}}
\newcommand{\gconcat}{\constructor{gconcat}}
\renewcommand{\ge}{\constructor{ge}}
\newcommand{\gtrunc}{\constructor{gtrunc}}

\newcommand{\gclY}{\constfont{gcl}_Y}
\newcommand{\gcleqY}{\constfont{gcleq}_Y}
\newcommand{\gconcatY}{\constfont{gconcat}_Y}
\newcommand{\geY}{\constfont{ge}_Y}

\newcommand{\gind}{\constructor{gind}}

% Biequivalence between 1-types and groupoids

\newcommand{\pgrpd}{\constfont{PathGrpd}}
\newcommand{\gquot}{\constfont{GQuot}}
\newcommand{\prepgrpd}{\constfont{PathGrpd}_{\prealgM}}
\newcommand{\pregquot}{\constfont{GQuot}_{\prealgM}}
\newcommand{\pathpgrpd}{\constfont{PathGrpd}_{\pathalgM}}
\newcommand{\pathgquot}{\constfont{GQuot}_{\pathalgM}}
\newcommand{\algpgrpd}{\constfont{PathGrpd}_{\algM}}
\newcommand{\alggquot}{\constfont{GQuot}_{\algM}}

% Polynomials
\newcommand{\poly}{\constfont{P}}
\newcommand{\constantP}[1]{\constructor{C}(#1)}
\newcommand{\idP}{\constfont{Id}}
\newcommand{\sumP}[2]{#1 + #2}
\newcommand{\prodP}[2]{#1 \times #2}
\newcommand{\polyAct}[2]{#1(#2)}
\newcommand{\polyDact}[2]{\overline{#1}(#2)}
\newcommand{\polyDmap}[2]{\overline{#1}(#2)}
\newcommand{\polyoplax}[1]{\constfont{oplax}(#1)}

% Path Endpoints
\newcommand{\pathendpoint}[3]{\constfont{E}_{#1}(#2,#3)}
\newcommand{\idENA}{\constructor{id}} % identity endpoints
\newcommand{\idE}[1]{\idENA_{#1}} % identity endpoints
\newcommand{\comp}[2]{#1 \cdot #2} % composition of endpoints
\newcommand{\inle}{\constructor{inl}} %left inclusion of endpoints
\newcommand{\inre}{\constructor{inr}} % right inclusion of endpoints
\newcommand{\prle}{\constructor{pr}_1} % first projection of endpoints
\newcommand{\prre}{\constructor{pr}_2} % second projection of endpoints
\newcommand{\pair}[2]{(#1 , #2)} % pairing of endpoints
\newcommand{\Ce}{\constructor{c}} % constant endpoint
\newcommand{\constr}{\constructor{constr}} % constructor endpoint
\newcommand{\fmap}{\constructor{fmap}} % function endpoint

\newcommand{\pathendpointFun}[1]{\semE{#1}}
\newcommand{\pathendpointAct}[2]{\pathendpointFun{#1}(#2)}
\newcommand{\pathendpointDact}[2]{\pathendpointFun{\overline{#1}}(#2)}
\newcommand{\pathendpointDnat}[2]{\pathendpointFun{\overline{#1}}(#2)}

% Homotopy Endpoints
\newcommand{\homotendpoint}[6]{\constfont{H}_{#1,#2,#3,#4}(#5, #6)}

\newcommand{\hrefl}[1]{\constructor{idpath}(#1)}
\newcommand{\hinv}[1]{#1^{-1}}
\newcommand{\hconcatsymb}[0]{@}
\newcommand{\hconcat}[2]{#1 \, \hconcatsymb \, #2}
\newcommand{\hassocN}{\boldsymbol\alpha}
\newcommand{\hassoc}[3]{\hassocN(#1, #2, #3)}
\newcommand{\hlunitN}{\boldsymbol\lambda}
\newcommand{\hlunit}[1]{\hlunitN(#1)}
\newcommand{\hrunitN}{\boldsymbol\rho}
\newcommand{\hrunit}[1]{\hrunitN(#1)}
\newcommand{\hprl}[1]{\constructor{pr}_1(#1)}
\newcommand{\hprr}[1]{\constructor{pr}_2(#1)}
\newcommand{\hpairprle}{\constructor{pairpr}_1}
\newcommand{\hpairprre}{\constructor{pairpr}_2}
\newcommand{\hpair}[2]{(#1 , #2)}

\newcommand{\harg}[0]{\constructor{p}_{\constfont{arg}}}
\newcommand{\hconstr}[2]{\constructor{path}_{#1}(#2)}
\newcommand{\hcomppair}{\constructor{comppair}}
\newcommand{\hap}[2]{\constructor{ap} \ #1 \ #2}
\newcommand{\hcompconst}[1]{\constructor{cmap}(#1)}

\newcommand{\homotendpointAct}[2]{#1(#2)}
\newcommand{\homotendpointDact}[2]{\overline{#1}(#2)}

\newcommand{\idtoH}{\constfont{idtoH}}

% Interpretation
\newcommand{\semP}[1]{\llbracket #1 \rrbracket} % interpretation of poylnomials on 1-types
\newcommand{\semPG}[1]{\langle #1 \rangle} % interpretation of poylnomials on groupoids
\newcommand{\semE}[1]{\llbracket #1 \rrbracket} % interpretation of path endpoints on 1-types
\newcommand{\semEG}[1]{\langle #1 \rangle} % interpretation of path endpoints on groupoids
\newcommand{\semH}[1]{\llbracket #1 \rrbracket} % interpretation of homotopy endpoints on 1-types
\newcommand{\semHG}[1]{\langle #1 \rangle} % interpretation of homotopy endpoints on groupoids

% initial groupoid
\newcommand{\initob}{G_0}
\newcommand{\initmor}[2]{{#1 \sim #2}}
\newcommand{\initmorgen}[3]{{#2 \sim_{#1} #3}}
\newcommand{\initeq}[2]{{#1 \approx #2}}
\newcommand{\initeqgen}[3]{{#2 \approx_{#1} #3}}
\newcommand{\initeqprop}[2]{{#1 \approx^p #2}}

% Circle
\newcommand{\circleS}{S^1}
\newcommand{\baseS}{\constructor{base}_{S^1}}
\newcommand{\SLoop}{\constructor{loop}_{S^1}}

% Torus
\newcommand{\torus}{\mathcal{T}^2}
\newcommand{\base}{\constructor{base}}
\newcommand{\leftLoop}{\constructor{loop_l}}
\newcommand{\rightLoop}{\constructor{loop_r}}
\newcommand{\surface}{\constructor{surf}}

% Integers modulo 2
\newcommand{\ZT}{\mathbb{Z}_2}
\newcommand{\ZZ}{\constructor{Z}}
\newcommand{\ZS}{\constructor{S}}

% Set Truncation
\newcommand{\ST}[1]{|| #1 ||_0}
\newcommand{\SC}{\constructor{inc}}
\newcommand{\Strunc}{\constructor{trunc}}

% Propositional Truncation
\newcommand{\PT}[1]{|| #1 ||}
\newcommand{\PC}{\constructor{inc}}

% Free algebra
\newcommand{\freesign}[2]{\constfont{FreeSig}_{#1}(#2)}
\newcommand{\freealg}[2]{\constfont{FreeAlg}_{#1}(#2)}
\newcommand{\freealginc}[1]{\constfont{inc}_{#1}}
\newcommand{\freealgpsfun}[1]{\constfont{F}_{#1}}
\newcommand{\underlying}{\constfont{U}}
\newcommand{\freepath}[1]{\widehat{#1}}
\newcommand{\freehomot}[1]{\widehat{#1}}

%Coequalizer
\newcommand{\coequalizer}[2]{\mathsf{Coeqz}(#1,#2)}

%Coequifier
\newcommand{\coequifier}[2]{\mathsf{Coeqf}(#1,#2)}

\newcommand{\coequifierglue}{\constructor{glue}}

%Group quotient
\newcommand{\groupquot}[1]{\mathsf{GroupQuot(#1)}}

\newcommand{\groupquotloope}{\constructor{loope}}
\newcommand{\groupquotloopm}{\constructor{loopm}}

%Monoidal Object
\newcommand{\monobj}{\mathsf{MonObj}}
\newcommand{\monobjunit}{\constructor{u}}
\newcommand{\monobjtensor}{\constructor{m}}
\newcommand{\monobjlambda}{\constructor{lam}}
\newcommand{\monobjrho}{\constructor{rho}}
\newcommand{\monobjalpha}{\constructor{al}}
\newcommand{\monobjtr}{\constructor{tr}}
\newcommand{\monobjpent}{\constructor{pent}}

%Coherent Group
\newcommand{\cohgroup}{\mathsf{CohGrp}}
\newcommand{\cohgroupunit}{\constructor{u}}
\newcommand{\cohgrouptensor}{\constructor{m}}
\newcommand{\cohgroupinv}{\constructor{i}}

\newcommand{\cohgrouplinv}{\constructor{linv}}
\newcommand{\cohgrouprinv}{\constructor{rinv}}

% Fundamental groups
\newcommand{\circlegrpd}{\mathcal{S}}
\newcommand{\torusgrpd}{\mathcal{T}}
\newcommand{\grquotgrpd}{\mathcal{G}}

\newcommand{\circlegrpdbase}{\mathfrak{b}}
\newcommand{\circlegrpdloop}{\mathfrak{l}}

\newcommand{\torusgrpdbase}{\mathfrak{b}}
\newcommand{\torusgrpdloopl}{\mathfrak{l}}
\newcommand{\torusgrpdloopr}{\mathfrak{r}}
\newcommand{\torusgrpdsurf}{\mathfrak{s}}

% Finite limits in algebras
\newcommand{\prodB}[2]{#1 \times #2}
\newcommand{\projlB}{\projl}
\newcommand{\projrB}{\projr}

% Iso theorem
\renewcommand{\Im}{\constfont{Im}}
\newcommand{\ImProj}[1]{\constfont{proj}_{#1}}
\newcommand{\ImInc}[1]{\constfont{inc}_{#1}}
\newcommand{\Ker}{\constfont{Ker}}

\newcommand{\toGrpdPathAlg}[1]{\widehat{#1}}
\newcommand{\toGrpdAlg}[1]{\widehat{#1}}

% List
\newcommand{\List}{\mathsf{List}}

\newcommand{\remove}[1]{}

% rules for \initmor
\newcommand{\initmorid}[1]{\constructor{id}_{{\sim}}(#1)}
\newcommand{\initmoridvar}[1]{\constructor{id}_{{\sim}}#1}
\newcommand{\initmorinv}[1]{\constructor{inv}_{{\sim}}(#1)}
\newcommand{\initmorcomp}[2]{\constructor{comp}_{{\sim}}(#1,#2)}
\newcommand{\initmorinl}[1]{\constructor{inl}_{{\sim}}(#1)}
\newcommand{\initmorinr}[1]{\constructor{inr}_{{\sim}}(#1)}
\newcommand{\initmorpair}[2]{\constructor{pair}_{{\sim}}(#1,#2)}
\newcommand{\initmorap}[1]{\constructor{ap}_{{\sim}}(#1)}
\newcommand{\initmorpath}[1]{\constructor{path}_{{\sim}}(#1)}

\begin{document}
	
\lstset{language=Coq}

%% Title information
\title{Constructing Higher Inductive Types as Groupoid Quotients}

%% Author information
%% Contents and number of authors suppressed with 'anonymous'.
%% Each author should be introduced by \author, followed by
%% \authornote (optional), \orcid (optional), \affiliation, and
%% \email.
%% An author may have multiple affiliations and/or emails; repeat the
%% appropriate command.
%% Many elements are not rendered, but should be provided for metadata
%% extraction tools.

%% Author with single affiliation.
\author[Niccol{\`o} Veltri]{Niccol{\`o} Veltri\rsuper{a}}
%\orcid{0000−0002−7230−3436}             %% \orcid is optional
%\affiliation{
%  \department{Department of Software Science}
%  \institution{Tallinn University of Technology}
%  \city{Tallinn}
%  \country{Estonia}
%}
\address{\lsuper{a}Tallinn University of Technology, Estonia}
\email{niccolo@cs.ioc.ee}

\author[Niels van der Weide]{Niels van der Weide\rsuper{b}}
\address{\lsuper{b}Radboud University, Nijmegen, The Netherlands}
\email{nweide@cs.ru.nl}

\begin{abstract}
In this paper, we study finitary 1-truncated higher inductive types (HITs) in homotopy type theory. We start by showing that all these types can be constructed from the groupoid quotient. We define an internal notion of signatures for HITs, and for each signature, we construct a bicategory of algebras in 1-types and in groupoids. We continue by proving initial algebra semantics for our signatures. After that, we show that the groupoid quotient induces a biadjunction between the bicategories of algebras in 1-types and in groupoids. Then we construct a biinitial object in the bicategory of algebras in groupoids, which gives the desired algebra. From all this, we conclude that all finitary 1-truncated HITs can be constructed from the groupoid quotient.

We present several examples of HITs which are definable using our notion of signature. In particular, we show that each signature gives rise to a HIT corresponding to the freely generated algebraic structure over it. We also start the development of universal algebra in 1-types. We show that the bicategory of algebras has PIE limits, i.e. products, inserters and equifiers, and we prove a version of the first isomorphism theorem for 1-types. Finally, we give an alternative characterization of the foundamental groups of some HITs, exploiting our construction of HITs via the groupoid quotient. All the results are formalized over the UniMath library of univalent mathematics in Coq.
\end{abstract}

\maketitle

\section{Introduction}
The Martin-Löf identity type, also known as \emph{propositional equality}, represents provable equality in type theory \cite{martin1975intuitionistic}.
This type is defined polymorphically over all types and has a single introduction rule representing reflexivity.
The eliminator, often called the J-rule or path induction, is used to prove symmetry and transitivity.
Note that in particular, we can talk about the identity type of an already established identity type.
This can be iterated to obtain an infinite tower of types, which has the structure of an $\infty$-groupoid \cite{van2011types,lumsdaine2009weak}.

The J-rule is also the starting point of \emph{homotopy type theory} \cite{hottbook}.
In that setting, types are seen as spaces, terms are seen as points, proofs of identity of terms are seen as paths,
and proofs of identity between identities are seen as homotopies.
In mathematical terms, type theory can be interpreted in many Quillen model categories \cite{awodey2009homotopy,LumsdaineW15}, as for example simplicial sets \cite{simpset}.
In the simplicial model, not every two inhabitants of the identity type are equal,
which is also the case in the groupoid model \cite{HofmannS94,MR1686862} and the cubical sets model \cite{BezemCH13}.

If we assume enough axioms, then we can construct types for which we can prove that not every two inhabitants of the identity type are equal.
One example is the universe if one assumes the univalence axiom \cite{hottbook}.
Other examples can be obtained by using \emph{higher inductive types} (HITs).

Higher inductive types generalize inductive types by allowing constructors for paths, paths between paths, and so on.
While inductive types are specified by giving the arities of the operations \cite{dybjer1994inductive},
for higher inductive types one must also specify the arities of the paths, paths between paths, and so on.
The resulting higher inductive type is freely generated by the provided constructors.
To make this concrete, let us look at some examples \cite{hottbook}:

\begin{center}
\begin{lstlisting}[mathescape=true]
Inductive $\circleS$ :=
| $\baseS$ : $\circleS$
| $\SLoop$ : $\baseS = \baseS$
\end{lstlisting}

\begin{lstlisting}[mathescape=true]
Inductive $\torus$ :=
| $\base$ : $\torus$
| $\leftLoop, \rightLoop$ : $\base = \base$
| $\surface$ : $\concat{\leftLoop}{\rightLoop} = \concat{\rightLoop}{\leftLoop}$
\end{lstlisting}
\end{center}

The first one, $\circleS$, represents the circle.
It is generated by a point constructor $\baseS : \circleS$ and a path constructor $\SLoop : \baseS = \baseS$.
The second one, $\torus$, represents the torus.
This type is generated by a point constructor $\base$, two path constructors $\leftLoop$ and $\rightLoop$ of type $\base = \base$,
and a homotopy constructor $\surface : \concat{\leftLoop}{\rightLoop} = \concat{\rightLoop}{\leftLoop}$
where $p \vcomp q$ denotes the concatenation of $p$ and $q$.
Note that path and homotopy constructors depend on previously given constructors in the specification.
For both types, introduction, elimination, and computation rules can be given \cite{hottbook}.

In this paper, we study a schema of higher inductive types that allows defining types by
giving constructors for the points, paths, and homotopies.
All of these constructors can be recursive, but they can only have a finite number of recursive arguments.
Concretely, this means that every inhabitant can be constructed as a finitely branching tree.
Note that recursion is necessary to cover examples such as the set truncation, algebraic theories, and the integers.
Such HITs are called \emph{finitary}.
A similar scheme was studied by Dybjer and Moeneclaey
and they interpret HITs on this scheme in the groupoid model \cite{DBLP:journals/entcs/DybjerM18}.

Say that a type $X$ is \emph{1-truncated} if for all $x, y : X$, $p, q : x = y$, and $r, s : p = q$ we have $r = s$,
and a \emph{1-type} is a type which is 1-truncated.
In terms of the $\infty$-groupoid structure mentioned before, such types are 1-groupoids.
An example of a 1-type is $\circleS$ \cite{LicataS13}, which we mentioned before,
and another one is the classifying space of a group \cite{LicataF14}.
Groupoids are related to 1-types via the \emph{groupoid quotient} \cite{sojakovaPhD},
which takes a groupoid $G$ and returns a 1-type
whose points are objects of $G$ identified up to isomorphism.

The main goal of this paper is to show that finitary 1-truncated higher inductive types can be derived from simpler principles.
More specifically, every finitary 1-truncated HIT can be constructed in a type theory
with the natural numbers, propositional truncations, set quotients, and groupoid quotients.
Note that the set quotient is a special instance of the groupoid quotient.
The result of this paper can be used to simplify the semantic study of finitary 1-truncated HITs.
Instead of verifying the existence of a wide class of HITs, one only needs to check the existence
of propositional truncations and groupoid quotients.

Moreover, we employ our framework for HITs for the development of
2-dimensional universal algebra. Each HIT discussed in the paper comes
with a notion of algebra: a 1-type (or a groupoid) which is closed
under the introduction rules of the HIT.  Algebras for a HIT form a
bicategory. We prove that this bicategory has PIE limits. Moreover,
all its morphisms admit a factorization analogous to the one given by the
first isomorphism theorem. Our framework also allows the construction
of the free algebra for a signature, generalizing the notion of term
algebra in (1-dimensional) universal algebra. Notice that the bicategory of algebras (and also all the other concrete bicategories we consider in this paper) is a (2,1)-category, so the notion of PIE limit in this case coincides with that of homotopy limit \cite{AvigadKL15}.

Lastly, we show how to exploit our construction of HITs via the
groupoid quotient to calculate the fundamental group of some HITs.

The contributions of this paper are summarized as follows
\begin{itemize}
	\item An internal definition of signatures for HITs which allow path and homotopy constructors (Definition \ref{def:signature});
	\item Bicategories of algebras in both 1-types and groupoids (Definition \ref{def:bicat_grpd});
	\item A proof that biinitial algebras in 1-types satisfy the induction principle (Proposition \ref{thm:initial_alg_sem});
	\item A biadjunction between the bicategories of algebras in 1-types and algebras in groupoids (Construction \ref{constr:alg_biadj});
	\item A construction of 1-truncated HITs from the groupoid quotient (Construction \ref{constr:hit_exist}),
	which shows that such HITs exist.
	This is the main contribution of this paper;
        \item PIE limits in the bicategories of algebras in 1-types (Section \ref{sec:finite_limits});
        \item The definition of the free algebra for a signature as a HIT (Definition \ref{def:free_alg});
        %The free algebra construction underlies a coherent biadjunction between the bicategory of algebras and the bicategory of 1-types;
        \item A proof of the first isomorphism theorem for 1-types (Theorem \ref{thm:iso_thm});
        \item An alternative approach for calculating the fundamental groups of some HITs (Section~\ref{sec:fundamental_groups}).
\end{itemize}

\paragraph*{Related work.}
Various schemes of higher inductive types have been defined and studied.
Awodey \etal \ studied inductive types in homotopy type theory and proved initial
algebra semantics \cite{AwodeyGS12}.
Sojakova extended their result to various higher inductive types, among which
are the groupoid quotient, W-suspensions, and the torus \cite{Sojakova15,sojakovaPhD}.
For these types, Sojakova proved that homotopy initiality is equivalent to the induction principle while we only show that these two imply each other.
Note that all the HITs we consider, are 1-truncated, while Sojakova looked at types which are not necessarily 1-truncated.
However, while we consider a general class of HITs with higher path constructors, she only studied several examples of such HITs, namely the groupoid quotient, the torus, and higher truncations.
Basold \etal \ \cite{BasoldGW17} defined a scheme for HITs allowing for both point and path constructors,
but no higher constructors, and a similar scheme is given by
Moeneclaey \cite{moeneclaey2016schema}. 
Dybjer and Moeneclaey extended this scheme by allowing homotopy constructors and
they give semantics in the groupoid model \cite{DBLP:journals/entcs/DybjerM18}.
In the framework of computational higher-dimensional type theory \cite{AngiuliHW17},
Cavallo and Harper defined indexed cubical inductive types and proved canonicity \cite{CavalloH19}.
Altenkirch \etal \ defined quotient inductive-inductive types, which combine the features
of quotient types with inductive-inductive types \cite{forsberg2010inductive,AltenkirchCDKF18}.
Kov\'acs and Kaposi extended this syntax to higher inductive-inductive types \cite{KaposiK18},
which can be used to define not necessarily set-truncated types.
The scheme studied in this paper, is most similar to the one by Dybjer and Moeneclaey \cite{DBLP:journals/entcs/DybjerM18}
with the restriction that each type has a constructor indicating that the type is 1-truncated.
In particular, this means that inductive-inductive types are not considered.
Note that the HITs we study only have the right elimination property with respect to 1-types,
unlike W-suspensions \cite{Sojakova15,sojakovaPhD}.

Higher inductive types have already been used for numerous applications.
One of them is synthetic homotopy theory.
Spaces, such as the real projective spaces, higher spheres, and Eilenberg-MacLane spaces,
can be defined as higher inductive types \cite{licata2013pi,LicataF14,DBLP:conf/lics/BuchholtzR17,hottbook}.
The resulting definitions are strong enough to determine homotopy groups
\cite{licata2013pi,LicataS13}.
In addition, algebraic theories can be modeled as HITs, which allows one
to define Kuratowski-finite sets as a higher inductive type \cite{frumin2018finite}.
Other applications of HITs include homotopical patch theory, which provides a way
to model version control systems \cite{AngiuliMLH16}, and modeling data types
such as the integers \cite{BasoldGW17,altenkirchscoccola}.
Besides, quotient inductive-inductive types can be used to define the partiality monad \cite{AltenkirchDK17}.
These types can also be used to define type theory within type theory \cite{AltenkirchK16}
and to prove its normalization \cite{DBLP:journals/lmcs/AltenkirchK17}.
Since the HITs in this paper are 1-truncated, they are able to express term algebras of finitary algebraic theories.
For examples such as real projective spaces and higher spheres, we can only define their 1-truncation.

Several classes of higher inductive types have already been reduced to simpler ones.
Both Van Doorn and Kraus constructed propositional truncations from non-recursive higher inductive types
\cite{Doorn16,Kraus16}.
Van Doorn, Von Raumer, and Buchholtz showed that the groupoid quotient can be constructed using pushouts \cite{DoornRB17}.
By combining their result with ours, we get that all finitary 1-truncated higher inductive types can be constructed from pushouts.
Using the join construction, Rijke constructed several examples of HITs, namely $n$-truncations, the Rezk completion,
and set quotients \cite{rijke2017join}.
Awodey \etal \ gave an impredicative construction of finitary inductive types and some HITs \cite{awodey2018impredicative}. 
Constructions of more general classes of HITs have also been given.
Assuming UIP, Kaposi \etal \ constructed all finitary quotient inductive-inductive types
from a single one \cite{KaposiKA19}, and without UIP, Van der Weide and Geuvers
constructed all finitary set truncated HITs from quotients \cite{van2019construction}.
Note that these two works only concern set truncated HITs while our work concerns 1-truncated HITs.
Furthermore, the HITs considered by Van der Weide and Geuvers are a special case of
the HITs in this paper.

An alternative way to verify the existence of higher inductive types,
is by constructing them directly in a model.
Coquand \etal \ interpreted several HITs in the cubical sets model \cite{BezemCH13,CoquandHM18}.
Note that one can constructively prove univalence in the cubical sets model \cite{CohenCHM16}
and that cubical type theory satisfies homotopy canonicity \cite{DBLP:conf/rta/CoquandHS19}.
Furthermore, cubical type theory has been implemented in Agda with support for higher inductive types \cite{vezzosi2019cubical}.
Lumsdaine and Shulman give a semantical scheme for HITs and show that these can be interpreted
in sufficiently nice model categories \cite{lumsdaine2017semantics}.

Lastly, there are other approaches to 2-dimensional universal algebra.
Blackwell \etal \ study 2-dimensional universal algebra in a syntax-less fashion: they define 2-categories of algebras of a given 2-monad \cite{blackwell1989two}.
Their main result says that these 2-categories support limits and colimits.
On the contrast, our approach is based on a concrete notion of signature, which represents the syntax.
Note that by Corollary \ref{cor:pseudomonadofsig}, each signature gives rise to a pseudomonad and as such, the work by Blackwell \etal \ considers a more general version of 2-dimensional universal algebra.

\paragraph*{Formalization.}
All results in this paper are formalized in Coq \cite{Coq:manual} using UniMath \cite{UniMath}.
The formalization uses the version with \texttt{git} hash \href{https://github.com/UniMath/UniMath/tree/2dadfb61f5ef0d9805cf0eb6b80ef2beb26472d5}{2dadfb61} and can be found here:
\begin{center}
\url{https://github.com/nmvdw/GrpdHITs/tree/extended}
\end{center}
In our development, we slightly deviate from the UniMath philosophy by employing Coq's support for inductive type families.

\paragraph*{Overview.}
We start by recalling the groupoid quotient and displayed bicategories in Section \ref{sec:prelims}.
Displayed bicategories are our main tool to construct the bicategory of algebras for a signature.
In Section \ref{sec:signs}, we define signatures and show that each signature gives rise to a bicategory of algebras in both 1-types and groupoids.
The notion of a higher inductive type on a signature is given in Section \ref{sec:induction}.
There, we also prove initial algebra semantics, which says that biinitiality is a sufficient condition for being a HIT.
To construct the desired higher inductive type, we use the groupoid quotient, and in Section \ref{sec:biadj} we lift this to a biadjunction on the level of algebras.
As a consequence, constructing the initial algebra of a signature in groupoids is sufficient to construct the desired higher inductive type.
In Section \ref{sec:existence}, we construct the desired initial algebra and we conclude that each signature has an associated higher inductive type. 
Next we discuss more examples of HITs in Section \ref{sec:examples} and there we also show how to obtain monoidal objects and coherent 2-groups as algebras for certain signatures.
After that we study 2-dimensional universal algebra with our signatures.
More specifically, we construct PIE limits of algebras in Section \ref{sec:finite_limits}, the free algebra for a signature in Section \ref{sec:free_algebra}, and we prove the first isomorphism theorem in Section \ref{sec:isomorphism_theorem}.
The final topic we discuss, is the calculation of fundamental groups.
In Section \ref{sec:fundamental_groups}, we use the way we constructed higher inductive types to determine the fundamental group of the circle, the torus, and the group quotient.
Lastly, we conclude in Section \ref{sec:conclusion}.

\paragraph*{Publication History.}
This paper is an extended version of \cite{nmvdw2020} by the second author.
In Section \ref{sec:signs}, we changed Definition \ref{def:homotep} and in Section \ref{sec:induction}, we added Construction \ref{constr:total_alg}.
Figures \ref{fig:initmor} and \ref{fig:initmoreq} are also new.
Sections \ref{sec:examples} to \ref{sec:fundamental_groups} are new.

\paragraph*{Notation.}
In this paper, we work in dependent type theory and we assume the univalence axiom. In particular, this means that we also have function extensionality.
Let us recall some notation from HoTT which we use throughout this paper.
The identity path is denoted by $\refl{x}$, the concatenation of paths $p : x = y$ and $q : y = z$ is denoted by $p \vcomp q$, and the inverse of a path $p : x = y$ is denoted by $\inverse{p} : y = x$.
Given a type $X$ with points $x, y : X$ and paths $p, q : x = y$, we call a path $s : p = q$ a \emph{2-path}.
A \emph{proposition} is a type of which all inhabitants are equal.
A \emph{set} is a type $X$ such that for all $x, y : X$ the type $x = y$ is a proposition.
\remove{A \emph{1-type} is a type $X$ such that for all $x, y : X$ the type $x = y$ is a set.}
A \emph{homotopy} between $f, g : X \rightarrow Y$ consists of a path
$f(x) = g(x)$ for each $x : A$. By assuming function
extensionality, we have access to a map $\funextsec$ sending a
homotopy between functions $f$ and $g$ to a path $f = g$. Given a
type $A$, we write $\PT{A}$ for its \emph{propositional truncation}
and $\PC : A \to \PT{A}$ for the point constructor. We define the existential quantification
$\exists a : A.B(a)$ as $\PT{\Sum {a : A}. B(a)}$.

\section{Preliminaries}
\label{sec:prelims}
\subsection{Groupoid Quotient}
Let us start by formally introducing the groupoid quotient \cite{sojakovaPhD}.
The groupoid quotient is a higher dimensional version of the set quotient,
so let us briefly recall the set quotient.
Given a setoid $(X,R)$ (a set $X$ with an equivalence relation $R$ valued in propositions on $X$),
the set quotient gives a type $\quot{X}{R}$, which is $X$ with the points identified according to $R$.
Note that $\quot{X}{R}$ always is a set since equality in $\quot{X}{R}$ is described by $R$.

Instead of a setoid, the groupoid quotient takes a groupoid as input.
Recall that a groupoid is a category in which every morphism is invertible.
In particular, each groupoid has identity morphisms, denoted by $\idgrpd{x}$, and a composition operation.
The composition of $f$ and $g$ is denoted by $\compgrpd{f}{g}$.
In addition, the type of morphisms from $x$ to $y$ is required to be a set.
We write $\grpd$ for the type of groupoids.

In homotopy type theory, groupoids (and more generally categories) are
usually required to be univalent, meaning that the type of isomorphisms between objects
in a groupoid is equivalent to to the type of equalities between objects. In this paper we
do not use this requirement, so the groupoids are not
necessarily univalent. Note that the objects in a groupoid form a type without any restriction on its truncation level.

Given $G : \grpd$, the groupoid quotient gives a 1-type $\gquotType(G)$.
In this type, the points are objects of $G$
and these are identified according to the morphisms in $G$.
In addition, the groupoid structure must be preserved.
Informally, we define the groupoid quotient as the following HIT.

\begin{lstlisting}[mathescape=true]
Inductive $\gquotType$ $(G : \grpd)$ :=
| $\gcl$ : $\functions{G}{\gquotType(G)}$
| $\gcleq$ : $\depprod{(x, y : G) (f : \mor{G}{x}{y})}{\gcl(x) = \gcl(y)}$
| $\ge$ : $\depprod{(x : G)}{\gcleq(\idgrpd{x}) = \refl{\gcl(x)}}$
| $\gconcat$ : $\depprod{(x, y, z : G)(f : \mor{G}{x}{y})(g : \mor{G}{y}{z})}{\gcleq(\compgrpd{f}{g}) = \concat{\gcleq(f)}{\gcleq(g)}}$
| $\gtrunc$ : $\depprod{(x, y : \gquotType(G))(p, q: x = y) (r, s : p = q)}{r = s}$
\end{lstlisting}

To formally add this type to our theory, we need to provide
introduction, elimination, and computation rules for $\gquotType(G)$.
Formulating the elimination principle requires two preliminary notions.
These are inspired by the work of Licata and Brunerie \cite{licata2015cubical}.
The first of these gives paths in a dependent type over a path in the base.

\begin{defi}
\label{def:path_over}
Given a type $X : \Type$,
a type family $Y : \functions{X}{\Type}$,
points $x_1, x_2 : X$,
a path $p : x_1 = x_2$,
and points $\overline{x_1} : Y(x_1)$ and $\overline{x_2} : Y(x_2)$ over $x_1$ and $x_2$ respectively,
we define the type $\pathover[Y]{p}{\overline{x_1}}{\overline{x_2}}$ of \fat{paths over} $p$ from $\overline{x_1}$ to $\overline{x_2}$ by path induction on $p$
by saying that the paths over the identity path $\refl{x}$ from $\overline{x_1}$ to $\overline{x_2}$ are just paths $\overline{x_1} = \overline{x_2}$.
\end{defi}

Note that the groupoid quotient also has constructors for paths between paths.
This means that we also need a dependent version of 2-paths,
and inspired by the terminology of globular sets, we call these \emph{globes} over a given 2-path.
We define them as follows.

\begin{defi}
\label{def:globe_over}
Let $X$, $Y$, and $x_1, x_2$ be as in Definition \ref{def:path_over}.
Suppose that we have paths $p, q : x_1 = x_2$,
a 2-path $g : p = q$,
and paths $\pover{p} : \pathover{p}{\overline{x_1}}{\overline{x_2}}$ and $\pover{q} : \pathover{q}{\overline{x_1}}{\overline{x_2}}$ over $p$ and $q$ respectively.
We define the type $\globeover{g}{\pover{p}}{\pover{q}}$ of \fat{globes over} $g$ from $\pover{p}$ to $\pover{q}$ by path induction on $g$
by saying that the paths over the identity path $\refl{p}$ are just paths $\pover{p} = \pover{q}$.
\end{defi}

From this point on, we assume that our type theory has the groupoid quotient.
More specifically, we assume the following axiom.

\begin{figure*}[t]
Introduction rules: 

\vspace{2pt}

\begin{center}
\begin{bprooftree}
\AxiomC{$x : G$}
\UnaryInfC{$\gcl(x) : \gquotType(G)$}
\end{bprooftree}
\begin{bprooftree}
\AxiomC{$x, y : G$}
\AxiomC{$f : \mor{G}{x}{y}$}
\BinaryInfC{$\gcleq(f) : \gcl(x) = \gcl(y)$}
\end{bprooftree}
\begin{bprooftree}
\AxiomC{$x : G$}
\UnaryInfC{$\ge(x) : \gcleq(\idgrpd{x}) = \refl{\gcl(x)}$}
\end{bprooftree}
\end{center}

\vspace{5pt}

\begin{center}
\begin{bprooftree}
\AxiomC{$x, y, z : G$}
\AxiomC{$f : \mor{G}{x}{y}$}
\AxiomC{$g : \mor{G}{y}{z}$}
\TrinaryInfC{$\gconcat(f, g) : \gcleq(\compgrpd{f}{g}) = \concat{\gcleq(f)}{\gcleq(g)}$}
\end{bprooftree}
\end{center}

\vspace{5pt}

\begin{center}
\begin{bprooftree}
\AxiomC{$x, y : \gquotType(G)$}
\AxiomC{$p, q : x = y$}
\AxiomC{$r, s : p = q$}
\TrinaryInfC{$\gtrunc(r, s) : r = s$}
\end{bprooftree}
\end{center}

\vspace{15pt}

Elimination rule:

\vspace{2pt}

\begin{center}
\begin{bprooftree}
\AxiomC{$Y : \gquotType(G) \rightarrow \onetypes$}
\AxiomC{$\gclY : \depprod{(x : G)}{Y(\gcl(x))}$}
\noLine
\BinaryInfC{$\gcleqY : \depprod{(x, y : G) (f : \mor{G}{x}{y})}{\pathover{\gcleq(f)}{\gclY(x)}{\gclY(y)}}$}
\noLine
\UnaryInfC{$\geY : \depprod{(x : G)}{\globeover{\ge(x)}{\gcleqY(\idgrpd{x})}{\refl{\gclY(x)}}}$}
\noLine
\UnaryInfC{$\gconcatY : \depprod{(x, y, z : G) (f : \mor{G}{x}{y}) (g : \mor{G}{y}{z})}{}$}
\noLine
\UnaryInfC{$\qquad\qquad\qquad\qquad\globeover{\gconcat(f, g)}{\gcleqY(\compgrpd{f}{g})}{\concat{\gcleqY(f)}{\gcleqY(g)}}$}
\UnaryInfC{$\gind(\gclY, \gcleqY, \geY, \gconcatY) : \depprod{(x : \gquotType(G))}{Y(x)}$}
\end{bprooftree}
\end{center}

\vspace{15pt}

Computation rules:

For $\gcl$: $\gind(\gclY, \gcleqY, \geY, \gconcatY)(\gcl(x)) \defeq \gclY(x)$

For $\gcleq$: $\apd{(\gind(\gclY, \gcleqY, \geY, \gconcatY))}{(\gcleq(f))} = \gcleqY(f)$
\caption{Introduction, elimination, and computation rules for the groupoid quotient \cite{sojakovaPhD}.}
\label{fig:gquot}
\end{figure*}

\begin{axiom}
For each groupoid $G$ there is a type $\gquotType(G)$ which satisfies the rules in Figure \ref{fig:gquot}.
\end{axiom}

Since the groupoid quotient is a special instance of the Rezk completion, the type $\gquotType(G)$ can be constructed without assuming any axiom \cite{rezk_completion}.
However, this construction increases the universe level and only by assuming some resizing axiom, one can stay in the same universe.

Note that there are no computation rules for $\gconcat$, $\ge$, and $\gtrunc$,
because equations on homotopies follow automatically from the fact that $Y$ is a family of 1-types.

\subsection{Bicategory Theory}
The upcoming constructions make heavy use of notions from bicategory theory \cite{10.1007/BFb0074299,leinster:basic-bicats}
and in particular, the displayed machinery introduced by Ahrens \etal \ \cite{bicatjournal}.
Here we recall some examples of bicategories and the basics of displayed bicategories.

A bicategory consists of objects, 1-cells between objects, and 2-cells between 1-cells.
The type of 1-cells from $x$ to $y$ is denoted by $x \onecell y$ and the type of 2-cells from $f$ to $g$ is denoted by $f \twocell g$.
Note that the type $f \twocell g$ is required to be a set.
There are identity 1-cells and 2-cells denoted by $\id_1$ and $\id_2$ respectively. Composition of 1-cells $f$ and $g$ is denoted by $f \cdot g$,
and the vertical composition of 2-cells $\tc$ and $\tc'$ is denoted by $\tc \vcomp \tc'$.
Note that we use diagrammatic order for composition.
The left whiskering of a 2-cell $\tc$ with 1-cell $f$ is denoted by $f \whiskerl \tc$ and right whiskering of $\tc$ with a 1-cell $g$ is denoted by $\tc \whiskerr g$. 
Unitality and associativity of vertical composition of 2-cells hold strictly, while for 1-cells these laws hold only up to invertible 2-cells. Given 1-cell $f : A \onecell B$, there are invertible 2-cells $\lunitor{f} : \id_1(A) \cdot f \twocell f$ and $\runitor{f} : f \cdot \id_1(B) \twocell f$.
Given three composable 1-cells $f,g$ and $h$, there is an invertible 2-cell $\lassoc{f}{g}{h} : f \cdot (g \cdot h) \twocell (f \cdot g) \cdot h$.

Let us fix some notation before continuing.  Given bicategories $\B_1$
and $\B_2$, we write $\pseudo(\B_1, \B_2)$ for the type of
pseudofunctors from $\B_1$ to $\B_2$. Pseudofunctors preserve identity
and composition of 1-cells up to an invertible 2-cell. Given a
pseudofunctor $F : \pseudo(\B_1, \B_2)$ and an object $A$ of $\B_1$,
there are invertible 2-cells $\identitor{F} :\id_1(F(A)) \twocell
F(\id_1(A))$ and $\compositor{F} : F(f) \cdot F(g) \twocell F(f \cdot
g)$, with $f$ and $g$ two composable 1-cells. Preservation of identity
and composition of 2-cells is strict.
The type of pseudotransformations from $F$ to $G$ is denoted by
$\pstrans{F}{G}$.
The naturality square of a pseudotransformation $\theta : \pstrans{F}{G}$ commutes only up to invertible 2-cell. Given a 1-cell $f$ in $\B_1$, there is an invertible 2-cell $\theta_1(f) : \theta(A) \cdot G(f) \twocell F(f) \cdot \theta(B)$.
The type of modifications from $\theta$ to
$\theta'$ is denoted by $\modif{\theta}{\theta'}$
\cite{leinster:basic-bicats}.  Next we discuss \emph {biadjunctions}
\cite{gurski2012biequivalences,LACK2000179}.

\begin{defi}
Let $\B_1$ and $\B_2$ be bicategories.
A \fat{biadjunction} from $\B_1$ to $\B_2$ consists of
\begin{itemize}
	\item pseudofunctors $L : \pseudo(\B_1, \B_2)$ and $R : \pseudo(\B_2, \B_1)$;
	\item pseudotransformations $\eta : \pstrans{\id(\B_1)}{L \cdot R}$ and $\varepsilon : \pstrans{R \cdot L}{\id(\B_2)}$;
	\item invertible modifications
	\[
	\tau_1 : \modif{\rho(R)^{-1} \vcomp R \whiskerl \eta \vcomp \alpha(R, L, R) \vcomp \varepsilon \whiskerr R \vcomp \lambda(R)}{\id(R)}
	\]
	\[
	\tau_2 : \modif{\lambda(L)^{-1} \vcomp \eta \whiskerr L \vcomp \alpha(L, R, L)^{-1} \vcomp L \whiskerl \varepsilon \vcomp \rho(L)}{\id(L)}
	\]
\end{itemize}
The type of biadjunctions from $\B_1$ to $\B_2$ is denoted by $L \dashv R$
where $L : \pseudo(\B_1, \B_2)$ and $R : \pseudo(\B_2, \B_1)$.
If we have $L \dashv R$, we say that $L$ is \fat{left biadjoint} to $R$.
\end{defi}

Before presenting the definition of coherent biadjunction, let us introduce some notation.
Suppose, that we have a biadjunction $L \dashv R$ where $L : \pseudo(\B_1, \B_2)$.
For each $x : \B_1$ we get a 1-cell $\eta(x) : x \onecell R(L(x))$ and for $x : \B_2$, we get a 1-cell $\varepsilon(x) : L(R(x)) \onecell x$.
Given $x : \B_1$, we get an invertible 2-cell $\widehat{\tau_2(x)} : L(\eta(x)) \cdot \varepsilon(L(x)) \twocell \id_1(L(x))$ from $\tau_2$.
Similarly, we define an invertible 2-cell $\widehat{\tau_1(x)} : \eta(R(x)) \cdot R(\varepsilon(x)) \twocell \id_1(R(x))$ for $x : \B_2$.

\begin{defi}
Given bicategories $\B_1$ and $\B_2$, a pseudofunctor $L : \pseudo(\B_1, \B_2)$, and a biadjunction $L \dashv R$, we say that $L \dashv R$ is \fat{coherent} if the following 2-cells are equal to identity 2-cells
\begin{equation*}
\begin{split}
(\identitor{L} \vcomp L(\widehat{\tau_1(x)}^{-1}) \vcomp \compositor{L}^{-1}) \whiskerr \varepsilon(x)
\vcomp \alpha^{-1}
\vcomp L(\eta(R(x))) \whiskerl \varepsilon_1(\varepsilon(x))^{-1}
\vcomp \alpha
\vcomp \widehat{\tau_2(R(x))} \whiskerr \varepsilon(x)
\end{split}
\end{equation*}
\begin{equation*}
\begin{split}
\eta(x) \whiskerl \widehat{\tau_1(L(x))}^{-1}
\vcomp \alpha
\vcomp (\eta_1(\eta(x)))^{-1} \whiskerr R(\varepsilon(L(x)))
\vcomp \alpha^{-1}
\vcomp \eta(x) \whiskerl (\compositor{R} \vcomp R(\widehat{\tau_2(x)}) \vcomp \identitor{R}^{-1})
\end{split}
\end{equation*}
\end{defi}

The following bicategories are important for subsequent constructions:
%Beside these standard notions, we use two bicategories:
$\onetypes$ and $\grpd$.

\begin{exa}
We have
\begin{itemize}
	\item a bicategory $\onetypes$ whose objects are 1-types, 1-cells are functions, and 2-cells are homotopies;
	\item a bicategory $\grpd$ of groupoids
	whose objects are groupoids, 1-cells are functors, and 2-cells are natural transformations.
\end{itemize}
\end{exa}

Note that $\onetypes$ and $\grpd$ are actually (2,1)-categories just like the other bicategories that we use in this paper.
However, in the formalization we do not use this fact so that we can reuse the work by Ahrens \etal \ \cite{bicatjournal}.

Next we discuss \emph{displayed bicategories}, which is our main tool to define bicategories of algebras for a signature.
Intuitively, a displayed bicategory $\D$ over $\B$ represents structure and properties to be added to $\B$.
Displayed bicategories generalize displayed categories to the bicategorical setting \cite{AhrensL19}.
Each such $\D$ gives rise to a total bicategory $\total{\D}$.
The full definition can be found in the paper by Ahrens \etal \ \cite{bicatjournal}.
Here, we only show a part.

\begin{defi}
Let $\B$ be a bicategory.
A \fat{displayed bicategory} $\D$ over $\B$ consists of
\begin{itemize}
	\item For each $x : \B$ a type $\dob{\D}{x}$ of \fat{objects over $x$};
	\item For each $f : x \onecell y$, $\xx : \dob{\D}{x}$ and $\yy : \dob{\D}{y}$,
	a type $\dmor{\xx}{\yy}{f}$ of \fat{1-cells over $f$};
	\item For each $\tc : f \twocell g$, $\ff : \dmor{\xx}{\yy}{f}$, and $\gg : \dmor{\xx}{\yy}{g}$, a \emph{set} $\dtwo{\ff}{\gg}{\tc}$ of \fat{2-cells over $\tc$}.
\end{itemize}
In addition, there are identity cells and there are composition and whiskering operations.
The composition of displayed 1-cells $f$ and $g$ is denoted by $f \cdot g$, the displayed identity 1-cell is denoted by $\id_1(x)$.
The vertical composition of 2-cells $\tc$ and $\tc'$ is denoted by $\tc \vcomp \tc'$, the left and right whiskering is denoted by $f \whiskerl \tc$ and $\tc \whiskerr f$ respectively,
and the identity 2-cell is denoted by $\id_2(f)$. 
\end{defi}

\begin{defi}
\label{def:totalbicat}
Let $\B$ be a bicategory and let $\D$ be a displayed bicategory over $\B$.
We define the \fat{total bicategory} $\total{D}$ as the bicategory whose objects are dependent pairs $(x, \xx)$ with $x$ in $\B$ and $\xx$ in $\dob{\D}{x}$.
The 1-cells and 2-cells in $\total{D}$ are defined similarly.
In addition, we define the \fat{projection} $\dproj{D} : \pseudo(\total{D}, \B)$ to be the pseudofunctor which takes the first component of each pair.
\end{defi}

Let us finish this section by defining the displayed bicategories we need in the remainder of this paper.
Examples \ref{ex:DFAlg} and \ref{ex:DCell} were first given by Ahrens \etal ~ \cite{bicatjournal}.

\begin{exa}
\label{ex:DFAlg}
Given a bicategory $\B$ and a pseudofunctor $F : \pseudo(\B, \B)$,
we define a displayed bicategory $\DFAlg(F)$ over $\B$ such that
\begin{itemize}
	\item the objects over $x : \B$ are 1-cells $h_x : F(x) \onecell x$;
	\item the 1-cells over $f : x \onecell y$ from $h_x$ to $h_y$ are invertible 2-cells $\tcC_f : h_x \cdot f \twocell F(f) \cdot h_y$;
	\item the 2-cells over $\tc : f \twocell g$ from $\tcC_f$ to $\tcC_g$ are equalities
	\begin{equation*}\label{eq:twocell_alg}
	h_x \whiskerl \tc \vcomp \tcC_g = \tcC_f \vcomp F(\tc) \whiskerr h_y.
	\end{equation*}
\end{itemize}
\end{exa}

\begin{exa}
\label{ex:disp_depprod}
Given a bicategory $\B$, a type $I$, and for each $i : I$ a displayed bicategory $\D_i$ over $\B$,
we define a displayed bicategory $\depprod{(i : I)}{\D_i}$ over $\B$ such that
\begin{itemize}
	\item the objects over $x : \B$ are functions $\depprod{(i : I)}{\D_i(x)}$;
	\item the 1-cells over $f : x \onecell y$ from $\overline{x}$ to $\overline{y}$ are functions $\depprod{(i : I)}{\dmor{\overline{x}(i)}{\overline{y}(i)}{f}}$;
	\item the 2-cells over $\tc : f \twocell g$ from $\overline{f}$ to $\overline{g}$ are functions $\depprod{(i : I)}{\dtwo{\overline{f}(i)}{\overline{g}(i)}{\tc}}$.
\end{itemize}
\end{exa}

\begin{exa}
\label{ex:DCell}
Let $\B$ be a bicategory with a displayed bicategory $\D$ over it.
Now suppose that we have pseudofunctors $S, T : \pseudo(\B, \B)$ and two pseudotransformations $l, r : \pstrans{\dproj{D} \cdot S}{\dproj{D} \cdot T}$.
Then we define a displayed bicategory $\DCell(l,r)$ over $\total{D}$ such that
\begin{itemize}
	\item the objects over $x$ are 2-cells $\tcB_x : l(x) \twocell r(x)$;
	\item the 1-cells over $f : x \onecell y$ from $\tcB_x$ to $\tcB_y$ are equalities
	\[
	(\tcB_x \whiskerr T(\dproj{D}(f))) \vcomp r(f)
	=
	l(f) \vcomp (S(\dproj{D}(f)) \whiskerl \tcB_y);
	\]
	\item the 2-cells over $\tc : f \twocell g$ are inhabitants of the unit type.
\end{itemize}
\end{exa}

\begin{exa}
\label{ex:fullsub}
Let $\B$ be a bicategory and let $P$ be a family of propositions on the objects of $\B$.
We define a displayed bicategory $\FSub(P)$ over $\B$ whose objects over $x$ are proofs of $P(x)$
and whose displayed 1-cells and 2-cells are inhabitants of the unit type.
The total bicategory $\total{\FSub(P)}$ is the \fat{full subbicategory} of $\B$ whose objects satisfy $P$.
\end{exa}

\section{Signatures and their Algebras}
\label{sec:signs}
Before we can discuss how to construct 1-truncated higher inductive types,
we need to define signatures for those.
Our notion of signature is similar to the one by Dybjer and Moeneclaey \cite{DBLP:journals/entcs/DybjerM18}.
However, instead of defining them externally, we define a type of signatures within type theory
just like what was done for inductive-recursive and inductive-inductive
definitions \cite{Dybjer1999AFA,forsberg2012finite}. 
We also show that each signature $\sign$ gives rise to a bicategory of algebras for~$\sign$.

In this section, we study HITs of the following shape
\begin{lstlisting}[mathescape=true]
Inductive $H$ :=
| $c$ : $\functions{\polyAct{P}{H}}{H}$
| $p$ : $\depprod{(j : I) (x : \polyAct{Q_i}{H})}{l_i(x) = r_i(x)}$
| $s$ : $\depprod{(j : J) (x : \polyAct{R_j}{H}) (r : a_j(x) = b_j(x))}{\homotendpointAct{p_j}{x, r} = \homotendpointAct{q_j}{x, r}}$
| $t$ : $\depprod{(x, y : H) (q_1, q_2 : x = y) (r_1, r_2 : q_1 = q_2)}{r_1 = r_2}$
\end{lstlisting}
To see what the challenges are when defining such HITs, let us take a closer look at the torus.

\begin{lstlisting}[mathescape=true]
Inductive $\torus$ :=
| $\base$ : $\torus$
| $\leftLoop, \rightLoop$ : $\base = \base$
| $\surface$ : $\concat{\leftLoop}{\rightLoop} = \concat{\rightLoop}{\leftLoop}$
\end{lstlisting}

There is a point constructor $\base$, two paths constructors $\leftLoop, \rightLoop : \base = \base$,
and a homotopy constructor $\surface : \concat{\leftLoop}{\rightLoop} = \concat{\rightLoop}{\leftLoop}$.
Note that $\leftLoop$ and $\rightLoop$ refer to $\base$ and that $\surface$ refers to all other constructors.
Hence, the signatures we define must be flexible enough to allow such dependencies.

A similar challenge comes up when defining the bicategory of algebras for a signature.
For the torus, an algebra would consist of a type $X$, a point $b$, paths $p, q : b = b$, and a 2-path $s : \concat{p}{q} = \concat{q}{p}$.
Again there are dependencies: $p$ and $q$ depend on $b$ while $s$ depends on both $p$ and $q$.
To deal with these dependencies, we use displayed bicategories, which allow us to construct the bicategory of algebras in a stratified way.

\subsection{Signatures}\label{sec:signatures}
Now let us define signatures, and to do so, we must specify data which describes the constructors for points, paths, and homotopies.
To specify the point constructors, we use \emph{polynomial codes}.
Given a type $X$ and a polynomial code $P$, we get another type $\polyAct{P}{X}$.
Such a code $P$ describes the input of an operation of the form $\polyAct{P}{X} \rightarrow X$.

\begin{defi}
The type of \fat{codes for polynomials} is inductively generated by the following constructors
\[ 
\constantP{X} : \poly, \quad \idP : \poly, \quad \sumP{P_1}{P_2} : \poly, \quad \prodP{P_1}{P_2} : \poly
\]
where $X$ is a 1-type and $P_1$ and $P_2$ are elements of $\poly$.
\end{defi}

The constructor $\constantP{X}$ represents the constant polynomial returning the type $X$, $\idP$ represents the identity,
and $\sumP{P_1}{P_2}$ and $\prodP{P_1}{P_2}$ represent the sum and product respectively.
Note that we restrict ourselves to finitary polynomials since we do not have a constructor which represents the function space.

The second part of the signature describes the path constructors, which represent universally quantified equations.
To describe them, we must give two \emph{path endpoints}.
These endpoints can refer to the point constructor, which we represent by a polynomial $A$.
In addition, they have a source (the type of the quantified variable) and a target (the type of the term).
The source and the target are represented by polynomials $S$ and $T$ respectively.

\begin{figure*}[t]
\begin{center}
\begin{bprooftree}
\AxiomC{$P : \poly$}
\UnaryInfC{$\idE{A} : \pathendpoint{A}{P}{P}$}
\end{bprooftree}
\begin{bprooftree}
\AxiomC{$P, Q, R : \poly$}
\AxiomC{$e_1 : \pathendpoint{A}{P}{Q}$}
\AxiomC{$e_2 : \pathendpoint{A}{Q}{R}$}
\TrinaryInfC{$\comp{e_1}{e_2} : \pathendpoint{A}{P}{R}$}
\end{bprooftree}
\end{center}

\vspace{5pt}

\begin{center}
\begin{bprooftree}
\AxiomC{$\vphantom{P, Q : \poly}$}
\UnaryInfC{$\constr : \pathendpoint{A}{A}{\idP}$}
\end{bprooftree}
\begin{bprooftree}
\AxiomC{$P, Q : \poly$}
\UnaryInfC{$\inle : \pathendpoint{A}{P}{\sumP{P}{Q}}$}
\end{bprooftree}
\begin{bprooftree}
\AxiomC{$P, Q : \poly$}
\UnaryInfC{$\inre : \pathendpoint{A}{Q}{\sumP{P}{Q}}$}
\end{bprooftree}
\end{center}

\vspace{5pt}

\begin{center}
\begin{bprooftree}
\AxiomC{$P, Q : \poly$}
\UnaryInfC{$\prle : \pathendpoint{A}{\prodP{P}{Q}}{P}$}
\end{bprooftree}
\begin{bprooftree}
\AxiomC{$P, Q : \poly$}
\UnaryInfC{$\prre : \pathendpoint{A}{\prodP{P}{Q}}{Q}$}
\end{bprooftree}
\end{center}

\vspace{5pt}

\begin{center}

\begin{bprooftree}
\AxiomC{$P, Q, R: \poly$}
\AxiomC{$e_1 : \pathendpoint{A}{P}{Q}$}
\AxiomC{$e_2 : \pathendpoint{A}{P}{R}$}
\TrinaryInfC{$\pair{e_1}{e_2} : \pathendpoint{A}{P}{\prodP{Q}{R}}$}
\end{bprooftree}
\end{center}

\vspace{5pt}

\begin{center}
\begin{bprooftree}
\AxiomC{$P : \poly$}
\AxiomC{$X : \onetypes$}
\AxiomC{$x : X$}
\TrinaryInfC{$\Ce(x) : \pathendpoint{A}{P}{\constantP{X}}$}
\end{bprooftree}
\begin{bprooftree}
\AxiomC{$X , Y : \onetypes$}
\AxiomC{$f : X \rightarrow Y$}
\BinaryInfC{$\fmap(f) : \pathendpoint{A}{\constantP{X}}{\constantP{Y}}$}
\end{bprooftree}
\end{center}
\caption{Rules for the path endpoints.}
\label{fig:path_ep}
\end{figure*}

\begin{defi}
Let $A$, $S$, and $T$ be codes for polynomials.
The type $\pathendpoint{A}{S}{T}$ of \fat{path endpoints} with arguments $A$, source  $S$, and target $T$ is inductively generated by the constructors given in Figure \ref{fig:path_ep}.
\end{defi}

Note that the parameter $A$ is only used in the path endpoint $\constr$, which represents the point constructor.
If we have a type $X$ with a function $c : \polyAct{A}{X} \rightarrow X$,
then each endpoint $e$ gives for every $x : \polyAct{S}{X}$ a point $\pathendpointFun{e}(x) : \polyAct{T}{X}$.
Note that $\pathendpointFun{e}(x)$ depends on $c$ while we do not write $c$ in the notation.
Often we write $e(x)$ instead of $\pathendpointFun{e}(x)$.
Hence, two endpoints $l, r : \pathendpoint{A}{S}{T}$ represent the equation
\[
\depprod{(x : \polyAct{S}{X})}{\pathendpointAct{l}{x} = \pathendpointAct{r}{x}}.
\]
Note that a HIT could have arbitrarily many path constructors and we index them by the type $J$.

The last part of the signature describes the homotopy constructors
and these depend on both the point and path constructors.
A homotopy constructor represents an equation of paths, which is universally quantified over both points and paths of
the HIT being defined.
The point argument is represented by a polynomial $R$, and the path argument is represented by a polynomial $T$ and endpoints $a, b : \pathendpoint{A}{R}{T}$.
Lastly, the type of the paths in the equation is described by two endpoints $s, t : \pathendpoint{A}{R}{W}$ with a polynomial $W$.

\begin{figure*}[t]
\begin{center}
\begin{bprooftree}
\AxiomC{$T : \poly$}
\AxiomC{$e : \pathendpoint{A}{R}{T}$}
\BinaryInfC{$\hrefl{e} : \homotendpoint{l}{r}{a}{b}{e}{e}$}
\end{bprooftree}
\begin{bprooftree}
\AxiomC{$T : \poly$}
\AxiomC{$e_1, e_2 : \pathendpoint{A}{R}{T}$}
\AxiomC{$h : \homotendpoint{l}{r}{a}{b}{e_1}{e_2}$}
\TrinaryInfC{$\hinv{h} : \homotendpoint{l}{r}{a}{b}{e_2}{e_1}$}
\end{bprooftree}
\end{center}

\vspace{5pt}

\begin{center}
\begin{bprooftree}
\AxiomC{$T : \poly$}
\AxiomC{$e_1, e_2, e_3 : \pathendpoint{A}{R}{T}$}
\AxiomC{$h_1 : \homotendpoint{l}{r}{a}{b}{e_1}{e_2}$}
\AxiomC{$h_2 : \homotendpoint{l}{r}{a}{b}{e_2}{e_3}$}
\QuaternaryInfC{$\hconcat{h_1}{h_2} : \homotendpoint{l}{r}{a}{b}{e_1}{e_3}$}
\end{bprooftree}
\end{center}

\vspace{5pt}

\begin{center}
\begin{bprooftree}
\AxiomC{$T_1, T_2 : \poly$}
\AxiomC{$e_1, e_2 : \pathendpoint{A}{Q}{T_1}$}
\AxiomC{$e : \pathendpoint{A}{T_1}{T_2}$}
\AxiomC{$h : \homotendpoint{l}{r}{a}{b}{e_1}{e_2}$}
\QuaternaryInfC{$\hap e h : \homotendpoint{l}{r}{a}{b}{\comp{e_1}{e}}{\comp{e_2}{e}}$}
\end{bprooftree}
\end{center}

\vspace{5pt}

\begin{center}
\begin{bprooftree}
\AxiomC{$T_1, T_2, T_3 : \poly$}
\AxiomC{$e_1 : \pathendpoint{A}{R}{T_1}$}
\AxiomC{$e_2 : \pathendpoint{A}{T_1}{T_2}$}
\AxiomC{$e_3 : \pathendpoint{A}{T_2}{T_3}$}
\QuaternaryInfC{$\hassoc{e_1}{e_2}{e_3} : \homotendpoint{l}{r}{a}{b}{\comp{e_1}{(\comp{e_2}{e_3})}}{\comp{(\comp{e_1}{e_2})}{e_3}}$}
\end{bprooftree}
\end{center}

\vspace{5pt}

\begin{center}
\begin{bprooftree}
\AxiomC{$T: \poly$}
\AxiomC{$e : \pathendpoint{A}{R}{T}$}
\BinaryInfC{$\hlunit{e} : \homotendpoint{l}{r}{a}{b}{\comp{\idE{R}}{e}}{e}$}
\end{bprooftree}
\begin{bprooftree}
\AxiomC{$T: \poly$}
\AxiomC{$e : \pathendpoint{A}{R}{T}$}
\BinaryInfC{$\hrunit{e} : \homotendpoint{l}{r}{a}{b}{\comp{e}{\idE{T}}}{e}$}
\end{bprooftree}
\end{center}

\vspace{5pt}

\begin{center}
\begin{bprooftree}
\AxiomC{$T_1, T_2 : \poly$}
\AxiomC{$e_1 : \pathendpoint{A}{R}{T_1}$}
\AxiomC{$e_2 : \pathendpoint{A}{R}{T_2}$}
\TrinaryInfC{$\hpairprle : \homotendpoint{l}{r}{a}{b}{\comp{\pair{e_1}{e_2}}{\prle}}{e_1}, \quad \hpairprre : \homotendpoint{l}{r}{a}{b}{\comp{\pair{e_1}{e_2}}{\prre}}{e_2}$}
\end{bprooftree}
\end{center}

\vspace{5pt}

\begin{center}
\begin{bprooftree}
\def\defaultHypSeparation{\hskip .037in}
\AxiomC{$T_1, T_2 : \poly$}
\AxiomC{$e_1, e_2 : \pathendpoint{A}{R}{T_1}$}
\AxiomC{$e_3, e_4 : \pathendpoint{A}{R}{T_2}$}
%\noLine
\AxiomC{$h_1 : \homotendpoint{l}{r}{a}{b}{e_1}{e_2}$}
\AxiomC{$h_2 : \homotendpoint{l}{r}{a}{b}{e_3}{e_4}$}
\QuinaryInfC{$\hpair{h_1}{h_2} : \homotendpoint{l}{r}{a}{b}{\pair{e_1}{e_3}}{\pair{e_2}{e_4}}$}
\end{bprooftree}
\end{center}

\vspace{5pt}

\begin{center}
\begin{bprooftree}
\AxiomC{$T_1, T_2 , T_3: \poly$}
\AxiomC{$e_1 : \pathendpoint{A}{R}{T_1}$}
\AxiomC{$e_2 : \pathendpoint{A}{T_1}{T_2}$}
\AxiomC{$e_3 : \pathendpoint{A}{T_1}{T_3}$}
\QuaternaryInfC{$\hcomppair : \homotendpoint{l}{r}{a}{b}{\comp{e_1}{\pair{e_2}{e_3}}}{\pair{\comp{e_1}{e_2}}{\comp{e_1}{e_3}}}$}
\end{bprooftree}
\end{center}

\vspace{5pt}

\begin{center}
\begin{bprooftree}
\AxiomC{$T: \poly$}
\AxiomC{$X: \onetypes$}
\AxiomC{$x: X$}
\AxiomC{$e : \pathendpoint{A}{R}{T}$}
\QuaternaryInfC{$\hcompconst e : \homotendpoint{l}{r}{a}{b}{\comp{e}{\Ce x}}{\Ce x}$}
\end{bprooftree}
\end{center}

\vspace{5pt}

\begin{center}
%% \begin{bprooftree}
%% \AxiomC{$T_1, T_2 : \poly$}
%% \AxiomC{$e_1, e_2 : \pathendpoint{A}{R}{T_1}$}
%% \AxiomC{$h : \homotendpoint{l}{r}{a}{b}{e_1}{e_2}$}
%% \TrinaryInfC{$\hinl{h} : \homotendpoint{l}{r}{a}{b}{\comp{e_1}{\inle}}{\comp{e_2}{\inle}}$}
%% \end{bprooftree}
\begin{bprooftree}
\AxiomC{$j : J$}
\AxiomC{$e : \pathendpoint{A}{R}{Q_j}$}
\BinaryInfC{$\hconstr{j}{e} : \homotendpoint{l}{r}{a}{b}{\comp{e}{l(j)}}{\comp{e}{r(j)}}$}
\end{bprooftree}
%% \vspace{5pt}
%% \begin{bprooftree}
%% \AxiomC{$T_1, T_2 : \poly$}
%% \AxiomC{$e_1, e_2 : \pathendpoint{A}{R}{T_2}$}
%% \AxiomC{$h : \homotendpoint{l}{r}{a}{b}{e_1}{e_2}$}
%% \TrinaryInfC{$\hinr{h} : \homotendpoint{l}{r}{a}{b}{\comp{e_1}{\inre}}{\comp{e_2}{\inre}}$}
%% \end{bprooftree}
\begin{bprooftree}
\AxiomC{$\vphantom{e : \pathendpoint{A}{R}{Q_j}}$}
\UnaryInfC{$\harg : \homotendpoint{l}{r}{a}{b}{a}{b}$}
\end{bprooftree}
\end{center}

\caption{Rules for the homotopy endpoints.}
\label{fig:homot_ep}
\end{figure*}

\begin{defi}
\label{def:homotep}
Suppose that we have
\begin{itemize}
	\item A polynomial $A$;
	\item A type $J$ together with for each $j : J$ a polynomial $Q_j$ and endpoints $l_j, r_j : \pathendpoint{A}{Q_j}{\idP}$;
	\item A polynomial $R$;
	\item A polynomial $T$ with endpoints $a, b : \pathendpoint{A}{R}{T}$;
	\item A polynomial $W$ with endpoints $s, t : \pathendpoint{A}{R}{W}$.
\end{itemize}
Then we define the type $\homotendpoint{l}{r}{a}{b}{s}{t}$ of \fat{homotopy endpoints} inductively by the constructors in Figure \ref{fig:homot_ep}.
\end{defi}

There are three homotopy endpoints of particular importance.
The first one is $\constructor{path}$, which represents the path constructor and it makes use of $l_j$ and $r_j$.
The second one is $\harg$, which represents the path argument and it uses $a$ and $b$.
The last one is $\constructor{ap}$ and it corresponds to the action of an endpoint on a homotopy endpoint.

The way we represent path arguments allows us to specify equations with any finite number of path arguments by only two path endpoints.
For example, two path arguments $p : x_1 = y_1$ and $q : x_2 = y_2$ is represented by one path argument of type $(x_1, x_2) = (y_1, y_2)$.

From the grammar in Figure \ref{fig:homot_ep}, we can derive
the following additional homotopy endpoints, which will be employed in forthcoming
examples. 

\vspace{5pt}

\begin{center}
\begin{bprooftree}
\AxiomC{$T_1, T_2 : \poly$}
\AxiomC{$e_1, e_2 : \pathendpoint{A}{R}{T_1}$}
\AxiomC{$e_3, e_4 : \pathendpoint{A}{R}{T_2}$}
\AxiomC{$h : \homotendpoint{l}{r}{a}{b}{\pair{e_1}{e_3}}{\pair{e_2}{e_4}}$}
\QuaternaryInfC{$\hprl{h} \eqdef \hconcat{\hinv{\hpairprle}}{\hconcat{\hap {\prle} h}{\hpairprle}} : \homotendpoint{l}{r}{a}{b}{e_1}{e_2}$}
\alwaysNoLine
\UnaryInfC{$\hprr{h} \eqdef \hconcat{\hinv{\hpairprre}}{\hconcat{\hap {\prre} h}{\hpairprre}} : \homotendpoint{l}{r}{a}{b}{e_3}{e_4}$}
\end{bprooftree}
\end{center}

\vspace{5pt}
Moreover, we have a function $\idtoH$ sending a path between endpoints $e_1 = e_2$ into an homotopy endpoint $\homotendpoint{l}{r}{a}{b}{e_1}{e_2}$, readily definable by path induction.

Given a type $X$ with a function $c : \polyAct{A}{X} \rightarrow X$ and for each $x : Q_j(X)$ a path $l_j(x) = r_j(x)$,
a homotopy endpoint $p : \homotendpoint{l}{r}{a}{b}{s}{t}$ gives rise for each point $x : \polyAct{R}{X}$ and path $w : a(x) = b(x)$
to another path $\homotendpointAct{p}{x, w} : s(x) = t(x)$.
Hence, two homotopy endpoints $p, q : \homotendpoint{l}{r}{a}{b}{s}{t}$ represent the equation
\[
\depprod{(x : \polyAct{R}{X}) (w : a(x) = b(x))}{\homotendpointAct{p}{x, w} = \homotendpointAct{q}{x, w}}
\]
Now let us put this all together and define what signatures for higher inductive types are.

\begin{defi}
\label{def:signature}
A \fat{HIT-signature} $\sign$ consists of
\begin{itemize}
	\item A polynomial $\pointconstr[\sign]$;
	\item A type $\pathlabel[\sign]$ together with for each $j : \pathlabel[\sign]$ a polynomial $\patharg[\sign]_j$ and endpoints $\pathleft[\sign]_j, \pathright[\sign]_j : \pathendpoint{\pointconstr[\sign]}{\patharg[\sign]_j}{\idP}$;
	\item A type $\homotlabel[\sign]$ together with for each $j : \homotlabel[\sign]$ polynomials $\homotpointarg[\sign]_j$ and $\homotpathtarg[\sign]_j$,
	endpoints $\pathargleft[\sign]_j, \pathargright[\sign]_j : \pathendpoint{\pointconstr[\sign]}{\homotpointarg[\sign]_j}{\homotpathtarg[\sign]_j}$
	and $\homotpathleft[\sign]_j, \homotpathright[\sign]_j : \pathendpoint{\pointconstr[\sign]}{\homotpointarg[\sign]_j}{\idP}$,
	and homotopy endpoints $\homotleft[\sign]_j, \homotright[\sign]_j : \homotendpoint{\pathleft[\sign]}{\pathright[\sign]}{\pathargleft[\sign]_j}{\pathargright[\sign]_j}{\homotpathleft[\sign]_j}{\homotpathright[\sign]_j}$.
\end{itemize}
\end{defi}

If $\sign$ is clear from the context, we do not write the superscript.
In the remainder, we show how to interpret the following HIT given a signature $\sign$:

\begin{lstlisting}[mathescape=true]
Inductive $H$ :=
| $c$ : $\functions{\polyAct{\pointconstr}{H}}{H}$
| $p$ : $\depprod{(j : \pathlabel) (x : \polyAct{\patharg_j}{H})}{\pathendpointAct{\pathleft_j}{x} = \pathendpointAct{\pathright_j}{x}}$
| $s$ : $\depprod{(j : \homotlabel) (x : \polyAct{\homotpointarg_j}{H}) (r : \pathargleft_j(x) = \pathargright_j(x))}{\homotendpointAct{\homotleft_j}{x, r} = \homotendpointAct{\homotright_j}{x, r}}$
| $t$ : $\depprod{(x, y : H) (q_1, q_2 : x = y) (r_1, r_2 : q_1 = q_2)}{r_1 = r_2}$
\end{lstlisting}

Next we consider three examples of HITs we can express with these signatures.

\begin{exa}
\label{ex:torus}
The torus is described by the signature $\torus$.
\begin{itemize}
	\item Take $\pointconstr[\torus] \eqdef \constantP{\unit}$;
	\item Take $\pathlabel[\torus] \eqdef \bool$ and for both inhabitants we take $\patharg[\torus] \eqdef \constantP{\unit}$ and $\pathleft[\torus] \eqdef \pathright[\torus] \eqdef \constr$;
	\item Take $\homotlabel[\torus] \eqdef \unit$.
	Since there are no arguments for this path constructor, we take $\homotpointarg[\torus] \eqdef \homotpathtarg[\torus] \eqdef \constantP{\unit}$ and $\pathargleft[\torus] \eqdef \pathargright[\torus] \eqdef \Ce(\unitt)$.
	Now for the left-hand side and right-hand side of this equation, we take $\hconcat{\hconstr{\booltrue}{\idENA}}{\hconstr{\boolfalse}{\idENA}}$ and $\hconcat{\hconstr{\boolfalse}{\idENA}}{\hconstr{\booltrue}{\idENA}}$ respectively. 
\end{itemize}
Notice that the usual presentation of the torus does not include the
explicit 1-truncation constructor $t$, since $\torus$ is already provably
1-truncated without its presence. A similar consideration applies to
the circle $\circleS$.
\end{exa}

\begin{exa}
\label{ex:mod}
We represent the integers modulo 2 as the following HIT:
\begin{lstlisting}[mathescape=true]
Inductive $\ZT$ :=
| $\ZZ$ : $\ZT$
| $\ZS$ : $\ZT \rightarrow \ZT$
| $\ZM$ : $\depprod{(x : \ZT)}{\ZS(\ZS(x)) = x}$
| $\ZC$ : $\depprod{(x : \ZT)}{\ZM(\ZS(x)) = \ap{\ZS}{(\ZM(x))}}$
\end{lstlisting}
Note that all constructors except $\ZZ$ are recursive.
We define a signature $\ZT$.
\begin{itemize}
	\item Take $\pointconstr[\ZT] \eqdef \sumP{\constantP{\unit}}{\idP}$;
	\item Take $\pathlabel[\ZT] \eqdef \unit$ and for its unique inhabitant we take $\patharg[\ZT] \eqdef \idP$ and
	\[
	\pathleft[\ZT] \eqdef \comp{(\comp{\inre}{\constr})}{(\comp{\inre}{\constr})}, \quad
	\pathright[\ZT] \eqdef \idENA;
	\]
	\item Take $\homotlabel[\ZT] \eqdef \unit$.
	Furthermore, we take $\homotpointarg[\ZT] \eqdef \idP$ and $\pathargleft[\ZT] \eqdef \pathargright[\ZT] \eqdef \Ce(\unitt)$.
	The endpoints $\homotpathleft$ and $\homotpathright$ encode $\ZS(\ZS(\ZS(x)))$ and $\ZS(x)$ respectively,
	and for the left-hand side and right-hand side of this equation, we take
	\[
	\hap{\constr}{(\hconcat{\hinv{\hlunitN}}{\hconcat{\hassocN}{\hconcat{\hap{\inre}{(\hconstr{\unit}{\idENA})}}{\hconcat{\hinv{\hassocN}}{\hconcat{\hlunitN}{\hlunitN}}}}})}
	\]
	\[
	\hconcat{\hinv{\hassocN}}{\hconcat{\hinv{\hassocN}}{\hconcat{\hconstr{\unitt}{\comp{\inre}{\constr}}}{\hrunitN}}}.
	\]
	respectively.
	Note that we use $\hassocN$, $\hlunitN$, and $\hrunitN$ to make the equations type check.
	If we would interpret the left-hand side and right-hand side of  the homotopy constructor in 1-types,
	then all occurrences of $\hassocN$, $\hlunitN$, and $\hrunitN$ become the identity path.
	We thus get the right homotopy constructor.
\end{itemize}
\end{exa}

\begin{exa}
\label{ex:settrunc}
Given a 1-type $A$,  the set truncation of $A$ is defined by the following HIT:
\begin{lstlisting}[mathescape=true]
Inductive $\ST{A}$ :=
| $\SC$ : $A \rightarrow \ST{A}$
| $\Strunc$ : $\depprod{(x, y : \ST{A}) (p, q : x = y)}{p = q}$
\end{lstlisting}
Note that this higher inductive type has a parameter $A$,
so the signature we define depends on a 1-type $A$ as well.
To encode the path arguments of $\Strunc$, we use the fact that giving two paths $p, q : x = y$ is the same as giving a path $r : (x, x) = (y, y)$.
Define a signature $\ST{A}$ such that
\begin{itemize}
	\item $\pointconstr[\ST{A}] \eqdef \constantP{A}$;
	\item $\pathlabel[\ST{A}]$ is the empty type;
	\item $\homotlabel[\ST{A}] \eqdef \unit$.
	In addition, there are two point arguments $\homotpointarg[\ST{A}] \eqdef \prodP{\idP}{\idP}$
	and a path argument with left-hand side $\pair{\prle}{\prle}$ and right-hand side $\pair{\prre}{\prre}$.
	For the left-hand side and right-hand side of the homotopy, we take $\hprl{\harg}$ and $\hprr{\harg}$ respectively.
\end{itemize}
\end{exa}

More examples of HIT signatures are discussed in Section \ref{sec:examples}.

\subsection{Algebras in 1-types and groupoids}
With the signatures in place, our next goal is to study the introduction rules of HITs and for that, we define bicategories of algebras for a signature.
Since we ultimately want to construct HITs via the groupoid quotient, we look at both algebras in 1-types and groupoids.

In both cases, we use a stratified approach with displayed bicategories.
Let us illustrate this by briefly describing the construction for 1-types.
On $\onetypes$, we define a displayed bicategory and we denote its total bicategory by $\prealg{\sign}$.
The objects of $\prealg{\sign}$ consist of a 1-type $X$ together with an operation $\polyAct{\pointconstr[\sign]}{X} \rightarrow X$.
Concretely, the objects satisfy the point introduction rules specified by $\sign$.
On top of $\prealg{\sign}$, we define another displayed bicategory whose total bicategory is denoted by $\pathalg{\sign}$.
Objects of $\pathalg{\sign}$ satisfy the introduction rules for both the points and the paths.
Lastly, we take a full subbicategory of $\pathalg{\sign}$ obtaining another bicategory $\alg(\sign)$ whose objects satisfy the introduction rules for the points, paths and homotopies.

To define $\prealg{\sign}$, we use Example \ref{ex:DFAlg}.

\begin{prob}
\label{prob:sem_poly}
Given $P : \poly$, to construct pseudofunctors
\[
\semP{P} : \pseudo(\onetypes, \onetypes), 
\semPG{P} : \pseudo(\grpd, \grpd).
\]
\end{prob}

\begin{construction}{prob:sem_poly}
We only discuss the case for 1-types since the case for groupoids is similar.
Given a polynomial $P$ and a type $X$, we get a type $\polyAct{P}{X}$ by induction.
The verification that this gives rise to a pseudofunctor can be found in the formalization.
\end{construction}

\begin{defi}
\label{def:prealg}
Let $\sign$ be a signature.
Then we define the bicategories $\prealg{\sign}$ and $\prealgG{\sign}$
to be the total bicategories of $\DFAlg(\semP{\pointconstr[\sign]})$
and $\DFAlg(\semPG{\pointconstr[\sign]})$ respectively.
Objects of these bicategories are called \fat{prealgebras} for $\sign$.
\end{defi}

Note that prealgebras only have structure witnessing the introduction rule for the points.
Next we look at the introduction rule for the paths.
In this case, the desired structure is added via Example \ref{ex:DCell} and to apply this construction,
we interpret path endpoints as pseudotransformations.

\begin{prob}
\label{prob:sem_endpoint}
Given $e : \pathendpoint{A}{P}{Q}$, to construct pseudotransformations
\[
\semE{e} : \pstrans{\dproj{\DFAlg(\semP{A})} \cdot \semP{P}}{\dproj{\DFAlg(\semP{A})} \cdot \semP{Q}},
\]
\[
\semEG{e} : \pstrans{\dproj{\DFAlg(\semPG{A})} \cdot \semPG{P}}{\dproj{\DFAlg(\semPG{A})} \cdot \semPG{Q}}.
\]
\end{prob}

\begin{construction}{prob:sem_endpoint}
We only discuss $\semE{e}$ since $\semEG{e}$ is defined similarly.
Given a 1-type $X$ and $c : \polyAct{A}{X} \rightarrow X$, we define the function
$\semE{e} : \polyAct{P}{X} \rightarrow \polyAct{Q}{X}$ by induction.
The verification that this gives rise to a pseudotransformation can be found in the formalization.
\end{construction}

\begin{defi}
\label{def:pathalg}
Let $\sign$ be a signature.
We use Examples \ref{ex:disp_depprod} and \ref{ex:DCell} to define displayed bicategories over $\prealg{\sign}$ and $\prealgG{\sign}$.
\[
\pathalgMD(\sign) \eqdef \depprod{(i : \pathlabel[\sign])}{\DCell(\semE{\pathleft[\sign](i)},\semE{\pathright[\sign](i)})}
\]
\[
\pathalgGD{\sign} \eqdef \depprod{(i : \pathlabel[\sign])}{\DCell(\semEG{\pathleft[\sign](i)},\semEG{\pathright[\sign](i)})}
\]
We define $\pathalgM(\sign)$ and $\pathalgG{\sign}$ to be the total bicategories of $\pathalgMD(\sign)$ and $\pathalgGD{\sign}$ respectively.
Objects of $\pathalgM(\sign)$ and $\pathalgG{\sign}$ are called \fat{path algebras} for $\sign$.
\end{defi}

\begin{prob}
\label{prob:sem_homendpoint}
Suppose that we have a homotopy endpoint $h : \homotendpoint{l}{r}{a}{b}{s}{t}$.
Given a 1-type $X$ with $c : \polyAct{A}{X} \rightarrow X$ and $p : \depprod{(j : J) (x : \polyAct{Q_j}{X})}{l_j(x) = r_j(x)}$,
to construct for each $x : \polyAct{Q}{X}$ and $w : \semE{a}(x) = \semE{b}(x)$
an equality $\semH{h}(x, w) : \semE{s}(x) = \semE{t}(x)$.

In addition, given a groupoid $G$ together with a functor $c : \semPG{A}{(G)} \rightarrow G$ and for each $j : J$ a natural transformation $\semEG{l_j}(G) \twocell \semEG{r_j}(G)$,
to construct for each object $x : \semPG{Q}{(G)}$ and morphism $w : \semEG{a}(G)(x) \rightarrow \semEG{b}(G)(x)$
a morphism $\semHG{h}(x, w) : \semEG{s}(G)(x) \rightarrow \semEG{t}(G)(x)$.
\end{prob}

\begin{construction}{prob:sem_homendpoint}
By induction.
\end{construction}

\begin{defi}
\label{def:bicat_grpd}
Let $\sign$ be a HIT signature.
We define $\algM(\sign)$ to be the full subbicategory of $\pathalgM(\sign)$
in which every object $X$ satisfies
\begin{equation*}
\begin{split}
\prod
(j : \homotlabel[\sign]) (x : \polyAct{\homotpointarg[\sign]_j}{X}) (w : \semE{\pathargleft[\sign]_j}(x) = \semE{\pathargright[\sign]_j}(x)),
\ \semH{\homotleft[\sign]_j}(x, w) = \semH{\homotright[\sign]_j}(x, w)
\end{split}
\end{equation*}
In addition, we define $\algG{\sign}$ to be the full subbicategory of $\pathalgG{\sign} $
in which every object $X$ satisfies
\begin{equation*}
\begin{split}
\prod
(j : \homotlabel[\sign]) (x : \semPG{\homotpointarg[\sign]_j}{X}) (w : \semEG{\pathargleft[\sign]_j}(x) \rightarrow \semEG{\pathargright[\sign]_j}(x)),\
\semHG{\homotleft[\sign]_j}(x, w) = \semHG{\homotright[\sign]_j}(x, w)
\end{split}
\end{equation*}
Objects of $\algM(\sign)$ and $\algG{\sign}$ are called \fat{algebras} for $\sign$.
\end{defi}

The bicategory $\algM(\sign)$ is constructed by repeatedly using Definition \ref{def:totalbicat}.
By unpacking the definition, we see that an algebra $X : \algM(\sign)$ consists of
\begin{itemize}
	\item A 1-type $X$;
	\item A function $\AlgPoint{X} : \polyAct{\pointconstr}{X} \rightarrow X$;
	\item For each $j : \pathlabel$ and point $x : \polyAct{\patharg_j}{X}$ a path
	$
	\AlgPath{X}{j}(x) : \semE{\pathleft_j}(x) = \semE{\pathright_j}(x);
	$
	\item For each $j : \homotlabel$, $x : \polyAct{\homotpointarg}{X}$ and $w : \semE{a_1}(x) = \semE{a_2}(x)$, a homotopy
	$
	\AlgHomot{X}{j} : \semH{\homotleft}(x, w) = \semH{\homotright}(x, w)
	$
\end{itemize}
Furthermore, given two algebras  $X, Y : \algM(\sign)$, an algebra morphism $f : X \onecell Y$ consists of
\begin{itemize}
	\item a map $f : X \rightarrow Y$;
	\item for each $x : \polyAct{\pointconstr[\sign]}{X}$ a path
	\[
	\AlgMapPoint{f}(x) : f(\AlgPoint{X}(x)) = \AlgPoint{Y}(\semP{\pointconstr[\sign]}(f)(x));
	\]
	\item for each $j : \pathlabel[\sign]$ and $x : \polyAct{\patharg[\sign]}{X}$ a 2-path
	\[
	\AlgMapPath{f}{j}(x) :
	\ap{f}{(\AlgPath{X}{j}(x))} \vcomp \semE{\pathright[\sign]_j}(\AlgMapPoint{f})(x)
	=
	\semE{\pathleft[\sign]_j}(\AlgMapPoint{f})(x) \vcomp \AlgPath{Y}{j}(\semP{\patharg[\sign]}(f)(x)).
	\]
\end{itemize}
Lastly, given two algebras $X, Y : \algM(\sign)$ and an algebra morphisms $f, g : X \onecell Y$, a 2-cell $\theta : f \twocell g$ in $\algM(\sign)$ consists of
\begin{itemize}
	\item for each $x : X$ a path $\theta(x) : f(x) = g(x)$;
	\item for each $x : \polyAct{\pointconstr[\sign]}{X}$ a path
	\[
	\AlgCellPoint{\theta}(x) : 
	\theta(x) \vcomp \AlgMapPoint{g}(x)
	=
	\AlgMapPoint{f}(x)
	\vcomp
	\ap{\AlgPoint{Y}}{(\semP{\pointconstr[\sign]}(\theta)(x))}
	\]
\end{itemize}

\section{Induction and Biinitiality}
\label{sec:induction}
The algebra structure only represents the introduction rule
and the next step is to define the elimination and computation rules for higher inductive types.
Furthermore, we show that biinitial algebras satisfy the induction principle.

Before we can formulate these principles, we need to define dependent actions of polynomials, path endpoints, and homotopy endpoints.
All of these constructions are done by induction and details can be found in the literature \cite{DBLP:journals/entcs/DybjerM18,hermida1998structural,van2019construction}.

\begin{prob}
\label{prob:poly_dact}
Given a type $X$, a type family $Y$ on $X$, and a polynomial $P$,
to construct a type family $\polyDact{P}{Y}$ on $\polyAct{P}{X}$.
\end{prob}

\begin{prob}
\label{prob:poly_dmap}
Given a type $X$, a type family $Y$ on $X$, a polynomial $P$, and a map $f : \depprod{(x : X)}{Y(x)}$,
to construct a map
$
\polyDmap{P}{f} : \depprod{(x : \polyAct{P}{X})}{\polyDact{P}{Y}(x)}.
$
\end{prob}

\begin{prob}
\label{prob:poly_pr}
Given a type $X$, a type family $Y$ on $X$, and a polynomial $P$, to construct a map\footnote{This operation is called $\constfont{oplax}$ since, when the type family $Y$ does not depend on $X$, it corresponds to oplax monoidality of the pseudofunctor associated to the polynomial $P$ wrt. the cartesian product $\times$.}
$
\polyoplax P : \polyAct{P}{\Sum (x : X). Y(x)} \rightarrow \Sum (x : \polyAct{P}{X}). \polyDact{P}{Y}(x)
$
such that for all $z :  \polyAct{P}{\Sum (x : X). Y(x)}$, we have $\projl(\polyoplax{P}(z)) = \polyAct{P}{\projl}(z)$.
\end{prob}
The map $\polyoplax P$ actually defines an equivalence, but we do not employ its inverse in our development.
\begin{prob}
\label{prob:endpoint_dact}
Given a type $X$, a type family $Y$ on $X$, an endpoint $e : \pathendpoint{A}{P}{Q}$, and a map $c : \polyAct{A}{X} \rightarrow X$,
to construct for each $x : \polyAct{P}{X}$ and $y : \polyDact{P}{Y}(x)$ an inhabitant $\pathendpointDact{e}{y} : \polyDact{Q}{Y}(\pathendpointAct{e}{x})$.
\end{prob}

\begin{prob}
\label{prob:endpoint_dact_natural}
Suppose, that we have polynomials $A, P, Q$, a type $X$ with a map $c_X : \polyAct{A}{X} \rightarrow X$,
and a type family $Y$ on $X$ with a map $c_Y : \depprod{(x : X)}{\polyDact{A}{Y}(x) \rightarrow Y(c_X(x))}$
and a map $f : \depprod{(x : X)}{Y(x)}$.
Given an endpoint $e : \pathendpoint{A}{P}{Q}$,
to construct an equality
\[
\pathendpointDnat{e}{f} : \polyDmap{Q}{f}(\semE{e}(x)) = \pathendpointDact{e}{\polyDmap{P}{f}(x)}.
\]
\end{prob}

\begin{prob}
\label{prob:homot_dact}
Let $\sign$ be a signature.
Let $X$ be a type with a function $c_X : \polyAct{\pointconstr[\sign]}{X} \rightarrow X$
and for each $j : \pathlabel[\sign]$ and $x : \polyAct{\patharg[\sign]_j}{X}$ a path $p_X(j, x) : \semE{\pathleft[\sign]_j}(x) = \semE{\pathright[\sign]_j}(x)$.
In addition, suppose that $Y$ is a type family on $X$,
that we have a function $c_Y : \depprod{(x : \polyAct{\pointconstr[\sign]}{X})}{\polyDact{\pointconstr[\sign]}{Y}(x) \rightarrow Y(c_X(x))}$,
and that for all $j  : \pathlabel[\sign]$ and points $x : \polyAct{\patharg[\sign]_j}{X}$ and $\xx : \polyDact{\patharg[\sign]_j}{Y}(x)$,
we have a path $p_Y : \pathover{p_X(j, x)}{\pathendpointDact{\pathleft[\sign]_j}{\xx}}{\pathendpointDact{\pathright[\sign]_j}{\xx}}$.
Furthermore, let $j : \homotlabel[\sign]$, let $x : \homotpointarg[\sign]_j(X)$ and $\xx : \polyDact{\homotpointarg[\sign]_j}{Y}(x)$ be points
and let $w : \semE{\pathargleft[\sign]_j}(x) = \semE{\pathargright[\sign]_j}(x)$
and $\disp{w} : \pathover{p}{\pathendpointDact{\pathargleft[\sign]_j}{\xx}}{\pathendpointDact{\pathargright[\sign]_j}{\xx}}$
be paths.
Then for each homotopy endpoint $h : \homotendpoint{\pathleft[\sign]}{\pathright[\sign]}{\pathargleft[\sign]_j}{\pathargright[\sign]_j}{\homotpathleft[\sign]_j}{\homotpathright[\sign]_j}$, 
to construct a path
\[
\homotendpointDact{h}{\xx, \disp{w}} : \pathover{\homotendpointAct{h}{x, w}}{\pathendpointDact{\homotpathleft[\sign]_j}{\xx}}{\pathendpointDact{\homotpathright[\sign]_j}{\xx}}.
\]
\end{prob}

With these notions in place, we define \emph{displayed algebras}.
A displayed algebra represents the input of the elimination rule.
Furthermore, we show that each displayed algebra gives rise to a total algebra and a projection.

\begin{defi}\label{def:disp_alg}
Given a signature $\sign$ and an algebra $X$ for $\sign$,
a \fat{displayed algebra} $Y$ over $X$ consists of
\begin{itemize}
	\item A family $Y$ of 1-types over $X$;
	\item For each $x : \polyAct{\pointconstr}{X}$ a map $\DispAlgPoint{Y} : \polyDact{\pointconstr}{Y}(x) \rightarrow Y(\AlgPoint{X}(x))$;
	\item For each $j : \pathlabel$, $x : \polyAct{\patharg_j}{X}$, and $\pover{x} : \polyDact{\patharg_j}{Y}(x)$, a path
	$
	\DispAlgPath{Y}{j} : \pathover{\AlgPath{X}{j}{x}}{\pathendpointDact{\pathleft_j}{\pover{x}}}{\pathendpointDact{\pathright_j}{\pover{x}}}
	$;
	\item For each $j : \homotlabel$, points $x : \polyAct{\homotpointarg_j}{X}$ and $\pover{x} : \polyDact{\homotpointarg_j}{Y}(x)$,
	and paths $w : \pathargleft_j(x) = \pathargright_j(x)$ and $\pover{w} : \pathover{w}{\pathendpointDact{\pathargleft_j}{\pover{x}}}{\pathendpointDact{\pathargright_j}{\pover{x}}}$,
	a globe
	$
	\DispAlgHomot{Y}{j} : \globeover{\AlgHomot{X}{j}{(x, w)}}{\homotendpointDact{\homotleft_j}{\xx}}{\homotendpointDact{\homotright_j}{\xx}}
	$
	over $\AlgHomot{X}{j}{(x, w)}$.
\end{itemize}
\end{defi}

\begin{rem}
The type family of a displayed algebra is required to be 1-truncated.
This means that the HITs we construct, can only be eliminated into
1-types, and as a consequence, these HITs only have the right
elimination principle with respect to 1-types.
\end{rem}

\begin{prob}
\label{prob:total_alg}
Given an algebra $X$ for a signature $\sign$ and a displayed algebra $Y$ over $X$,
to construct an algebra $\TotalAlg{Y}$ for $\sign$ and a morphism of algebras $\pi_1 : \TotalAlg{Y} \onecell X$.
\end{prob}

\begin{construction}{prob:total_alg}
\label{constr:total_alg}
We only discuss the carrier and the point constructor of $\TotalAlg{Y}$.
The carrier of $\TotalAlg{Y}$ is defined to be $\sum (x : X), Y(x)$
and the function $\AlgPoint{\TotalAlg Y}$ acts on elements $z : \polyAct{\pointconstr}{\sum (x : X), Y(x)}$ as follows:
\[
\AlgPoint{\TotalAlg Y}(z) \eqdef
(\AlgPoint{X}(\projl(\polyoplax{\pointconstr}(z)))
,
\DispAlgPoint{Y}(\projr(\polyoplax{\pointconstr}(z)))).
\]
The underlying map of the morphism $\pi_1 : \TotalAlg{Y} \onecell X$ takes the first projection of a pair.
\end{construction}

We call $\TotalAlg{Y}$ the \emph{total algebra} of $Y$ and the morphism $\pi_1$ is called the \emph{first projection}.
The output of the elimination rule and the computation rules are given by a \emph{section} to be defined in Definition \ref{def:section} below.
One might expect that, just like for the groupoid quotient, the computation rules for the paths
are given as globes over some 2-path in the base (Definition~\ref{def:globe_over}).
However, this is not the case.

This is because there is a slight discrepancy between the rules for the groupoid quotient and the HITs we discuss,
namely for the former the computation rules for the points are definitional equalities
while for the latter, these rules only hold propositionally.
This affects how we need to formulate the computation rules for the paths.

Let us illustrate this via the torus (Example \ref{ex:torus}).
The input for the elimination rule consists, among others, of a type family $Y$, a point $b : Y(\base)$,
and a path $p_l : \pathover{\leftLoop}{b}{b}$.
The elimination rule gives a map $f : \depprod{(x : \torus)}{Y(x)}$.
By the point computation rule, we have a propositional equality between $f(\base)$ and $b$.
Now the computation rule for $\leftLoop$ ought to equate $\apd{f}{\leftLoop}$ and $p_l$.
However, such an equation does not type check since $\apd{f}{\leftLoop}$ has type $\pathover{\leftLoop}{f(b)}{f(b)}$ while $p_l$ has type $\pathover{\leftLoop}{b}{b}$.
In conclusion, we cannot formulate the computation rules the same way as we did for the groupoid quotient.

Our solution to this problem is to define a type of \emph{squares} over a given 2-path similarly to Definition \ref{def:globe_over} \cite{licata2015cubical}.

\begin{defi}
Let $X$ be a type and let $Y$ be a type family on $X$.
Suppose that we are given points $x_1, x_2 : X$ and $\overline{x_1}, \overline{x_1}' : Y(x_1)$ and $\overline{x_2}, \overline{x_2}' : Y(x_2)$, paths $p, q : x_1 = x_2$ together with paths $\pover{p} : \pathover{p}{\overline{x_1}}{\overline{x_2}}$ and $\pover{q} : \pathover{q}{\overline{x_1}'}{\overline{x_2}'}$ over $p$ and $q$ respectively.
If we also have two paths $h_1 : \overline{x_1} = \overline{x_1}'$ and $h_2 : \overline{x_2} = \overline{x_2}'$ and a 2-path $g : p = q$,
then we define the type of \fat{squares} over $g$ from $\pover{p}$ to $\pover{q}$ with sides $h_1$ and $h_2$ by path induction on $g$.
\end{defi}

\begin{defi}
\label{def:section}
Let $X$ be an algebra for a given signature $\sign$ and let $Y$ be a displayed algebra over $X$.
Then a \fat{section} of $Y$ consists of
\begin{itemize}
	\item A map $f : \depprod{(x : X)}{Y(x)}$;
	\item For all $x : \polyAct{\pointconstr}{X}$, an equality $f(\AlgPoint{X}(x)) = \DispAlgPoint{Y}(\polyDmap{\pointconstr}{f}(x))$;
	\item For all $j : \pathlabel$ and $x : \polyAct{\patharg}{X}$, a square from
	$\apd{f}{(\AlgPath{X}{j}(x))}$
	to
	$\DispAlgPath{Y}{j}(\polyDmap{\patharg}{f}(x))$
	with sides
	$\pathendpointDnat{\pathleft_j}{f}(x)$
	and
	$\pathendpointDnat{\pathright_j}{f}(x)$.
\end{itemize}
\end{defi}

\begin{defi}
Let $\sign$ be a signature.
Then a \fat{1-truncated higher inductive type} for $\sign$ consists of an algebra $X$ be an algebra for $\sign$ and a proof that each displayed algebra $Y$ over $X$ has a section.
\end{defi}

Often we just say that $X$ is a HIT for $\sign$ instead of saying that $X$ is 1-truncated HIT.
With this in place, we can check whether our rules for higher inductive types
agree for the usual examples with the rules given in the literature \cite{hottbook}.
We illustrate this with the torus (Example \ref{ex:torus}) and the set truncation (Example \ref{ex:settrunc}).
In the next example, we write $p \vcomp q$ for the concatenation of dependent paths.

\begin{exa}[Example \ref{ex:torus} cont'd]
Recall the signature $\torus$ for the torus.
Let $X$ be a HIT for $\torus$.
Since $X$ is an algebra, we have a point $\base : X$, two paths $\leftLoop, \rightLoop : \base = \base$,
and a 2-path $\surface : \concat{\leftLoop}{\rightLoop} = \concat{\rightLoop}{\leftLoop}$.
This corresponds precisely to the usual introduction rules of the torus.

A family $Y$ of 1-types on $X$ together with a point $b : Y(\base)$,
two paths $l : \pathover{\leftLoop}{b}{b}$ and $r : \pathover{\rightLoop}{b}{b}$
and a globe $h : \globeover{\surface}{l \vcomp r}{r \vcomp l}$
over $\surface$ gives rise to a displayed algebra over $X$.
This corresponds to the usual input of the elimination rule of the torus.
If we have a section $s$ of $Y$, then in particular, we get a map $f_s : \depprod{(x : X)}{Y(x)}$.
We also get a path $p_s : f(\base) = b$, a square from $\apd{f}{\leftLoop}$ to $l$ and one from $\apd{f}{\rightLoop}$ to $r$.
Both squares have sides $p_s$ and $p_s$.
These are the computation rules for the points and paths of the torus.
Note that since we are looking at $1$-truncated HITs, this only gives the 1-truncation of the torus.
\end{exa}

\begin{exa}[Example \ref{ex:settrunc} cont'd]
Let $A$ be a 1-type and recall the signature $\ST{A}$.
Now let $X$ be a HIT on $\ST{A}$.
Note that an algebra for $\ST{A}$ consists of a type $Z$ together with a map $A \rightarrow Z$
and a proof that $Z$ is a set.
This means in particular, that we have a map $\SC : A \rightarrow X$ and a proof $\Strunc$
that $X$ is a set.

A family $Y$ of sets on $X$ together with a map $i : \depprod{(a  : A)}{Y(\SC(A))}$
give rise to a displayed algebra over $X$.
A section $s$ of that displayed algebra consists of a map $f_s : \depprod{(x : X)}{Y(x)}$
such that $f_s(\SC(a)) = i(a)$ for all $a : A$.
This corresponds to the usual elimination and computation rules for the
set truncation.
\end{exa}

To verify that an algebra satisfies the elimination rule, we use \emph{initial algebra semantics} \cite{hermida1998structural}.
However, this technique is usually applied in a categorical setting and it uses initial objects in categories.
Since we work in a bicategorical setting, we need to use the corresponding notion in bicategory theory: \emph{biinitiality}.

\begin{defi}
Let $\B$ be a bicategory and let $x$ be an object in $\B$.
Then we say $x$ is \fat{biinitial} if
\begin{itemize}
	\item For each object $y$ there is a 1-cell $x \onecell y$;
	\item Given 1-cells $f, g : x \onecell y$, there is a 2-cell $f \twocell g$;
	\item Given 2-cells $\tc, \tc' : f \twocell g$, there is an equality $\tc = \tc'$.
\end{itemize}
\end{defi}

Briefly, an object $x$ is biinitial if for each $y$ there is a 1-cell from $x$ to $y$, which is unique up to a unique 2-cell.
One can show that this definition is equivalent to the category $x \onecell y$ being contractible for all $y$.
Now we can formulate initial algebra semantics for our signatures.

\begin{prop}
\label{thm:initial_alg_sem}
Let $\sign$ be a signature and let $X$ be an algebra for $\sign$.
Then $X$ is a 1-truncated HIT for $\sign$ if and only if $X$ is biinitial in $\algM(\sign)$.
\end{prop}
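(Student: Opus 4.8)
The plan is to prove both implications by initial algebra semantics, translating each of the three defining clauses of biinitiality into the existence of a section of a suitably chosen displayed algebra, and conversely extracting a section from the universal property of the total algebra of Construction~\ref{constr:total_alg}.

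For the direction from biinitiality to the induction principle, suppose $X$ is biinitial and let $Y$ be a displayed algebra over $X$. I would first form the total algebra $\TotalAlg{Y}$ together with its first projection $\pi_1 : \TotalAlg{Y} \onecell X$. Biinitiality of $X$ supplies a 1-cell $s : X \onecell \TotalAlg{Y}$; moreover, since $s \cdot \pi_1$ and $\id_1(X)$ are parallel 1-cells $X \onecell X$, biinitiality supplies a 2-cell in each direction between them, and as any two parallel 2-cells are equal these must be mutually inverse. Write $\phi : s \cdot \pi_1 \iso \id_1(X)$ for the resulting invertible 2-cell. Its underlying homotopy gives for each $x : X$ a path $\pi_1(s(x)) = x$. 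Writing the underlying map of $s$ as $x \mapsto (s_0(x), s_1(x))$ with $s_1(x) : Y(s_0(x))$, and using that $\pi_1$ computes as the first projection, this path has type $s_0(x) = x$, so transporting $s_1(x)$ along it yields the section map $\prod_{x : X} Y(x)$. The point and path components of the section are then read off from the algebra-morphism data of $s$ together with the coherence carried by $\phi$.

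For the converse, I would verify the three clauses of biinitiality one at a time, each by producing a section. Given an algebra $Y$, the constant displayed algebra over $X$ whose fibre at every point is the carrier of $Y$, with displayed point, path, and homotopy data obtained from the algebra structure of $Y$, has sections that unfold precisely to 1-cells $X \onecell Y$. Given 1-cells $f, g : X \onecell Y$, the displayed algebra whose fibre at $x$ is the path type $f(x) = g(x)$ in $Y$, with displayed structure assembled from $\AlgMapPoint{f}$, $\AlgMapPoint{g}$ and the path components of $f$ and $g$, has sections that unfold to 2-cells $f \twocell g$. Finally, given 2-cells $\theta, \theta' : f \twocell g$, the displayed algebra whose fibre at $x$ is the 2-path type $\theta(x) = \theta'(x)$ has propositional fibres because $Y$ is a 1-type, so all displayed coherences hold automatically, and a section produces the required equality $\theta = \theta'$.

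The main obstacle in both directions is the bookkeeping at the path and homotopy constructors. In the forward direction one must check that the fibrewise data assembled in each construction genuinely satisfies the path-over and globe-over conditions of Definition~\ref{def:disp_alg}, and that the resulting section data of Definition~\ref{def:section} matches, component by component, the morphism and 2-cell data unpacked after Definition~\ref{def:bicat_grpd}. In the backward direction the delicate point is reconstructing the section's square data for the path constructors from the invertible 2-cell $\phi$, using the computation rule for $\pi_1$; this is where the bulk of the work lies, and it is carried out by induction on the polynomial and endpoint codes using the dependent actions of Problems~\ref{prob:poly_dact}--\ref{prob:homot_dact}.
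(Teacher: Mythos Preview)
Your approach is correct and is exactly the standard initial-algebra-semantics argument the paper has in mind; the paper states the proposition without an in-text proof, but Construction~\ref{constr:total_alg} (the total algebra and its projection) is set up precisely to run the argument you describe, and the three displayed algebras you build for the converse are the same ones that reappear later in the paper as the inserter and equifier constructions. One minor slip: in your final paragraph the labels ``forward'' and ``backward'' are swapped relative to your earlier exposition---the obstacle of checking that the constant, path-type, and 2-path-type families form displayed algebras belongs to the HIT-implies-biinitial direction, while extracting the section's square data from the invertible 2-cell $\phi$ is the biinitial-implies-HIT direction.
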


One consequence of initial algebra semantics, is that HITs are unique up to path equality if the univalence axiom holds.
This result is a consequence of the fact that the bicategory of algebras is \emph{univalent}.
Recall that a bicategory is univalent if
equality between objects $X$ and $Y$ is equivalent to adjoint equivalences between $X$ and $Y$
and equality of 1-cells $f$ and $g$ is equivalent to invertible 2-cells between $f$ and $g$
\cite{bicatjournal}.
Using the methods employed by Ahrens \etal \ \cite{bicatjournal} one can show that the bicategory of algebras
is univalent.
Since biinitial objects are unique up to adjoint equivalence, one can conclude that HITs are unique up
to path equality.

\begin{prop}
Let $\sign$ be a signature and let $H_1$ and $H_2$ be HITs for $\sign$.
Denote the underlying algebras of $H_1$ and $H_2$ by $X_1$ and $X_2$.
Then $X_1 = X_2$.
\end{prop}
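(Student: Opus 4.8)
The plan is to reduce the statement to the univalence of the bicategory of algebras together with the uniqueness of biinitial objects, both of which are available from the preceding discussion. First I would apply Proposition \ref{thm:initial_alg_sem} to each hypothesis: since $H_1$ and $H_2$ are HITs for $\sign$, their underlying algebras $X_1$ and $X_2$ are biinitial in $\algM(\sign)$.

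Next I would construct an internal equivalence between $X_1$ and $X_2$ directly from biinitiality. Biinitiality of $X_1$ supplies a $1$-cell $f : X_1 \onecell X_2$, and biinitiality of $X_2$ a $1$-cell $g : X_2 \onecell X_1$. Applying the second clause of biinitiality to the parallel pair $f \cdot g$ and $\id_1(X_1)$ yields a $2$-cell $\eta : \id_1(X_1) \twocell f \cdot g$, and likewise a $2$-cell in the reverse direction; the third clause, which forces any two parallel $2$-cells with biinitial source to be equal, makes both vertical composites equal to $\id_2$, so $\eta$ is invertible. The symmetric argument applied to $g \cdot f$ and $\id_1(X_2)$ produces an invertible $2$-cell $g \cdot f \twocell \id_1(X_2)$. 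Hence $f$ exhibits $X_1$ and $X_2$ as internally equivalent, $X_1 \adjequiv X_2$.

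Finally, every internal equivalence in a bicategory underlies an adjoint equivalence, so $X_1$ and $X_2$ are adjointly equivalent. Since $\algM(\sign)$ is univalent, the canonical map $\idtoiso$ from equalities $X_1 = X_2$ to adjoint equivalences between $X_1$ and $X_2$ is itself an equivalence of types; transporting our adjoint equivalence back along its inverse produces the desired equality $X_1 = X_2$. The substantive input is the univalence of $\algM(\sign)$, established by the methods of Ahrens \etal \ for displayed bicategories; granting this, every remaining step is a formal consequence of the definition of biinitiality, so I expect the univalence result to be the only real obstacle, with the passage from biinitiality to equality being purely formal.
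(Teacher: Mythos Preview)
Your proposal is correct and follows essentially the same approach as the paper: invoke Proposition~\ref{thm:initial_alg_sem} to obtain biinitiality of $X_1$ and $X_2$, use that biinitial objects are unique up to adjoint equivalence, and then appeal to the univalence of $\algM(\sign)$ to turn this into a path equality. The paper states the uniqueness of biinitial objects up to adjoint equivalence as a fact, whereas you unfold the construction of the equivalence explicitly from the three clauses of biinitiality; this extra detail is fine and does not deviate from the intended argument.
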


\section{Lifting the Groupoid Quotient}
\label{sec:biadj}
To construct higher inductive types, we use Proposition \ref{thm:initial_alg_sem}, which says that biinitial objects satisfy the induction principle.
We use the groupoid quotient to acquire the desired algebra.
More specifically, we construct a pseudofunctor from $\algG{\sign}$ to $\algM(\sign)$,
which is the groupoid quotient on the carrier.
We do that in such a way that the obtained pseudofunctor preserves biinitiality,
so that we obtain the HIT by constructing a biinitial object in $\algG{\sign}$.

One class of pseudofunctors which preserve biinitial objects is given by left biadjoints.
More precisely, suppose that we have bicategories $\B$ and $\C$, a left biadjoint pseudofunctor $L : \pseudo(\B, \C)$, and an object $x : \B$.
Then the object $L(x)$ is biinitial if $x$ is.

Instead of directly lifting the groupoid quotient to the level of algebras,
we first show that the groupoid quotient specifies a left biadjoint pseudofunctor and then we lift that biadjunction to the level of algebras.
To do so, we use the fact we defined the bicategory of algebras via displayed bicategories.
This way we can define the biadjunction on each part of the structure separately.

More specifically, we define the notion of \emph{displayed biadjunction} between two displayed bicategories over a biadjunction in the base,
and we show that each displayed biadjunction gives rise to a total biadjunction between the total bicategories.
Defining displayed biadjunctions requires defining displayed analogues of pseudofunctors, pseudotransformations, and invertible modification,
which were defined by Ahrens \etal \ \cite{bicatjournal}.
For this, we make use of \emph{displayed invertible 2-cells} \cite{bicatjournal}.

\begin{defi}
Let $\D_1$ and $\D_2$ be displayed bicategories over $\B_1$ and $\B_2$ respectively and let $F : \pseudo(\B_1, \B_2)$ be a pseudofunctor.
Then a \fat{displayed pseudofunctor} $\FF$ from $\D_1$ to $\D_2$ over $F$ consist of
\begin{itemize}
	\item For each $x : \B_1$ a map $\FF_0 : \D_1(x) \rightarrow \D_2(F(x))$;
	\item For all 1-cells $f : x \onecell y$, objects $\xx : \D_1(x)$ and $\yy : \D_1(y)$,
	and displayed 1-cells $\ff : \dmor{\xx}{\yy}{f}$, a displayed 1-cell $\FF_1(\ff) :  \dmor{\FF_0(\xx)}{\FF_0(\yy)}{F(f)}$;
	\item For all 2-cells $\tc : f \twocell g$, displayed 1-cells $\ff : \dmor{\xx}{\yy}{f}$ and $\gg : \dmor{\xx}{\yy}{g}$,
	and displayed 2-cells $\dtc : \dtwo{\ff}{\gg}{\tc}$, a displayed 2-cell $\FF_2(\dtc) : \dtwo{\FF_1(\ff)}{\FF_2(\gg)}{F(\tc)}$;
	\item For each $x : \B$ and $\xx : \D(x)$, a displayed invertible 2-cell
	$\identitor{\FF}(\xx) : \dmor{\id_1(\FF_0(\xx))}{\FF_1(\id_1(\xx))}{\identitor{F}(x)}$;
	\item For all $\ff : \dmor{\xx}{\yy}{f}$ and $\gg : \dmor{\yy}{\zz}{g}$. a displayed invertible 2-cell
	$\compositor{\FF}(\ff, \gg) : \dmor{\FF_1(\ff) \cdot \FF_1(\gg)}{\FF_1(\ff \cdot \gg)}{\compositor{F}(f, g)}$.
\end{itemize}
Here $\identitor{F}$ and $\compositor{F}$ denote the identitor and compositor of $F$.
In addition, several coherencies, which can be found in the formalization, need to be hold.
We denote the type of displayed pseudofunctors from $\D_1$ to $\D_2$ over $F$ by $\disppsfun{\D_1}{\D_2}{F}$.
\end{defi}

\begin{defi}
Let $\D_1$ and $\D_2$ be displayed bicategories over $\B_1$ and $\B_2$ respectively.
Suppose that we have displayed pseudofunctors $\FF : \disppsfun{\D_1}{\D_2}{F}$ and $\GG : \disppsfun{\D_1}{\D_2}{G}$
and a pseudotransformation $\theta : \pstrans{F}{G}$.
Then a \fat{displayed pseudotransformation} $\thetatheta$ from $\FF$ to $\GG$ over $\theta$ consists of
\begin{itemize}
	\item For all objects $x : \B$ and $\xx : \D_1(x)$ a displayed 1-cell $\thetatheta_0(x) : \dmor{\FF_0(\xx)}{\GG_0(\xx)}{\theta(x)}$;
	\item For all 1-cells $f : x \onecell y$ and $\ff : \dmor{\xx}{\yy}{f}$ a displayed invertible 2-cell
	$\thetatheta_1(f) : \dtwo{\thetatheta_0(\xx) \cdot \FF_1(\ff)}{\GG_1(\ff) \cdot \thetatheta_0(\yy)}{\theta_1(f)}$.
\end{itemize}
Here $\theta_1$ denotes the family of invertible 2-cells corresponding to the naturality squares of $\theta$.
Again several coherencies must be satisfied and the precise formulation can be found in the formalization.
The type of displayed pseudotransformatons from $\FF$ to $\GG$ over $\theta$ is denoted by $\disppstrans{\FF}{\GG}{\theta}$.
\end{defi}

\begin{defi}
Suppose that we have displayed bicategories $\D_1$ and  $\D_2$ over $\B_1$ and $\B_2$, displayed pseudofunctors $\FF$ and $\GG$ from $\D_1$ to $\D_2$ over $F$ and $G$ respectively, and displayed pseudotransformations $\thetatheta$ and $\thetatheta'$ from $\FF$ to $\GG$ over $\theta$ and $\theta'$ respectively.
In addition, let $m$ be an invertible modification from $\theta$ to $\theta'$.
Then a \fat{displayed invertible modification} $\mm$ from $\thetatheta$ to $\thetatheta'$ over $m$ consists of a displayed invertible 2-cell
$\mm_2(\xx) :  \dtwo{\thetatheta(\xx)}{\thetatheta'(\xx)}{m(x)}$ for each $x : \B_1$ and $\xx : \D_1(x)$,
In addition, a coherency must be satisfied, which can be found in the formalization.
The type of displayed invertible modifications from $\thetatheta$ to $\thetatheta'$ over $m$ is denoted by $\dispmodif{\thetatheta}{\thetatheta'}{m}$.
\end{defi}

Each of these gadgets has a total version.

\begin{prob}
\label{prob:total}
We have
\begin{enumerate}
	\item Given a displayed pseudofunctor $\FF : \disppsfun{\D_1}{\D_2}{F}$, to construct a pseudofunctor $\total{\FF} : \pseudo(\total{\D_1}, \total{\D_2})$;
	\item Given a displayed pseudotransformation $\thetatheta : \disppstrans{\FF}{\GG}{\theta}$, to construct a pseudotransformation $\total{\thetatheta} : \pstrans{\total{\FF}}{\total{\GG}}$;
	\item Given a displayed invertible modification $\mm : \xymatrix@C=1em{\thetatheta \ar@3[r]^-{m} & \thetatheta',}$ to construct an invertible modificaton $\total{\mm} : \modif{\total{\thetatheta}}{\total{\thetatheta'}}$.
\end{enumerate}
\end{prob}

\begin{construction}{prob:total}
\label{constr:biadj}
By pairing.
\end{construction}

Before we can define displayed biadjunctions, we need several operations on the displayed gadgets we introduced.

\begin{exa}
We have the following
\begin{itemize}
	\item We have $\id(\D) : \disppsfun{\D}{\D}{\id(\B)}$
	where $\id(\B)$ is the identity pseudofunctor;
	\item Given $\FF : \disppsfun{\D_1}{\D_2}{F}$ and $\GG : \disppsfun{\D_2}{\D_3}{G}$,
	we have $\FF \cdot \GG : \disppsfun{\D_1}{\D_3}{F \cdot G}$
	where $F \cdot G$ is the composition of pseudofunctors;
	\item Given $\FF : \disppsfun{\D_1}{\D_2}{F}$,
	we have $\id_1(\FF) : \disppstrans{\FF}{\FF}{\id_1(F)}$
	where $\id_1(F)$ is the identity pseudotransformation on $F$;
	\item Given $\thetatheta : \disppstrans{\FF}{\GG}{\theta}$ and $\thetatheta' : \disppstrans{\GG}{\HH}{\theta'}$,
	we have $\thetatheta \vcomp \thetatheta' : \disppstrans{\FF}{\HH}{\theta \vcomp \theta'}$
	where $\theta \vcomp \theta'$ is the composition of pseudotrasformations;
	\item Given $\FF : \disppsfun{\D_1}{\D_2}{F}$, $\GG : \disppsfun{\D_2}{\D_3}{G}$, $\HH : \disppsfun{\D_2}{\D_3}{H}$, and $\thetatheta : \disppstrans{\GG}{\HH}{\theta}$,
	we have
	\[
	\FF \whiskerl \thetatheta : \disppsfun{\FF \cdot \GG}{\FF \cdot \HH}{F \whiskerl \theta};
	\]
	\item Given $\FF : \disppsfun{\D_1}{\D_2}{F}$, $\GG : \disppsfun{\D_1}{\D_2}{G}$, $\HH : \disppsfun{\D_2}{\D_3}{H}$, and $\thetatheta : \disppstrans{\FF}{\GG}{\theta}$,
	we have
	\[
	\thetatheta \whiskerr \HH : \disppsfun{\FF \cdot \HH}{\GG \cdot \HH}{\theta \whiskerr H};
	\]
	\item Given $\FF : \disppsfun{\D_1}{\D_2}{F}$,
	we have
	\[
	\lunitor{\FF} : \disppstrans{\id \cdot \FF}{\FF}{\lunitor{F}}, \quad
	\runitor{\FF} : \disppstrans{\FF \cdot \id}{\FF}{\runitor{\FF}},
 	\]
%% 	\item Given $\FF : \disppsfun{\D_1}{\D_2}{F}$,
%% 	we have
	\[
	\linvunitor{\FF} : \disppstrans{\FF}{\id \cdot \FF}{\linvunitor{F}}, \quad
	\rinvunitor{\FF} : \disppstrans{\FF}{\FF \cdot \id}{\rinvunitor{\FF}};
	\]
	\item Given $\FF : \disppsfun{\D_1}{\D_2}{F}$, $\GG : \disppsfun{\D_2}{\D_3}{G}$, and $\HH : \disppsfun{\D_3}{\D_4}{H}$,
	we have
	\[
	\lassoc{\FF}{\GG}{\HH} : \disppstrans{(\FF \cdot \GG) \cdot \HH}{\FF \cdot (\GG \cdot \HH)}{\lassoc{F}{G}{H}},
	\]
	\[
	\rassoc{\FF}{\GG}{\HH} : \disppstrans{\FF \cdot (\GG \cdot \HH)}{(\FF \cdot \GG) \cdot \HH}{\rassoc{F}{G}{H}}.
	\]
\end{itemize}
\end{exa}

\begin{defi}
Suppose we have bicategories $\B_1$ and $\B_2$ and a biadjunction $L \dashv R$ from $\B_1$ to $\B_2$.
We write $\eta$ and $\epsilon$ for the unit and counit of $L \dashv R$ respectively,
and we write $\tau_1$ and $\tau_2$ for the left and right triangle respectively.
Suppose, that we also have displayed bicategories $\D_1$ and $\D_2$ over $\B_1$ and $\B_2$ respectively
and a displayed pseudofunctor $\LL : \disppsfun{\D_1}{\D_2}{L}$.
Then we say $\LL$ is a \fat{displayed left biadjoint pseudofunctor} if we have
\begin{itemize}
	\item A displayed pseudofunctor
	$
	\RR : \disppsfun{\D_2}{\D_1}{R}
	$;
	\item Displayed pseudotransformations
	\[
	\etaeta : \disppstrans{\id}{\LL \cdot \RR}{\eta}, \quad
	\epseps : \disppstrans{\RR \cdot \LL}{\id}{\epsilon};
	\]
	\item Displayed invertible modifications
	\[
	\tautaul : \dispmodif{\rho^{-1} \vcomp \RR \whiskerl \etaeta \vcomp \alpha \vcomp \epseps \whiskerr \RR \vcomp \lambda}{\id_1(\RR),}{\tau_1}
	\]
	\[
	\tautaur : \dispmodif{\lambda^{-1} \vcomp \etaeta \whiskerr \LL \vcomp \alpha^{-1} \vcomp \LL \whiskerl \epseps \vcomp \rho}{\id_1(\LL).}{\tau_2}
	\]
\end{itemize}
\end{defi}

From Construction \ref{constr:biadj}, we get

\begin{prop}
\label{prop:total_biadj}
Given a displayed left biadjoint pseudofunctor $\LL$,
then $\total{\LL}$ is a left biadjoint pseudofunctor.
\end{prop}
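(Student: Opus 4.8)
The plan is to build the total biadjunction componentwise by transporting the displayed data through Construction~\ref{constr:biadj}. Unfolding the hypothesis, a displayed left biadjoint pseudofunctor $\LL$ equips us with a displayed pseudofunctor $\RR : \disppsfun{\D_2}{\D_1}{R}$, a displayed unit $\etaeta$ and counit $\epseps$, and displayed invertible triangle modifications $\tautaul$ and $\tautaur$, all lying over the base biadjunction $L \dashv R$. Since $\total{\LL}$ is already a pseudofunctor by Problem~\ref{prob:total}(1), I would take its right biadjoint to be $\total{\RR}$, its unit and counit to be $\total{\etaeta}$ and $\total{\epseps}$, and its two triangles to be $\total{\tautaul}$ and $\total{\tautaur}$, each obtained from Problem~\ref{prob:total}(1)--(3). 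Because a biadjunction in the sense defined above is \emph{just} this data — there are no further equations to discharge, the coherence of the biadjunction being postponed to the separate notion of coherent biadjunction — it suffices to check that every one of these total gadgets carries the type demanded by the definition.

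This type-checking is where the work concentrates, and it rests on the fact that the total construction is strictly functorial with respect to the operations used in the definition of biadjunction. Concretely, I would verify the equations $\total{\id(\D)} = \id(\total{\D})$ and $\total{\FF \cdot \GG} = \total{\FF} \cdot \total{\GG}$ on pseudofunctors, together with the corresponding identities for the identity pseudotransformation, vertical composition, left and right whiskering, and the unitors and associators from the Example preceding this Proposition. Granting these, the source of $\total{\etaeta}$ is $\total{\id} = \id(\total{\D_1})$ and its target is $\total{\LL \cdot \RR} = \total{\LL} \cdot \total{\RR}$, so $\total{\etaeta}$ is a pseudotransformation $\pstrans{\id(\total{\D_1})}{\total{\LL} \cdot \total{\RR}}$ exactly as required, and symmetrically for $\total{\epseps}$. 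The same compatibilities rewrite the source of $\total{\tautaul}$ as $\rho^{-1} \vcomp \total{\RR} \whiskerl \total{\etaeta} \vcomp \alpha \vcomp \total{\epseps} \whiskerr \total{\RR} \vcomp \lambda$ and its target as $\id(\total{\RR})$, and dually for $\total{\tautaur}$, so these total modifications land in precisely the triangle types. Their invertibility is immediate, since Problem~\ref{prob:total}(3) already produces an invertible modification from an invertible displayed one.

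The main obstacle is thus bookkeeping rather than conceptual: one must supply all the functoriality equations for the total construction so that every type lines up on the nose. Informally these are trivial, since the total data is obtained purely by pairing the base and displayed components, but in the formalization each equality of pseudofunctors, pseudotransformations, unitors, and associators has to be produced explicitly — which is exactly why Construction~\ref{constr:biadj} is isolated and reused here. Once these identities are in hand, Proposition~\ref{prop:total_biadj} follows by assembling the total components into the biadjunction $\total{\LL} \dashv \total{\RR}$.
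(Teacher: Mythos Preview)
Your proposal is correct and matches the paper's approach: the paper derives the proposition directly from Construction~\ref{constr:biadj}, and what you have written is precisely an unfolding of that derivation, spelling out that the total biadjunction is assembled by applying the total construction componentwise and that the only work is checking the resulting types line up via the functoriality of $\total{-}$. Your observation that no coherence equations remain because the paper's notion of biadjunction is purely data is exactly right.
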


\begin{figure}
\[
\xymatrix
{
	\algebra{\sign} \ar@/^/[rr]_-{\top}^-{\algpgrpd} \ar[d] & & \algG{\sign} \ar@/^/[ll]^-{\alggquot} \ar[d] \\
	\pathalg{\sign} \ar@/^/[rr]_-{\top}^-{\pathpgrpd} \ar[d] & & \pathalgG{\sign} \ar@/^/[ll]^-{\pathgquot} \ar[d] \\
	\prealg{\sign} \ar@/^/[rr]_-{\top}^-{\prepgrpd} \ar[d] & & \prealgG{\sign} \ar@/^/[ll]^-{\pregquot} \ar[d]\\
	\onetypes \ar@/^/[rr]_-{\top}^-{\pgrpd} & & \grpd \ar@/^/[ll]^-{\gquot}
}
\]
\caption{The biadjunction}
\label{fig:biadj}
\end{figure}

Now let us use the introduced notions to construct the biadjunction on the level of algebras.
Our approach is summarized in Figure \ref{fig:biadj}.
We start by showing that the groupoid quotient gives rise to a biadjunction.

\begin{prob}
\label{prob:gquit_biadj}
To construct $\gquot \dashv \pgrpd$ where $\gquot : \pseudo(\grpd, \onetypes)$.
\end{prob}

\begin{construction}{prob:gquit_biadj}
\label{constr:gquit_biadj}
We only show how the involved pseudofunctors are defined.
The pseudofunctor $\gquot$ is the groupoid quotient
while $\pgrpd$ sends a 1-type $X$ to the groupoid whose objects are points of $X$ and morphisms from $x$ to $y$ are paths $x = y$.
\end{construction}
Notice that the above biadjunction turns into a biequivalence if we consider the bicategory of univalent groupoids in place of $\grpd$. This implies that the biadjunction of Problem \ref{prob:gquit_biadj} cannot be a biequivalence, since not every groupoid is equivalent to a univalent groupoid.

Next we lift this biadjunction to the level of algebras using the displayed machinery introduced in this section.

\begin{prob}
\label{prob:alg_biadj}
Given a signature $\sign$, to construct a biadjunction $\alggquot \dashv \algpgrpd$ where $\alggquot : \pseudo(\algG{\sign}, \algebra{\sign})$.
\end{prob}

\begin{construction}{prob:alg_biadj}
\label{constr:alg_biadj}
We only give a very brief outline of the construction.

We start by constructing a displayed biadjunction
from $\DFAlg(\semPG{\pointconstr[\sign]})$
to $\DFAlg(\semP{\pointconstr[\sign]})$
over the biadjunction from
Construction \ref{constr:gquit_biadj}.
To do so, we first need to lift the pseudofunctors, and for that, we generalize the approach of Hermida and Jacobs
to the bicategorical setting \cite[Theorem 2.14]{hermida1998structural}.
This requires us to construct two pseudotransformations.
\[
p_1 : \pstrans{\semP{P} \cdot \gquot}{\gquot \cdot \semPG{P}},
\]
\[
p_2 : \pstrans{\semPG{P} \cdot \pgrpd}{\pgrpd \cdot \semP{P}}.
\]
We denote the total biadjunction of the resulting displayed biadjunction by $\pregquot \dashv \prepgrpd$.

Next we lift the biadjunction to the level of path algebras
and for that, we construct a displayed biadjunction between 
$\DCell(\semEG{\pathleft[\sign](i)},\semEG{\pathright[\sign](i)})$
and $\DCell(\semE{\pathleft[\sign](i)},\semE{\pathright[\sign](i)}$
for all $j : \pathlabel$.
Denote the resulting total biadjunction by $\pathgquot \dashv \pathpgrpd$.

To finish the proof, we need to construct one more displayed biadjunction.
For that, we only need to show that if $G : \pathalgG{\sign}$ is an algebra, then $\pathgquot(G)$ also is an algebra,
and if $X : \pathalg{\sign}$ is an algebra, then so is $\pathpgrpd(X)$.
\end{construction}

The next proposition concludes this section.

\begin{prop}
\label{prop:biinitial_in_grpd}
If $G$ is an biinitial object in $\algG{\sign}$,
then $\alggquot(G)$ is a biinitial object in $\algebra{\sign}$.
\end{prop}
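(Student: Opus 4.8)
The plan is to derive this as an immediate instance of the general fact, recorded at the beginning of this section, that left biadjoint pseudofunctors preserve biinitial objects. Construction \ref{constr:alg_biadj} supplies a biadjunction $\alggquot \dashv \algpgrpd$ with $\alggquot : \pseudo(\algG{\sign}, \algebra{\sign})$, so $\alggquot$ is in particular a left biadjoint pseudofunctor. Hence it suffices to establish the abstract statement: if $L \dashv R$ is a biadjunction from $\B_1$ to $\B_2$ with unit $\eta$ and counit $\varepsilon$, and $x : \B_1$ is biinitial, then $L(x)$ is biinitial in $\B_2$.

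To prove the abstract statement I would use the characterization recorded just after the definition of biinitiality: an object $z$ is biinitial exactly when the hom-category $z \onecell w$ is contractible for every $w$. The biadjunction induces, for every $y : \B_2$, an adjoint equivalence of hom-categories between $L(x) \onecell y$ and $x \onecell R(y)$. Concretely, the comparison functor sends $f : L(x) \onecell y$ to $\eta(x) \cdot R(f)$, its pseudo-inverse sends $g : x \onecell R(y)$ to $L(g) \cdot \varepsilon(y)$, and the triangle modifications $\tau_1$ and $\tau_2$ provide the invertible 2-cells witnessing that the two composites are naturally isomorphic to the identities. Since $x$ is biinitial, $x \onecell R(y)$ is contractible; as contractibility is preserved by equivalences, $L(x) \onecell y$ is contractible as well. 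As $y$ was arbitrary, $L(x)$ is biinitial.

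Specializing to $L = \alggquot$, $R = \algpgrpd$, $\B_1 = \algG{\sign}$, $\B_2 = \algebra{\sign}$, and $x = G$ then gives that $\alggquot(G)$ is biinitial in $\algebra{\sign}$, which is the claim.

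The bulk of the work has already been discharged in Construction \ref{constr:alg_biadj}, which we may assume; given the biadjunction, the argument above is essentially formal. The only point that needs a little care is assembling the hom-category equivalence from the biadjunction data and checking that contractibility transports along it. Since contractibility is a proposition and is stable under equivalence, this last step is routine once the equivalence is in place, so I do not expect a genuine obstacle here.
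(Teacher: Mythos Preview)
Your proposal is correct and matches the paper's approach: the paper states this proposition without an explicit proof, relying on the remark earlier in the section that left biadjoint pseudofunctors preserve biinitial objects together with Construction~\ref{constr:alg_biadj}. Your argument even fills in the details of why left biadjoints preserve biinitiality via the hom-category equivalence, which the paper leaves implicit.
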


\section{HIT Existence}
\label{sec:existence}
From Theorem \ref{thm:initial_alg_sem} we know that initiality implies the induction principle.
Hence, it suffices to construct a biinitial object in the bicategory of algebras in 1-types.
By Proposition \ref{prop:biinitial_in_grpd}, it suffices to construct a biinitial object in $\algG{\sign}$.
To do so, we adapt the semantics by Dybjer and Moeneclaey to our setting \cite{DBLP:journals/entcs/DybjerM18}.

\begin{prob}
\label{prob:initial_grpd_alg}
Given a signature $\sign$, to construct a biinitial object  $\constfont{G}$ in $\algG{\sign}$.
\end{prob}

\begin{construction}{prob:initial_grpd_alg}
\label{constr:initial_grpd_alg}
We only discuss how the carrier $G$ of $\constfont{G}$ is defined.
\begin{itemize}
	\item Note  that each polynomial $P$ gives rise to a container $\hat{P}$.
	Note that each container induces a W-type \cite{abbott2003categories},
	and we define the type of objects of $G$ to be the W-type induced by $\hat{\pointconstr}$.
	Denote this type by $\initob$ and let $\constfont{c}^{\initob} : \pointconstr(\initob) \to \initob$ its algebra map.
	\item The morphisms of $G$ are constructed as a set quotient.
	The main idea is to define the type of morphisms and equalities between them so that the groupoid has the desired structure.
	For each part of the structure, we add a constructor, so concretely, we have constructors witnessing the path constructors, identity, composition, and all other laws.
	First, we define an inductive type $\initmorgen{P}{x}{y}$, for each polynomial $P :\poly$ and elements $x, y : P (\initob)$. Its constructors are given in Figure \ref{fig:initmor}.
    When $P$ is $\idP$, we write $\initmor{x}{y}$ instead of $\initmorgen{\idP}{x}{y}$.
	Afterwards, for each $P:\poly$,$x, y : P(\initob)$ and $f, g : \initmorgen{P}{x}{y}$, we define a type $\initeqgen{P}{f}{g}$.
	Again this type is defined inductively, and its constructors can be found in Figure \ref{fig:initmoreq}.
	Note that given $p : \initmorgen{P}{x}{y}$, we can define $\semE{e}(p) : \initeqgen{Q}{\semE{e}(x)}{\semE{e}(y)}$ where $e$ is an endpoint. When $P$ is $\idP$, we write $\initeq{f}{g}$ instead of $\initeqgen{\idP}{f}{g}$.
	
	Note that the input of the quotient is an equivalence relation, which is valued in propositions.
	For this reason, we define $\initeqprop{f}{g}$ to be the propositional truncation of $\initeq{f}{g}$.
	All in all, we define the morphisms from $x$ to $y$ to be the set quotient of $\initmor{x}{y}$ by $\approx^p$.
	\qedhere	
\end{itemize}
\end{construction}

\begin{figure*}[t]
\begin{center}
\begin{bprooftree}
\AxiomC{$P : \poly$}
\AxiomC{$x : P(\initob)$}
\BinaryInfC{$\initmorid{x} : \initmorgen{P}{x}{x}$}
\end{bprooftree}
\begin{bprooftree}
\AxiomC{$P : \poly$}
\AxiomC{$x,y : P(\initob)$}
\AxiomC{$f : \initmorgen{P}{x}{y}$}
\TrinaryInfC{$\initmorinv{f} : \initmorgen{P}{y}{x}$}
\end{bprooftree}
\end{center}

\vspace{5pt}

\begin{center}
\begin{bprooftree}
\AxiomC{$P : \poly$}
\AxiomC{$x,y,z : P(\initob)$}
\AxiomC{$f : \initmorgen{P}{x}{y}$}
\AxiomC{$g : \initmorgen{P}{y}{z}$}
\QuaternaryInfC{$\initmorcomp{f}{g} : \initmorgen{P}{x}{z}$}
\end{bprooftree}
\end{center}

\vspace{5pt}

\begin{center}
\begin{bprooftree}
\AxiomC{$P,Q:\poly$}
\AxiomC{$x,y : P(\initob)$}
\AxiomC{$f : \initmorgen{P}{x}{y}$}
\TrinaryInfC{$\initmorinl{f} : \initmorgen{P+Q}{\inl(x)}{\inl(y)}$}
\end{bprooftree}
\begin{bprooftree}
\AxiomC{$P,Q:\poly$}
\AxiomC{$x,y : Q(\initob)$}
\AxiomC{$f : \initmorgen{Q}{x}{y}$}
\TrinaryInfC{$\initmorinr{f} : \initmorgen{P+Q}{\inr(x)}{\inr(y)}$}
\end{bprooftree}
\end{center}

\vspace{5pt}

\begin{center}
\begin{bprooftree}
\AxiomC{$P,Q : \poly$}
\AxiomC{$x,y : P(\initob)$}
\AxiomC{$w,z : Q(\initob)$}
\AxiomC{$f : \initmorgen{P}{x}{y}$}
\AxiomC{$g : \initmorgen{Q}{w}{z}$}
\QuinaryInfC{$\initmorpair{f}{g} : \initmorgen{P\times Q}{\pair{x}{w}}{\pair{y}{z}}$}
\end{bprooftree}
\end{center}

\vspace{5pt}

\begin{center}
\begin{bprooftree}
\AxiomC{$j : \pathlabel$}
\AxiomC{$x : \patharg_j(\initob)$}
\BinaryInfC{$\initmorpath{j,x} : \initmorgen{\idP}{\semE{\pathleft_j}(x)}{\semE{\pathright_j}(x)}$}
\end{bprooftree}
\begin{bprooftree}
\AxiomC{$x,y : \pointconstr(\initob)$}
\AxiomC{$f : \initmorgen{\pointconstr}{x}{y}$}
\BinaryInfC{$\initmorap{f} : \initmorgen{\idP}{\constfont{c}^{\initob}(x)}{\constfont{c}^{\initob}(y)}$}
\end{bprooftree}
\end{center}
\caption{Rules for the type $\initmorgen{P}{x}{y}$.}
\label{fig:initmor}
\end{figure*}

\begin{prob}
\label{prob:hit_exist}
Each signature has a HIT.
\end{prob}

\begin{construction}{prob:hit_exist}
\label{constr:hit_exist}
By Propositions \ref{thm:initial_alg_sem} and \ref{prop:biinitial_in_grpd}, it suffices to find a biinitial object in $\algG{\sign}$.
The desired object is given in Construction \ref{constr:initial_grpd_alg}.
\end{construction}

\begin{figure*}
\begin{center}
\begin{bprooftree}
\AxiomC{$f : \initmorgen{P}{x}{y}$}
\UnaryInfC{$\initeqgen{P}{f}{f}$}
\end{bprooftree}
\begin{bprooftree}
\AxiomC{$f,g : \initmorgen{P}{x}{y}$}
\AxiomC{$\initeqgen{P}{f}{g}$}
\BinaryInfC{$\initeqgen{P}{g}{f}$}
\end{bprooftree}
\begin{bprooftree}
\AxiomC{$f,g,h : \initmorgen{P}{x}{y}$}
\AxiomC{$\initeqgen{P}{f}{g}$}
\AxiomC{$\initeqgen{P}{g}{h}$}
\TrinaryInfC{$\initeqgen{P}{f}{h}$}
\end{bprooftree}
\end{center}

\begin{center}
\begin{bprooftree}
\AxiomC{$f : \initmorgen{P}{x}{y}$}
\UnaryInfC{$\initeqgen{P}{\initmorcomp{f}{\initmorid{y}}}{f}$}
\end{bprooftree}
\begin{bprooftree}
\AxiomC{$f : \initmorgen{P}{x}{y}$}
\UnaryInfC{$\initeqgen{P}{\initmorcomp{\initmorid{x}}{f}}{f}$}
\end{bprooftree}
\end{center}

\vspace{5pt}

\begin{center}
\begin{bprooftree}
\AxiomC{$f : \initmorgen{P}{w}{x}$}
\AxiomC{$g : \initmorgen{P}{x}{y}$}
\AxiomC{$h : \initmorgen{P}{y}{z}$}
\TrinaryInfC{$\initeqgen{P}{\initmorcomp{f}{\initmorcomp{f}{g}}}{\initmorcomp{\initmorcomp{f}{g}}{h}}$}
\end{bprooftree}
\end{center}

\vspace{5pt}

\begin{center}
\begin{bprooftree}
\AxiomC{$f : \initmorgen{P}{x}{y}$}
\UnaryInfC{$\initeqgen{P}{\initmorcomp{f}{\initmorinv{f}}}{\initmorid{x}}$}
\end{bprooftree}
\begin{bprooftree}
\AxiomC{$f : \initmorgen{P}{x}{y}$}
\UnaryInfC{$\initeqgen{P}{\initmorcomp{\initmorinv{f}}{f}}{\initmorid{y}}$}
\end{bprooftree}
\end{center}

\vspace{5pt}

\begin{center}
\begin{bprooftree}
\AxiomC{$f,g:\initmorgen{P}{x}{y}$}
\AxiomC{$\initeqgen{P}{f}{g}$}
\BinaryInfC{$\initeqgen{P}{\initmorinv{f}}{\initmorinv{g}}$}
\end{bprooftree}
\begin{bprooftree}
\AxiomC{$f_1,f_2 : \initmorgen{P}{x}{y}$}
\AxiomC{$g : \initmorgen{P}{y}{z}$}
\AxiomC{$\initeqgen{P}{f_1}{f_2}$}
\TrinaryInfC{$\initeqgen{P}{\initmorcomp{f_1}{g}}{\initmorcomp{f_2}{g}}$}
\end{bprooftree}
\end{center}

\vspace{5pt}

\begin{center}
\begin{bprooftree}
\AxiomC{$f : \initmorgen{P}{x}{y}$}
\AxiomC{$g_1,g_2 : \initmorgen{P}{y}{z}$}
\AxiomC{$\initeqgen{P}{g_1}{g_2}$}
\TrinaryInfC{$\initeqgen{P}{\initmorcomp{f}{g_1}}{\initmorcomp{f}{g_2}}$}
\end{bprooftree}
\end{center}

\vspace{5pt}

\begin{center}
\begin{bprooftree}
\AxiomC{$\initeqgen{\idP}{\initmorap{\initmorid{x}}}{\initmorid{\constfont{c}^{\initob}(x)}}$}
\end{bprooftree}
\begin{bprooftree}
\AxiomC{$f,g : \initmorgen{\pointconstr}{x}{y}$}
\AxiomC{$\initeqgen{\pointconstr}{f}{g}$}
\BinaryInfC{$\initeqgen{\idP}{\initmorap{f}}{\initmorap{g}}$}
\end{bprooftree}
\end{center}

\vspace{5pt}

\begin{center}
\begin{bprooftree}
\AxiomC{$f : \initmorgen{\pointconstr}{x}{y}$}
\AxiomC{$g : \initmorgen{\pointconstr}{y}{z}$}
\BinaryInfC{$\initeqgen{\idP}{\initmorap{\initmorcomp{f}{g}}}{\initmorcomp{\initmorap{f}}{\initmorap{g}}}$}
\end{bprooftree}
\end{center}

\vspace{5pt}

\begin{center}
\begin{bprooftree}
\AxiomC{$\initeqgen{P+Q}{\initmorinl{\initmorid{x}}}{\initmorid{\inl{x}}}$}
\end{bprooftree}
\begin{bprooftree}
\AxiomC{$\initeqgen{P+Q}{\initmorinr{\initmorid{x}}}{\initmorid{\inr{x}}}$}
\end{bprooftree}
\end{center}

\vspace{5pt}

\begin{center}
\begin{bprooftree}
\AxiomC{$f,g:\initmorgen{P}{x}{y}$}
\AxiomC{$\initeqgen{P}{f}{g}$}
\BinaryInfC{$\initeqgen{P+Q}{\initmorinl{f}}{\initmorinl{g}}$}
\end{bprooftree}
\begin{bprooftree}
\AxiomC{$f : \initmorgen{P}{x}{y}$}
\AxiomC{$g : \initmorgen{P}{y}{z}$}
\BinaryInfC{$\initeqgen{P + Q}{\initmorinl{\initmorcomp{f}{g}}}{\initmorcomp{\initmorinl{f}}{\initmorinl{g}}}$}
\end{bprooftree}
\end{center}

\vspace{5pt}

\begin{center}
\begin{bprooftree}
\AxiomC{$f,g:\initmorgen{Q}{x}{y}$}
\AxiomC{$\initeqgen{Q}{f}{g}$}
\BinaryInfC{$\initeqgen{P+Q}{\initmorinr{f}}{\initmorinr{g}}$}
\end{bprooftree}
\begin{bprooftree}
\AxiomC{$f : \initmorgen{Q}{x}{y}$}
\AxiomC{$g : \initmorgen{Q}{y}{z}$}
\BinaryInfC{$\initeqgen{P + Q}{\initmorinr{\initmorcomp{f}{g}}}{\initmorcomp{\initmorinr{f}}{\initmorinr{g}}}$}
\end{bprooftree}
\end{center}

\vspace{5pt}

\begin{center}
\begin{bprooftree}
\AxiomC{$\initeqgen{P\times Q}{\initmorpair{\initmorid{x}}{\initmorid{y}}}{\initmoridvar{\pair{x}{y}}}$}
\end{bprooftree}
\end{center}

\vspace{5pt}

\begin{center}
\begin{bprooftree}
\AxiomC{$f_1,f_2:\initmorgen{P}{x_1}{x_2}$}
\AxiomC{$g_1,g_2:\initmorgen{Q}{y_1}{y_2}$}
\AxiomC{$\initeqgen{P}{f_1}{f_2}$}
\AxiomC{$\initeqgen{Q}{g_1}{g_2}$}
\QuaternaryInfC{$\initeqgen{P\times Q}{\initmorpair{f_1}{g_1}}{\initmorpair{f_2}{g_2}}$}
\end{bprooftree}
\end{center}

\vspace{5pt}

\begin{center}
\begin{bprooftree}
\AxiomC{$f_1 : \initmorgen{P}{x_1}{y_1}$}
\AxiomC{$g_1 : \initmorgen{P}{y_1}{z_1}$}
\AxiomC{$f_2 : \initmorgen{Q}{x_2}{y_2}$}
\AxiomC{$g_2 : \initmorgen{Q}{y_2}{z_2}$}
\QuaternaryInfC{$\initeqgen{P \times Q}{\initmorpair{\initmorcomp{f_1}{g_1}}{\initmorcomp{f_2}{g_2}}}{\initmorcomp{\initmorpair{f_1}{f_2}}{\initmorpair{g_1}{g_2}}}$}
\end{bprooftree}
\end{center}

\vspace{5pt}

\begin{center}
\begin{bprooftree}
\AxiomC{$j : \homotlabel$}
\AxiomC{$x : \homotpointarg_j(G_0)$}
\AxiomC{$p : \initmorgen{\homotpathtarg}{\semE{\pathargleft}(x)}{\semE{\pathargright}(x)}$}
\TrinaryInfC{$\initeqgen{\idP}{\semH{\homotleft_j}(x, p)}{\semH{\homotright_j}(x, p)}$}
\end{bprooftree}
\end{center}

\vspace{5pt}

\begin{center}
\begin{bprooftree}
\AxiomC{$j : \pathlabel$}
\AxiomC{$x, y : \patharg_j(G_0)$}
\AxiomC{$p : \initmorgen{\patharg_j}{x}{y}$}
\TrinaryInfC{$\initeqgen{\idP}{\initmorcomp{\initmorpath{j, x}}{\semE{\pathright}(p)}}{\initmorcomp{\semE{\pathleft}{(p)}}{\initmorpath{j, y}}}$}
\end{bprooftree}
\end{center}

\caption{Rules for the type $\initeqgen{P}{f}{g}$.}
\label{fig:initmoreq}
\end{figure*}

\section{Additional Examples}
\label{sec:examples}
In this section we present some additional examples, which complement the
ones introduced in Section \ref{sec:signs}. Remember that our higher
inductive types are all 1-truncated, and we always omit the 1-truncation
constructor from their syntax.

\subsection{Coinserter}
\label{sec:coinserter}

The coinserter is a bicategorical generalization of the coequalizer in
a category. In the bicategory of 1-types, coinserters can be constructed as
 homotopy coequalizer \cite[Chapter 6]{hottbook}.

\begin{defi}
Let $\B$ be a bicategory. Let $A$ and $B$ be objects of $\B$ and let
$f,g : A \onecell B$. The \fat{coinserter} of $f$ and $g$ is an
object $Q$ together with a 1-cell $q : B \onecell Q$ and a 2-cell
$\theta : f \cdot q\twocell g \cdot q$.
The triple $(Q,q,\theta)$ must satisfy the following universal
property. Suppose that we have
\begin{itemize}
\item an object $Q'$;
\item a 1-cell $q' : B \onecell Q'$;
\item a 2-cell $\theta' : f \cdot q' \twocell g \cdot q'$.
\end{itemize}
Then there exists a 1-cell $h : Q \onecell Q'$ together with a 2-cell
$\phi : q \cdot h \twocell q'$ with a path $\theta \whiskerr h \vcomp
g \whiskerl \phi = \lassoc{f}{q}{h} \vcomp f \whiskerl \phi \vcomp
\theta'$. The pair $(h,\phi)$ is unique up to unique 2-cell, which means that
%% \footnote{Notice that this path would type-check also without the
%%   left associator $\lassoc{}{}{}$, since function composition is
%%   stricly associative. But the presence of the left associator is
%%   necessary if we consider algebras for the coinserter in
%%   bicategories more general than the one of 1-types and groupoids.}.
given another 1-cell $h' : Q \onecell Q'$ and another 2-cell $\phi' :
q \cdot h' \twocell q'$ with a path $\theta \whiskerr h' \vcomp g
\whiskerl \phi' = \lassoc{f}{q}{h'} \vcomp f \whiskerl \phi' \vcomp
\theta'$, then there exists a unique 2-cell $\tau : h
\twocell h'$ such that $q \whiskerl \tau \vcomp \phi' = \phi$.
\end{defi}

Next we show how to construct coinserters in the bicategory of 1-types.  Given two 1-types $A$ and $B$ and two
functions $f , g: A \to B$, the coinserter of $f$ and $g$ is
given by the following HIT:
\begin{lstlisting}[mathescape=true]
Inductive $\coequalizer f g$ :=
| $\coequalizerbase$ : $B \rightarrow \coequalizer f g$
| $\coequalizerglue$ : $\depprod{(x : A)}{\coequalizerbase(f(x)) = \coequalizerbase(g(x))}$
\end{lstlisting}

Here are all the ingredients needed to specify the signature $\coequalizer f g$ for the coinserter:
%is constructed as follows:
\begin{itemize}
\item $\pointconstr[\coequalizer f g] \eqdef \constantP{B}$;
\item $\pathlabel[\coequalizer f g] \eqdef \unit$, and for its unique inhabitant take $\patharg[\coequalizer f g] \eqdef \constantP{A}$ and endpoints
  \[
  \pathleft[\coequalizer f g] \eqdef \comp{\fmap(f)}{\constr}, \quad
  \pathright[\coequalizer f g] \eqdef \comp{\fmap(g)}{\constr}; 
  \]
\item $\homotlabel[\coequalizer f g]$ is the empty type.
\end{itemize}

%% The coequalizer satisfies the following universal property. Assume
%% given a 1-type $C$, a function $h : B \rightarrow C$ and a homotopy
%% $p$ between $f \cdot h$ and $g \cdot h$. Then there exists a function
%% $h' : \coequalizer f g \rightarrow C$ together with a homotopy $p'$
%% between $\coequalizerbase \cdot h'$ and $h$, and a path
%% $\coequalizerglue \whiskerr h' \vcomp g \whiskerl p' = \lassoc{}{}{}
%% \vcomp f \whiskerl p' \vcomp p$\footnote{Notice that this path would
%%   type-check also without the left associator $\lassoc{}{}{}$, since
%%   function composition is stricly associative. But the presence of the
%%   left associator is necessary if we consider algebras for the
%%   coequalizer in bicategories more general than the one of 1-types and
%%   groupoids.}.  Given another function $h'' : \coequalizer f g
%% \rightarrow C$ with homotopy $p''$ between $\coequalizerbase \cdot
%% h''$ and $h$ such that $\coequalizerglue \whiskerr h'' \vcomp g
%% \whiskerl p'' = \lassoc{}{}{} \vcomp f \whiskerl p'' \vcomp p$, then
%% there exists a unique homotopy $\tau$ between $h'$ and $h''$ such that
%% $\coequalizerbase \whiskerl \tau \vcomp p'' = p'$.

\subsection{Coequifier}
\label{sec:coequifier}

The coequifier is a finite colimit in a bicategory, corresponding to a
higher version of the coequalizer.
While the coinserter makes a diagram of 1-cell commute up to a 2-cell,
the coequifier makes a diagram of 2-cells commute strictly.

\begin{defi}
Let $\B$ be a bicategory. Let $A$ and $B$ be objects of $\B$, let $f,
g : A \onecell B$ be 1-cells and $\beta,\gamma : f \twocell
g$ be 2-cells. The \fat{coequifier} of $\beta$ and $\gamma$ is an object $Q$
together with a 1-cell $q : B \onecell Q$ and a path
$\beta \whiskerr q = \gamma \whiskerr q$.
The pair $(Q,q)$ must satisfy the following universal
property. Suppose that we have
\begin{itemize}
\item an object $Q'$,
\item a 1-cell $q' : B \onecell Q'$, and
\item a path $\beta \whiskerr q' = \gamma \whiskerr q'$.
\end{itemize}
Then there exists a 1-cell $h : Q \onecell Q'$ together with a 2-cell
$\phi : q \cdot h \twocell q'$. The pair $(h,\phi)$ is unique up to a unique 2-cell, which means that
given another 1-cell $h' : Q \onecell Q'$ and another 2-cell $\phi' :
q \cdot h' \twocell q'$, there exists a unique 2-cell $\tau : h
\twocell h'$ such that $q \whiskerl \tau \vcomp \phi' = \phi$.
\end{defi}

In the bicategory of 1-types, the coequifier can also be constructed as a HIT. In
the definition below, $A$ and $B$ are 1-types, $f , g: A \to B$ are
functions and $\beta$ and $\gamma$ are homotopies between $f$ and $g$.
\begin{lstlisting}[mathescape=true]
Inductive $\coequifier \beta \gamma$ :=
| $\coequifierbase$ : $B \rightarrow \coequifier \beta \gamma$
| $\coequifierglue$ : $\depprod{(x : A)}{\ap {\coequifierbase}{(\beta(x))} = \ap {\coequifierbase}{(\gamma(x))}}$
\end{lstlisting}

The signature $\coequifier \beta \gamma$ for the coequifier of $\beta$
and $\gamma$ is given as follows:
\begin{itemize}
\item $\pointconstr[\coequifier f g] \eqdef \constantP{B}$;
\item $\pathlabel[\coequifier f g]$ is the empty type; 
\item $\homotlabel[\coequifier f g] \eqdef \unit$, and take 
  $\homotpointarg[\coequifier f g] \eqdef \constantP{A}$.
  The homotopy constructor $\coequifierglue$ does not have path arguments, so we take  
  $\homotpathtarg[\coequifier f g] \eqdef \constantP{\unit}$ and
  $\pathargleft[\coequifier f g] \eqdef \pathargright[\coequifier f g] \eqdef \Ce(\unitt)$.
  The endpoints $\homotpathleft[\coequifier f g]$ and $\homotpathright[\coequifier f g]$ are:
  \[
  \homotpathleft[\coequifier f g] \eqdef \comp{\fmap(f)}{\constr}, \quad
  \homotpathright[\coequifier f g] \eqdef \comp{\fmap(g)}{\constr},
  \]
  while the left and right homotopy endpoints are:
  \[
  \hap{\constr}{(\idtoH(\ap{\fmap}{(\funextsec(\beta))}))}, \quad
  \hap{\constr}{(\idtoH(\ap{\fmap}{(\funextsec(\gamma))}))}.
  \]
\end{itemize}
In the construction of the homotopy endpoints we used the function
$\idtoH$ introduced in Section~\ref{sec:signatures}, which embeds
paths between endpoints into homotopy endpoints. Notice also the
difference between $\constructor{ap}$ and $\constfont{ap}$: the first
is an homotopy endpoint constructor, the second indicates the
application of a function to a path.

\subsection{Group Quotient}
\label{sec:group_quotient}

Now we introduce a particular instance of the groupoid quotient, that
we call the \emph{group quotient}. We start with a group $G$ and we write $\id$ 
for its unit and $\cdot$ for multiplication in $G$. Define a groupoid $\widehat{G}$
with only one object and with $G$ as the only homset. The group
quotient of $G$ is the groupoid quotient of $\widehat{G}$, and it
corresponds to the following HIT:
\begin{lstlisting}[mathescape=true]
Inductive $\groupquot G$ :=
| $\groupquotbase$ : $\groupquot G$
| $\groupquotloop$ : $G \rightarrow \groupquotbase = \groupquotbase$
| $\groupquotloope$ : $\groupquotloop(\id)= \refl \groupquotbase$
| $\groupquotloopm$ : $\depprod{(x,y : G)}{\groupquotloop(x \cdot y)= \concat{\groupquotloop(x)}{\groupquotloop(y)}}$
\end{lstlisting}

The signature $\groupquot G$ for the group quotient is defined as follows:
\begin{itemize}
\item $\pointconstr \eqdef \constantP{\unit}$;
\item $\pathlabel \eqdef \unit$, and for its unique inhabitant take $\patharg \eqdef \constantP{G}$ and both endpoints $\pathleft$ and $\pathright$ equal to $\comp{\Ce(\unitt)}{\constr}$.
\item $\homotlabel \eqdef \bool$, where $\bool$ is the type of
booleans with inhabitants $\booltrue$ and $\boolfalse$. This means
that there are two homotopy constructors: $\groupquotloope$, with
associated label $\booltrue$, and $\groupquotloopm$, with associated
label $\boolfalse$.
\item The constructor $\groupquotloope$ does not have
point arguments, so we take $\homotpointarg_{\booltrue}
\eqdef \constantP{\unit}$. It also does not have path arguments, therefore
$\homotpathtarg_{\booltrue} \eqdef \constantP{\unit}$ and
$\pathargleft_{\booltrue} \eqdef \pathargright_{\booltrue} \eqdef \Ce(\unitt)$.
The endpoints $\homotpathleft_{\booltrue}$ and
$\homotpathright_{\booltrue}$ are both equal to
$\comp{\Ce(\unitt)}{\constr}$. The left homotopy endpoint is
\[
\hconcat{\hap{\constr}{\hinv{(\hcompconst{\Ce(\id)})}}}{
\hconcat{\hinv{\hassocN}}{
\hconcat{\hconstr{}{\Ce(\id)}}{
\hconcat{\hassocN}{\hap{\constr}{(\hcompconst{\Ce(\id)})}}
}}},
\]
while the right homotopy endpoint is $\hrefl{\constr}$.
\item The constructor $\groupquotloopm$ has two point
arguments of type $G$, so we take $\homotpointarg_{\boolfalse}
\eqdef \constantP{G \times G}$. It does not have path arguments, therefore
$\homotpathtarg_{\boolfalse} \eqdef \constantP{\unit}$ and
$\pathargleft_{\boolfalse} \eqdef \pathargright_{\boolfalse}
\eqdef \Ce(\unitt)$.  The endpoints $\homotpathleft_{\boolfalse}$ and
$\homotpathright_{\boolfalse}$ are both equal to
$\comp{\Ce(\unitt)}{\constr}$. The left homotopy endpoint is
%%\begin{align*}
%%\hconcat{\hap{\constr}{\hinv{(\hcompconst{\fmap(\Lam {x,y}. x \cdot y)})}}}{
%%& \hinv{\hassocN}} \\
%%\hconcat{}{& \hconstr{}{\fmap(\Lam {x,y}. x \cdot y)}} \\
%%\hconcat{}{& \hassocN} \\
%%\hconcat{}{& \hap{\constr}{(\hcompconst{\fmap(\Lam {x,y}. x \cdot y)})}},
%%\end{align*}
%%while the right endpoint is
%%\begin{align*}
%%\hconcat{\hap{\constr}{\hinv{(\hcompconst{\fmap(\projl)})}}}{
%%& \hinv{\hassocN}} \\
%%\hconcat{}{& \hconstr{}{\fmap(\projl)}} \\
%%\hconcat{}{& \hassocN} \\
%%\hconcat{}{& \hap{\constr}{(\hcompconst{\fmap(\projl)})}} \\
%%\hconcat{}{& \hap{\constr}{\hinv{(\hcompconst{\fmap(\projr)})}}} \\
%%\hconcat{}{& \hinv{\hassocN}} \\
%%\hconcat{}{& \hconstr{}{\fmap(\projr)}} \\
%%\hconcat{}{& \hassocN} \\
%%\hconcat{}{& \hap{\constr}{(\hcompconst{\fmap(\projr)})}}
%%\end{align*}
\begin{align*}
& \hconcat{\hap{\constr}{\hinv{(\hcompconst{\fmap(\Lam {x,y}. x \cdot y)})}}}{\hinv{\hassocN}} \\
& \quad \hconcat{}{\hconstr{}{\fmap(\Lam {x,y}. x \cdot y)}} \\
& \quad \hconcat{}{\hconcat{\hassocN}{\hap{\constr}{(\hcompconst{\fmap(\Lam {x,y}. x \cdot y)})}}},
\end{align*}
while the right endpoint is
\begin{align*}
& \hconcat{\hap{\constr}{\hinv{(\hcompconst{\fmap(\projl)})}}}{
\hinv{\hassocN}} \\
& \quad \hconcat{}{\hconstr{}{\fmap(\projl)}} \\
& \quad \hconcat{}{\hconcat{\hassocN}{\hap{\constr}{(\hcompconst{\fmap(\projl)})}}} \\
& \quad \hconcat{}{\hconcat{\hap{\constr}{\hinv{(\hcompconst{\fmap(\projr)})}}}{\hinv{\hassocN}}} \\
& \quad \hconcat{}{\hconstr{}{\fmap(\projr)}} \\
& \quad \hconcat{}{\hconcat{\hassocN}}{\hap{\constr}{(\hcompconst{\fmap(\projr)})}}
\end{align*}
\end{itemize}

The signature for the groupoid quotient is obtainable as a slight
generalization of the signature for the group quotient. We do not show
the more general construction here, since this is not conceptually
more enlightening than the (already quite complicated) signature for
the group quotient.

\subsection{Monoidal Object}
\label{sec:monoidal_object}
Next we look at two other examples of signatures.
Here we are not interested in the HIT described by the signature, but instead, we are interested in the algebras.
We first discuss the signature whose algebras are \emph{monoidal objects}.
\begin{lstlisting}[mathescape=true]
Inductive $\monobj$ :=
| $\monobjunit$ : $\monobj$
| $\monobjtensor$ : $\monobj \rightarrow \monobj \rightarrow \monobj$
| $\monobjlambda$ : $\depprod{(x : \monobj)}{\monobjtensor(\monobjunit,x) = x}$
| $\monobjrho$ : $\depprod{(x : \monobj)}{\monobjtensor(x,\monobjunit) = x}$
| $\monobjalpha$ : $\depprod{(x ,y,z: \monobj)}{\monobjtensor(x,(\monobjtensor(y,z)))=\monobjtensor(\monobjtensor(x,y),z)}$
| $\monobjtr$ : $\depprod{(x,y : \monobj)}{\ap{(\Lam z. \monobjtensor(x,z))}{(\monobjlambda(y))}} = \concat{\monobjalpha(x,\monobjunit,y)}{\ap{(\Lam z. \monobjtensor(z,y))}{(\monobjrho(x))}}$
| $\monobjpent$ : $\depprod{(w,x,y,z : \monobj)}{}$
    $\concat{\monobjalpha(w,x,\monobjtensor(y,z))}{\monobjalpha(\monobjtensor(w,x),y,z)}$
    =
    $\concat{\ap{(\Lam v. \monobjtensor(w,v))}{(\monobjalpha(x,y,z))}}{\concat{\monobjalpha(w,\monobjtensor(x,y),z)}{\ap{(\Lam v. \monobjtensor(v,z))}{(\monobjalpha(w,x,y))}}}$
\end{lstlisting}
We do not show the signature associated to this HIT here. We redirect
the interested reader to our formalization for the complete
definition.

In the constructors of $\monobj$, one can recognize the data of a
monoidal category. The point constructors $\monobjunit$ and
$\monobjtensor$ correspond to unit object and tensor. The path
constructors $\monobjlambda$, $\monobjrho$ and $\monobjalpha$ are left
unitor, right unitor and associator respectively, while the homotopy constructors
$\monobjtr$ and $\monobjpent$ are the two coherence laws of
monoidal categories. And in fact, algebras in groupoids of the
monoidal object signature are precisely
\emph{monoidal groupoids}, the groupoid variant of monoidal
categories, and $\monobj$ is a presentation of the initial
monoidal groupoid.
Note that Piceghello proved coherence for monoidal groupoids \cite{Piceghello19}, and algebras for the signature $\monobj$ correspond to those monoidal groupoids.

\begin{exa}
Let $A$ be a 1-type.
We can construct an algebra of $\monobj$ whose carrier is the type $\List(A)$ of lists of $A$.
The unit is the empty list, the tensor is concatenation, and the laws and coherencies are proven by induction.
\end{exa}

%%The importance of this example stems from the fact that algebras in
%%groupoids of the monoidal object signaturecorrespond precisely
%%to
%%\emph{monoidal groupoids}, the groupoid variant of monoidal
%%categories. $\monobj$ is therefore a presentation of the initial
%%monoidal groupoid.

We also define a signature $\cohgroup$ whose algebras are \emph{coherent 2-groups} \cite{baez2004groups}.
Its definition includes all the constructors of the monoidal object, plus a new point constructor
$\cohgroupinv$ : $\cohgroup \rightarrow \cohgroup$, two new path
constructors
\[
\cohgrouplinv : \depprod{(x : \cohgroup)}{\cohgrouptensor(\cohgroupinv(x),x) = \cohgroupunit},
\quad\quad
\cohgrouprinv : \depprod{(x : \cohgroup)}{\cohgroupunit = \cohgrouptensor(x,\cohgroupinv(x))},
\]
and two new homotopy constructors similar to the coherencies given by Baez and Lauda \cite{baez2004groups}.

We look at two examples of coherent 2-groups.
The first is based on the work of Buchholtz \etal \ \cite{DBLP:conf/lics/BuchholtzDR18} and Kraus and Altenkirch~\cite{KA18}.
They define higher groups as loop spaces and for 1-truncated types, the loop space is a coherent 2-group in our sense.

\begin{exa}
Suppose, $A$ is a 2-type and $a$ is a point of $A$.
Then we can construct an algebra of $\cohgroup$ whose carrier is given by $a = a$.
\end{exa}

The second example is the automorphism group on a 1-type whose elements are equivalences on a given 1-type.
Note that this group is the loop space of the 2-type of 1-types due to univalence.

\begin{exa}
Let $A$ be a 1-type.
We have an algebra of $\cohgroup$ whose carrier is given by equivalences $f : A \to A$.
The unit element is the identity function, the tensor is the concatenation, and the inverse is just the inverse of an equivalence.
\end{exa}

%% \subsection{Coherent Group}
%% \label{sec:coeherent_group}
%% 
%% The coherent group is defined by the following HIT:
%% \begin{lstlisting}[mathescape=true]
%% Inductive $\cohgroup$ :=
%% | $\cohgroupunit$ : $\cohgroup$
%% | $\cohgrouptensor$ : $\cohgroup \rightarrow \cohgroup \rightarrow \cohgroup$
%% | $\cohgroupinv$ : $\cohgroup \rightarrow \cohgroup$
%% | $\cohgrouplambda$ : $\depprod{(x : \cohgroup)}{\cohgrouptensor(\cohgroupunit,x) = x}$
%% | $\cohgrouprho$ : $\depprod{(x : \cohgroup)}{\cohgrouptensor(x,\cohgroupunit) = x}$
%% | $\cohgroupalpha$ : $\depprod{(x ,y,z: \cohgroup)}{\cohgrouptensor(x,(\cohgrouptensor(y,z)))=\cohgrouptensor(\cohgrouptensor(x,y),z)}$
%% | $\cohgrouplinv$ : $\depprod{(x : \cohgroup)}{\cohgrouptensor(\cohgroupinv(x),x) = \cohgroupunit}$
%% | $\cohgrouprinv$ : $\depprod{(x : \cohgroup)}{\cohgroupunit = \cohgrouptensor(x,\cohgroupinv(x))}$
%% | $\cohgrouptr$ : $\depprod{(x,y : \cohgroup)}{\ap{(\Lam z. \cohgrouptensor(x,z))}{(\cohgrouplambda(y))}} = \concat{\cohgroupalpha(x,\cohgroupunit,y)}{\ap{(\Lam z. \cohgrouptensor(z,y))}{(\cohgrouprho(x))}}$
%% | $\cohgrouppent$ : $\depprod{(w,x,y,z : \cohgroup)}{}$
%%     $\concat{\cohgroupalpha(w,x,\cohgrouptensor(y,z))}{\cohgroupalpha(\cohgrouptensor(w,x),y,z)}=\concat{\ap{(\Lam v. \cohgrouptensor(w,v))}{(\cohgroupalpha(x,y,z))}}{\concat{\cohgroupalpha(w,\cohgrouptensor(x,y),z)}{\ap{(\Lam v. \cohgrouptensor(v,z))}{(\cohgroupalpha(w,x,y))}}}$
%% | $\cohgroupinvadjtrl$ : $\depprod{(x : \cohgroup)}{\concat{\ap{(\Lam z. \cohgrouptensor(z,x))}{(\cohgrouprinv(x))}}{\concat{\inverse{(\cohgroupalpha(x,\cohgroupinv(x),x))}}{\ap{(\Lam z. \cohgrouptensor(x,z))}{(\cohgrouplinv(x))}}}=\concat{\cohgrouplambda(x)}{\inverse{(\cohgrouprho(x))}}}$
%% | $\cohgroupinvadjtrr$ : $\depprod{(x : \cohgroup)}{\concat{\ap{(\Lam z. \cohgrouptensor(\cohgroupinv(x),z))}{(\cohgrouprinv(x))}}{\concat{\cohgroupalpha(\cohgroupinv(x),x,\cohgroupinv(x))}{\ap{(\Lam z. \cohgrouptensor(z,\cohgroupinv(x)))}{(\cohgrouplinv(x))}}}=\concat{\cohgrouprho(\cohgroupinv(x))}{\inverse{(\cohgrouplambda(\cohgroupinv(x)))}}}$
%% \end{lstlisting}
%% 
%% 
%% 
%% 
%% 
%% Signature for the coherent group

\section{PIE Limits of Algebras}
\label{sec:finite_limits}
This section is dedicated to the construction of PIE limits in the
bicategory $\alg(\sign)$ of algebras in 1-types for the signature
$\sign$ \cite{power1991characterization}.
Note that $\alg(\sign)$ also has a terminal object whose carrier is the unit type.

\subsection{Products}

Binary products in a bicategory generalize the notion of binary
product in a category.

\begin{defi}\label{def:product}
Let $\B$ be a bicategory and let $A$ and $B$ be two objects of
$\B$. The \fat{product} of $A$ and $B$ is given by an object $\prodB
A B$ together with 1-cells $\projlB : \prodB A B \onecell A$ and
$\projrB : \prodB A B \onecell B$.

The triple $(\prodB A B,\projlB, \projrB)$ must satisfy the following
universal property. Given an object $X$ and 1-cells $f : X
\onecell A$ and $g : X \onecell B$, there exist a 1-cell
$\mappair f g : X \onecell \prodB A B$ and two 2-cells $\theta :
\mappair f g \cdot \projlB \twocell f$ and $\theta' : \mappair f g
\cdot \projrB \twocell g$.

Moreover, given two 1-cells $h_1,h_2 : X \onecell \prodB A B$ with
2-cells $\theta_i : h_i \cdot \projlB \twocell f$ and $\theta'_i : h_i
\cdot \projrB \twocell g$, for $i= 1,2$, there exists a unique
2-cell $\tau : h_1 \twocell h_2$ such that $\tau \whiskerr
\projlB \vcomp \theta_2 = \theta_1$ and $\tau \whiskerr \projrB \vcomp
\theta'_2 = \theta'_1$.
\end{defi} 

\begin{prob}
\label{prob:product}
Given algebras $A$ and $B$ for $\sign$, to construct their product in $\alg(\sign)$.
\end{prob}

\begin{construction}{prob:product}\label{cons:product}
Let $A$ and $B$ be two algebras for $\sign$. The product of $A$ and
$B$ consists of the following data:
\begin{itemize}
\item The carrier is $A \times B$, the product of the carriers of $A$ and $B$.
\item The function $\AlgPoint{A \times B} : \polyAct{\pointconstr}{A \times B} \rightarrow A \times B$ is
\[
\AlgPoint{A \times B}(x) \eqdef (\AlgPoint{A}(\polyAct{\pointconstr}{\projl}(x)),\AlgPoint{B}(\polyAct{\pointconstr}{\projr}(x))).
\]
\item For all labels $j : \pathlabel$ we are given pseudonatural
transformations $\semE{\pathleft_j}$ and $\semE{\pathright_j}$. We
write $\semE{\pathleft_j}_X$ for the component of $\semE{\pathleft_j}$
at object $X$. Given a
1-cell $f : X \onecell Y$ in $\alg(\sign)$, we write
$\semE{\pathleft_j}(f)$ for the 2-cell of type
$\semE{\pathleft_j}_X \cdot
f \twocell \polyAct{\patharg_j}{f} \cdot \semE{\pathleft_j}_Y$ (this is the same notation used in Example~\ref{ex:DCell}).
We write similarly for $\semE{\pathright_j}$.

For each point $x : \polyAct{\patharg_j}{A \times B}$, we are required
to construct a path $\AlgPath{A \times B}{j}(x)
: \semE{\pathleft_j}_{A \times B}(x) = \semE{\pathright_j}_{A \times
B}(x)$. This is a path in $A \times B$, so it is enough to construct
two paths $\projl(\semE{\pathleft_j}_{A \times B}(x))
= \projl(\semE{\pathright_j}_{A \times B}(x))$ and
$\projr(\semE{\pathleft_j}_{A \times B}(x))
= \projr(\semE{\pathright_j}_{A \times B}(x))$. The first of these is
defined as the following concatenation of paths:
\begin{align*}
\projl(\semE{\pathleft_j}_{A \times B}(x))
&\stackrel{\semE{\pathleft_j}(\projl)(x)}{=}
\semE{\pathleft_j}_{A}(\polyAct{\patharg_j}{\projl}(x)) \\
&\stackrel{\AlgPath{A}{j}(\polyAct{\patharg_j}{\projl}(x))}{=}
\semE{\pathright_j}_{A}(\polyAct{\patharg_j}{\projl}(x)) \\
&\stackrel{\inverse{(\semE{\pathright_j}(\projl)(x))}}{=}
\projl(\semE{\pathright_j}_{A \times B}(x))
\end{align*}
The second path is defined analogously.
\item The construction of the required homotopies is more involved and we refer the reader to the formalization for all the details.
\end{itemize}

It is not difficult to show that the projections $\projlB : \prodB A
B \onecell A$ and $\projrB : \prodB A B \onecell B$ are morphisms of
algebras.  Moreover, the product of algebras satisfies the required
universal property of Definition~\ref{def:product}. Given
two algebra morphisms $f : X \onecell A$ and $g : X \onecell B$, we
have a function $\mappair f g : X \to A \times B$ by the universal
property of the product (of types). We refer the reader to the formalization for the proof that
$\mappair f g$ is  a morphism of algebras. 
\end{construction}

\subsection{Inserter}
The inserter in a bicategory is a generalization of the equalizer in
a category.
%% This is typically called inserter. We call it equalizer
%% instead, to keep the terminology consistent with the one of
%% Section~\ref{sec:coequalizer}, where coequalizer is used in place of
%% coinserter.

\begin{defi}\label{def:inserter}
Let $\B$ be a bicategory. Let $A$ and $B$ be objects of $\B$ and let
$f,g : A \onecell B$. The \fat{inserter} of $f$ and $g$ is an
object $E$ together with a 1-cell $e : E \onecell A$ and a 2-cell
$\epsilon : e \cdot f\twocell e \cdot g$.
The triple $(E,e,\epsilon)$ must satisfy the following universal
property. Suppose we have
\begin{itemize}
\item an object $E'$;
\item a 1-cell $e' : E' \onecell A$;
\item a 2-cell $\epsilon' : e' \cdot f \twocell e' \cdot g$.
\end{itemize}
Then there exists a 1-cell $h : E' \onecell E$ together with a 2-cell
$\phi : h \cdot e \twocell e'$ and a path $\rassoc{h}{e}{f} \vcomp
h \whiskerl \epsilon \vcomp \lassoc{h}{e}{g} \vcomp \phi \whiskerr g
= \phi \whiskerr f \vcomp \epsilon'$.
The pair $(h,\phi)$ is unique up to unique 2-cell, which means that
given another 1-cell $h' : E' \onecell E$, another 2-cell $\phi' :
h' \cdot e \twocell e'$, and a path $\rassoc{h'}{e}{f} \vcomp
h' \whiskerl \epsilon \vcomp \lassoc{h'}{e}{g} \vcomp \phi' \whiskerr g
= \phi' \whiskerr f \vcomp \epsilon'$, there exists a unique 2-cell $\tau : h
\twocell h'$ such that $\tau \whiskerr e \vcomp \phi' = \phi$.
\end{defi}

Note that there is a choice for the direction of the 2-cell $\epsilon : e \cdot f\twocell e \cdot g$ since we could also have chosen $\epsilon : e \cdot g\twocell e \cdot f$ or we could have required this 2-cell to be inevrtible.
However, since we only consider bicategories in which all 2-cells are invertible, this choice does not matter.

\begin{prob}
\label{prob:inserter}
Given algebras $A$ and $B$ for $\sign$ and
algebra morphisms $f,g : A \onecell B$, to construct the inserter of $f$ and $g$ in $\alg(\sign)$.
\end{prob}

\begin{construction}{prob:inserter}\label{cons:inserter}
The inserter of $f$ and $g$ is defined as the total algebra $\TotalAlg Y$ of a
displayed algebra $Y$ over $A$. Displayed algebras were
introduced in Definition~\ref{def:disp_alg}. $Y$ is defined as follows:
\begin{itemize}
\item The underlying family $Y$ of 1-types over $A$ is $Y(x) \eqdef f(x) = g(x)$.
\item
For each $x : \polyAct{\pointconstr}{A}$, we are required to construct
a map $\DispAlgPoint{Y} : \polyDact{\pointconstr}{Y}(x) \rightarrow
f(\AlgPoint{A}(x)) = g (\AlgPoint{A}(x))$. Suppose we have $\pover{x}
: \polyDact{\pointconstr}{Y}(x)$. By induction on the polynomial
$\pointconstr$, it is possible to derive from $\pover{x}$ a path $p
: \polyAct{\pointconstr}{f}(x) = \polyAct{\pointconstr}{g}(x)$.  We
define $\DispAlgPoint{Y}(\pover{x})$ as the following concatenation of paths:
\[
f(\AlgPoint{A}(x))
\stackrel{\AlgPoint{f}(x)}{=} \AlgPoint{B}(\polyAct{\pointconstr}{f}(x))
\stackrel{\ap{\AlgPoint{B}}{p}}{=} \AlgPoint{B}(\polyAct{\pointconstr}{g}(x))
\stackrel{\inverse{(\AlgPoint{g}(x))}}{=} g (\AlgPoint{A}(x))
\]
%% The family of invertible 2-cells $\tcC$ underlying the algebra
%% morphism $\AlgPoint{}$ was introduced in Example~\ref{ex:DFAlg}.
\item The construction of paths $\DispAlgPath{Y}{j}$ is relatively involved, and we refer to the formalization for the details.
\item
The construction of globes $\DispAlgHomot{Y}{j}$ is
straightforward. These are paths between paths in $Y(x)$, for some
point $x$. Since $Y(x)$ is a set, the required paths exist.
\end{itemize}

The 1-cell $e$ is the first projection out of the total algebra
$\TotalAlg Y$, which is an algebra morphism by construction. The
2-cell $\epsilon$ is the second projection out of $\TotalAlg Y$. The
algebra $\TotalAlg Y$ satisfies the required universal property of the
inserter spelled out in Definition~\ref{def:inserter}.
\end{construction}

\subsection{Equifier}

The equifier is finite limit in a bicategory, corresponding to a
higher version of the equalizer.

\begin{defi}\label{def:equifier}
Let $\B$ be a bicategory. Let $A$ and $B$ be objects of $\B$, let $f,
g : A \onecell B$ and $\beta,\gamma : f \twocell
g$. The \fat{equifier} of $\beta$ and $\gamma$ is an object $E$
together with a 1-cell $e : E \onecell A$ and a path
$e \whiskerl \beta = e \whiskerl \gamma$.
The pair $(E,e)$ must satisfy the following universal
property. Suppose we have
\begin{itemize}
\item an object $E'$;
\item a 1-cell $e' : E' \onecell A$;
\item a path $e' \whiskerl \beta = e' \whiskerl \gamma$.
\end{itemize}
Then there exists a 1-cell $h : E' \onecell E$ together with a 2-cell
$\phi : h \cdot e \twocell e'$. The pair $(h,\phi)$ is unique up to unique 2-cell, which means that
given another 1-cell $h' : E' \onecell E$ and another 2-cell $\phi' :
h' \cdot e \twocell e'$, there exists a unique 2-cell $\tau : h
\twocell h'$ such that $\tau \whiskerr e \vcomp \phi' = \phi$.
\end{defi}

\begin{prob}
\label{prob:equifier}
Given algebras $A$ and $B$ for $\sign$, given 1-cells $f,g :
A \onecell B$ and 2-cells $\beta,\gamma : f \twocell g$, to construct the
equifier of $\beta$ and $\gamma$ in $\alg(\sign)$.
\end{prob}

\begin{construction}{prob:equifier}\label{cons:equifier}
Similar to the construction of the inserter, we define the equifier
of $\beta$ and $\gamma$ as the total algebra $\TotalAlg Y$ of the following
displayed algebra $Y$ over $A$.
\begin{itemize}
\item The underlying family $Y$ of 1-types over $A$ is $Y(x) \eqdef \beta(x) = \gamma(x)$.
\item
For each $x : \polyAct{\pointconstr}{A}$, we are required to construct
a map $\DispAlgPoint{Y} : \polyDact{\pointconstr}{Y}(x) \rightarrow
\beta(\AlgPoint{A}(x)) = \gamma (\AlgPoint{A}(x))$. Assume given $\pover{x}
: \polyDact{\pointconstr}{Y}(x)$. By induction on the polynomial
$\pointconstr$, it is possible to derive from $\pover{x}$ a path $p
: \polyAct{\pointconstr}{\beta}(x) = \polyAct{\pointconstr}{\gamma}(x)$.  We
define $\DispAlgPoint{Y}(\pover{x})$ as the following concatenation of
paths:
\begin{align*}
\beta(\AlgPoint{A}(x))
&= \hconcat{\beta(\AlgPoint{A}(x))}{\refl{g(\AlgPoint{A}(x))}} \\
&= \hconcat{\beta(\AlgPoint{A}(x))}{(\hconcat{\AlgMapPoint{g}(x)}{\inverse{(\AlgMapPoint{g}(x))}})} \\
&= \hconcat{(\hconcat{\beta(\AlgPoint{A}(x))}{\AlgMapPoint{g}(x)})}{\inverse{(\AlgMapPoint{g}(x))}} \\
&= \hconcat{(\hconcat{\AlgMapPoint{f}(x)}{\ap{\AlgPoint{B}}{(\polyAct{\pointconstr}{\beta}(x))}})}{\inverse{(\AlgMapPoint{g}(x))}}
& \text{(by $\AlgCellPoint{\beta}(x)$)}\\
&= \hconcat{(\hconcat{\AlgMapPoint{f}(x)}{\ap{\AlgPoint{B}}{(\polyAct{\pointconstr}{\gamma}(x))}})}{\inverse{(\AlgMapPoint{g}(x))}}
& \text{(by $p$)}\\
&= \hconcat{(\hconcat{\gamma(\AlgPoint{A}(x))}{\AlgMapPoint{g}(x)})}{\inverse{(\AlgMapPoint{g}(x))}}
& \text{(by $\AlgCellPoint{\gamma}(x)$)}\\
&= \hconcat{\gamma(\AlgPoint{A}(x))}{(\hconcat{\AlgMapPoint{g}(x)}{\inverse{(\AlgMapPoint{g}(x))}})} \\
&= \hconcat{\gamma(\AlgPoint{A}(x))}{\refl{g(\AlgPoint{A}(x))}} \\
&= \gamma(\AlgPoint{A}(x))
\end{align*}
\item The construction of paths $\DispAlgPath{Y}{j}$ and globes $\DispAlgHomot{Y}{j}$ is
straightforward. These are respectively paths and paths between paths
in $Y(x)$ for some point $x$. Since $Y(x)$ is a proposition,
these constructions are all trivial.
\end{itemize}

The 1-cell $e$ is the first projection out of the total algebra $\TotalAlg Y$, which
is an algebra morphism by construction. The 2-cell $\epsilon$ is the
second projection out of $\TotalAlg Y$. The algebra $\TotalAlg Y$ satisfies the required
universal property of the equifier spelled out in Definition~\ref{def:equifier}.
\end{construction}

\section{The Free Algebra}
\label{sec:free_algebra}
In this section, we discuss the free algebra for a signature $\sign$ and show that it gives rise to a left biadjoint pseudofunctor from 1-types to algebras in 1-types for $\sign$.
From this, we conclude that each signature generates a pseudomonad on the bicategory of 1-types \cite{LACK2000179}.

We construct the free algebra for $\sign$ as a biinitial algebra for a modified version of $\sign$.
More specifically, suppose that we have a signature $\sign$ and a 1-type $A$.
We first construct another signature, called the \emph{free signature}, which has all the constructors of $\sign$ and an additional point constructor with arguments from $A$.
Then we define the free $\sign$-algebra on $A$ to be the biinitial algebra for the free signature.

\begin{defi}\label{def:free_alg}
Let $\sign$ be a signature and let $A$ be a 1-type.
Define the \fat{free signature} $\freesign{\sign}{A}$ as follows
\begin{itemize}
	\item $\pointconstr[\freesign{\sign}{A}] \eqdef \sumP{\pointconstr[\sign]}{\constantP{A}}$
	\item Note that each path endpoint $e : \pathendpoint{\pointconstr[\sign]}{P}{Q}$ gives rise to another path endpoint $\freepath{e} : \pathendpoint{\pointconstr[\freesign{\sign}{A}]}{P}{Q}$.
	We choose $\pathlabel[\freesign{\sign}{A}] \eqdef\pathlabel[\sign]$ and $\patharg[\freesign{\sign}{A}] \eqdef \patharg[\sign]$.
%and for each $j : \pathlabel[\sign]$ we take $\patharg[\freesign{\sign}{A}]{j} = \patharg[\sign]{j}$.
	Lastly, we define the endpoints $\pathleft[\freesign{\sign}{A}]_j$ and $\pathright[\freesign{\sign}{A}]_j$ to be $\freepath{\pathleft[\sign]_j}$ and $\freepath{\pathright[\sign]_j}$ respectively.
	\item To define the homotopy endpoints of $\freesign{\sign}{A}$, we first note that each $h : \homotendpoint{\pathleft[\sign]}{\pathright[\sign]}{a}{b}{s}{t}$ gives rise to a homotopy endpoint $\freehomot{h} : \homotendpoint{\pathleft[\freesign{\sign}{A}]}{\pathright[\freesign{\sign}{A}]}{\freepath{a}}{\freepath{b}}{\freepath{s}}{\freepath{t}}$.
	Now we define $\homotleft[\freesign{\sign}{A}]$ and $\homotright[\freesign{\sign}{A}]$ to be $\freehomot{\homotleft[\sign]}$ and $\freehomot{\homotright[\sign]}$ respectively.
\end{itemize}
We define the \fat{free $\sign$-algebra} on $A$ to be the biinitial $\freesign{\sign}{A}$-algebra.
The free $\sign$-algebra on $A$ is denoted by $\freealg{\sign}{A}$ and the inclusion is denoted by $\freealginc{\sign} : A \rightarrow \freealg{\sign}{A}$.
\end{defi}

Note that the free algebra exists by Construction \ref{constr:hit_exist}.
The free signature of the signature for monoidal objects (Subsection \ref{sec:monoidal_object}) is given as follows

\begin{lstlisting}[mathescape=true]
Inductive $\freemonoid{A}$ :=
| $\freemonoidinc$ : $A \rightarrow \freemonoid{A}$
| $\monobjunit$ : $\freemonoid{A}$
| $\monobjtensor$ : $\freemonoid{A} \rightarrow \freemonoid{A} \rightarrow \freemonoid{A}$
| $\monobjlambda$ : $\depprod{(x : \freemonoid{A})}{\monobjtensor(\monobjunit,x) = x}$
| $\monobjrho$ : $\depprod{(x : \freemonoid{A})}{\monobjtensor(x,\monobjunit) = x}$
| $\monobjalpha$ : $\depprod{(x ,y,z: \freemonoid{A})}{\monobjtensor(x,(\monobjtensor(y,z)))=\monobjtensor(\monobjtensor(x,y),z)}$
| $\monobjtr$ : $\depprod{(x,y : \freemonoid{A})}{\ap{(\Lam z. \monobjtensor(x,z))}{(\monobjlambda(y))}} = \concat{\monobjalpha(x,\monobjunit,y)}{\ap{(\Lam z. \monobjtensor(z,y))}{(\monobjrho(x))}}$
| $\monobjpent$ : $\depprod{(w,x,y,z : \freemonoid{A})}{}$
    $\concat{\monobjalpha(w,x,\monobjtensor(y,z))}{\monobjalpha(\monobjtensor(w,x),y,z)}$
    =
    $\concat{\ap{(\Lam v. \monobjtensor(w,v))}{(\monobjalpha(x,y,z))}}{\concat{\monobjalpha(w,\monobjtensor(x,y),z)}{\ap{(\Lam v. \monobjtensor(v,z))}{(\monobjalpha(w,x,y))}}}$
\end{lstlisting}

Note that this corresponds to the free monoidal groupoid as given by Piceghello \cite{Piceghello19}.
To construct the desired left biadjoint pseudofunctor, we first give a more convenient formulation of the biinitiality of $\freealg{\sign}{A}$.
By stating the mapping properties with algebras for $\sign$, we can apply them directly when constructing the biadjunction.

\begin{cor}
\label{cor:free_alg_ump}
Let $\sign$ be a signature and let $A$ be a 1-type.
Suppose that we have $Y : \algM(\sign)$ and $\iota_Y : A \rightarrow Y$.
\begin{itemize}
	\item There is an algebra morphism $f : \freealg{\sign}{A} \onecell Y$ such that for all $a : A$, we have $f(\freealginc{\sign}(a)) = \iota_Y(a)$.
	\item
	Given algebra morphisms $f_1, f_2 : \freealg{\sign}{A} \onecell Y$
	and paths $p_1 : f_1(\freealginc{\sign}(a)) = \iota_Y(a)$ and $p_2 : f_2(\freealginc{\sign}(a)) = \iota_Y(a)$ for each $a : A$,
	there is a unique 2-cell $\tc : f_1 \twocell f_2$
	such that $\tc(\freealginc{\sign}(a)) \vcomp \AlgMapPoint{f}(\inr(a)) = \AlgMapPoint{g}(\inr(a))$ for each $a : A$.
\end{itemize} 
\end{cor}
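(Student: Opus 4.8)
The plan is to deduce the statement directly from the biinitiality of $\freealg{\sign}{A}$ in $\algM(\freesign{\sign}{A})$ (Definition \ref{def:free_alg}, which exists by Construction \ref{constr:hit_exist}), by setting up a dictionary that turns $\freesign{\sign}{A}$-algebra data into $\sign$-algebra data equipped with a map out of $A$. The crucial point is that the point constructor of the free signature is $\sumP{\pointconstr[\sign]}{\constantP{A}}$, so that for any 1-type $X$ a map $\polyAct{\sumP{\pointconstr[\sign]}{\constantP{A}}}{X} \onecell X$ is the same as a pair consisting of a $\sign$-point operation $\polyAct{\pointconstr[\sign]}{X} \onecell X$ and a map $A \onecell X$. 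First I would make this precise: giving an object of $\algM(\freesign{\sign}{A})$ is equivalent to giving an object $Y$ of $\algM(\sign)$ together with a map $\iota_Y : A \onecell Y$; and likewise for 1-cells and 2-cells, where the extra datum is respectively a path $f(\iota_X(a)) = \iota_Y(a)$ and a coherence relating it to the underlying $\sign$-2-cell.

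The technical heart is a compatibility lemma, proved by induction on the grammar of Figures \ref{fig:path_ep} and \ref{fig:homot_ep}: for an endpoint $e$ the interpretation $\semE{\freepath{e}}$ on the extended algebra $(X,(c_1,\iota))$ agrees with $\semE{e}$ on $(X,c_1)$, and similarly $\semH{\freehomot{h}}$ agrees with $\semH{h}$. This holds because the lifting sending $e$ to $\freepath{e}$ only replaces the constructor endpoint $\constr$ by $\comp{\inle}{\constr}$, whose interpretation factors through the left summand and therefore picks out exactly $c_1$; the remaining constructors are preserved verbatim, so the induction goes through routinely. The same lemma at the level of morphisms shows that the path-preservation $2$-paths of a $\sign$-morphism are exactly what is needed for the lifted $\freesign{\sign}{A}$-morphism, and at the level of $2$-cells that the left-summand coherences coincide. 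Using this dictionary, from $(Y, \iota_Y)$ I build an algebra $\widetilde{Y} : \algM(\freesign{\sign}{A})$ whose $\sign$-part is $Y$ and whose $A$-part is $\iota_Y$.

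For the first bullet, biinitiality supplies a $\freesign{\sign}{A}$-morphism $\freealg{\sign}{A} \onecell \widetilde{Y}$; forgetting the $A$-part yields the desired $\sign$-algebra morphism $f : \freealg{\sign}{A} \onecell Y$, and its point-preservation path on the right summand, $\AlgMapPoint{f}(\inr(a))$, is by construction a path $f(\freealginc{\sign}(a)) = \iota_Y(a)$, since the constant-polynomial action on $\inr(a)$ is trivial and $\freealginc{\sign}$ is precisely the $\inr$-component of the point constructor. For the second bullet, the data $(f_1, p_1)$ and $(f_2, p_2)$ assemble, via the dictionary, into two $\freesign{\sign}{A}$-morphisms $\freealg{\sign}{A} \onecell \widetilde{Y}$ whose underlying $\sign$-morphisms are $f_1$ and $f_2$; here the $p_i$ are exactly the right-summand point-preservation paths. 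Biinitiality then gives a unique $\freesign{\sign}{A}$-$2$-cell between them, whose underlying $\sign$-$2$-cell $\tc : f_1 \twocell f_2$ is the claimed one. Its defining coherence is the right-summand instance of the $2$-cell law $\AlgCellPoint{\tc}$; because the action of a $2$-cell on a constant polynomial is reflexivity, that law collapses to $\tc(\freealginc{\sign}(a)) \vcomp \AlgMapPoint{f}(\inr(a)) = \AlgMapPoint{g}(\inr(a))$, and uniqueness of $\tc$ follows from uniqueness of the lifted $2$-cell.

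The main obstacle I anticipate is the compatibility lemma: one must check, by a somewhat lengthy induction over path and homotopy endpoints, that lifting along $\inl$ leaves all interpretations (on objects, on morphisms, and on $2$-cells) unchanged, and in particular that every coherence cell introduced by the lifting reduces to the identity on the constant summand. Granting this, the rest is pure bookkeeping: the universal property asked for is just the biinitiality of $\freealg{\sign}{A}$ read through the equivalence between $\freesign{\sign}{A}$-structure and $\sign$-structure equipped with a map from $A$.
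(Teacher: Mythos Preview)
Your approach is correct and matches what the paper intends: the corollary is stated there without proof, as an immediate reformulation of the biinitiality of $\freealg{\sign}{A}$ in $\algM(\freesign{\sign}{A})$, and your dictionary between $\freesign{\sign}{A}$-algebras and $\sign$-algebras equipped with a map out of $A$ (together with the endpoint-compatibility induction) is exactly the unpacking that justifies it. The paper leaves all of this implicit, so your write-up is more detailed than the original, but not different in substance.
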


Now we have enough in place to construct the desired biadjunction.

\begin{prob}
\label{prob:free_alg}
Given a signature $\sign$, to construct a coherent biadjunction $\freealgpsfun{\sign} \dashv \underlying_{\sign}$ where $\underlying_{\sign} : \pseudo(\algM(\sign), \onetypes)$ is the forgetful pseudofunctor.
\end{prob}

\begin{construction}{prob:free_alg}
We only indicate how to construct the left biadjoint.
To construct $\freealgpsfun{\sign} : \pseudo(\onetypes, \algM(\sign))$, we need to give an algebra $\freealgpsfun{\sign}(A)$ for $A : \onetypes$.
We define $\freealgpsfun{\sign}(A)$ to be the biinitial algebra for $\freesign{\sign}{A}$.
Pseudofunctoriality follows from Corollary \ref{cor:free_alg_ump}.
\end{construction}

From all of this, we conclude that each signature gives rise to a pseudomonad on 1-types.

\begin{prop}[Proposition 5.1 in \cite{LACK2000179}]
Each coherent biadjunction gives rise to a pseudomonad.
\end{prop}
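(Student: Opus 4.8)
The plan is to read off the pseudomonad directly from the data of the coherent biadjunction $L \dashv R$, where $L : \pseudo(\B_1, \B_2)$, $R : \pseudo(\B_2, \B_1)$, with unit $\eta$, counit $\varepsilon$, and triangle modifications $\tau_1, \tau_2$. First I would take the underlying endopseudofunctor of the pseudomonad on $\B_1$ to be $T \eqdef L \cdot R : \pseudo(\B_1, \B_1)$, the unit to be $\eta : \pstrans{\id(\B_1)}{T}$ itself, and the multiplication $\mu : \pstrans{T \cdot T}{T}$ to be obtained from the counit by whiskering: on the composite $L \cdot R \cdot L \cdot R$ one contracts the middle occurrence of $R \cdot L$ by $\varepsilon$, so $\mu \eqdef L \whiskerl \varepsilon \whiskerr R$ reassociated by the structural cells $\alpha$ of $\B_1$. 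All three are pseudotransformations by composition of pseudotransformations and whiskering.

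Next I would supply the invertible modifications witnessing the monad laws. The two unit laws, stating that $T \whiskerl \eta \vcomp \mu$ and $\eta \whiskerr T \vcomp \mu$ are invertibly isomorphic to $\id(T)$, come straight from whiskering the triangle modifications: $L \whiskerl \tau_1$ yields the right unit isomorphism (since $L \whiskerl (R \whiskerl \eta) = T \whiskerl \eta$ and $L \whiskerl (\varepsilon \whiskerr R)$ is exactly $\mu$), while $\tau_2 \whiskerr R$ yields the left unit one, after absorbing the unitors $\lambda$ and $\rho$ appearing in the definitions of $\tau_1$ and $\tau_2$. The associativity modification, comparing $\mu \whiskerr T \vcomp \mu$ with $T \whiskerl \mu \vcomp \mu$, expresses that the two copies of $R \cdot L$ inside $L \cdot R \cdot L \cdot R \cdot L \cdot R$ may be contracted by $\varepsilon$ in either order; this isomorphism is assembled canonically from the associators of $\B_1$ together with the naturality $2$-cells of $\varepsilon$.

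Finally I would verify the coherence axioms of a pseudomonad. After whiskering everything back down to strings in $L$ and $R$, the unit coherence triangle and the multiplication pentagon reduce, modulo coherence for bicategories (which discharges every diagram built solely from associators and unitors), to exactly the two equations in the definition of \emph{coherent} biadjunction, i.e.\ to the two swallowtail identities relating $\tau_1$, $\tau_2$, the identitors and compositors of $L$ and $R$, and the naturality cells of $\eta$ and $\varepsilon$. The hard part will be this last step: the pseudomonad coherence conditions are large pasting diagrams, and the real content is to recognise, once all purely structural isomorphisms are cancelled, that what remains is precisely the coherent-biadjunction hypothesis. Since this is the content of Proposition~5.1 of \cite{LACK2000179}, the cleanest route is to match our chosen $\mu$ and our $\tau_1, \tau_2$ against Lack's axioms and invoke that result, with the formalization carrying out the remaining bookkeeping.
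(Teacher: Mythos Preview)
The paper does not give its own proof of this proposition: it is stated with attribution to Lack (Proposition~5.1 of \cite{LACK2000179}) and used as a black box to derive the corollary that each signature induces a pseudomonad on $\onetypes$. There is therefore no internal argument to compare against.

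Your sketch is the standard construction and is correct in outline: take $T \eqdef L \cdot R$, unit $\eta$, multiplication from whiskering $\varepsilon$, unit modifications from whiskered $\tau_1$ and $\tau_2$, associativity from naturality of $\varepsilon$, and reduce the pseudomonad coherence axioms to the two swallowtail identities that define a coherent biadjunction. That last reduction is exactly what Lack carries out, and you yourself end by appealing to his result for it. So in effect you are doing what the paper does, only with the intermediate data spelled out; the paper simply cites the result and moves on.
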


\begin{cor}
\label{cor:pseudomonadofsig}
Each signature gives rise to a pseudomonad on 1-types.
\end{cor}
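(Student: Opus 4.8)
The plan is to obtain the pseudomonad with essentially no further work, by feeding the coherent biadjunction produced in Problem~\ref{prob:free_alg} into the cited structural result. That construction provides, for each signature $\sign$, a coherent biadjunction $\freealgpsfun{\sign} \dashv \underlying_{\sign}$, where $\freealgpsfun{\sign} : \pseudo(\onetypes, \algM(\sign))$ is the free-algebra pseudofunctor and $\underlying_{\sign} : \pseudo(\algM(\sign), \onetypes)$ is the forgetful pseudofunctor. Since this is stated as a biadjunction from $\onetypes$ to $\algM(\sign)$, both the carrier and all the coherence data of the eventual pseudomonad live over $\onetypes$, which is exactly where we want it.

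First I would simply invoke the preceding Proposition (Proposition 5.1 in \cite{LACK2000179}), which asserts that every coherent biadjunction induces a pseudomonad. Applying it to $\freealgpsfun{\sign} \dashv \underlying_{\sign}$ yields a pseudomonad on $\onetypes$ whose underlying endo-pseudofunctor is the composite $\freealgpsfun{\sign} \cdot \underlying_{\sign}$ (in the diagrammatic composition order used throughout the paper: first form the free algebra, then forget its structure). Its unit is the unit $\eta : \pstrans{\id(\onetypes)}{\freealgpsfun{\sign} \cdot \underlying_{\sign}}$ of the biadjunction, and its multiplication is assembled from the counit $\varepsilon$ by whiskering with $\freealgpsfun{\sign}$ and $\underlying_{\sign}$; the associativity and unit coherences of the pseudomonad are precisely the ones guaranteed by the two coherence equations in the definition of coherent biadjunction.

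The crucial input here is the \emph{coherence} of the biadjunction, and this is exactly what Problem~\ref{prob:free_alg} already supplies, so the corollary is a formal consequence: I expect the genuine difficulty to have been discharged earlier, in establishing that the free–forgetful biadjunction satisfies the two swallowtail-type coherence identities. Once that is in hand, the step from biadjunction to pseudomonad is purely a matter of citing the cited Proposition, and the corollary follows by composing Problem~\ref{prob:free_alg} with that result.
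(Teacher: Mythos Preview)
Your proposal is correct and matches the paper's approach: the corollary is obtained immediately by applying the cited Proposition (that every coherent biadjunction yields a pseudomonad) to the coherent biadjunction $\freealgpsfun{\sign} \dashv \underlying_{\sign}$ constructed in Problem~\ref{prob:free_alg}. The paper gives no further argument beyond juxtaposing these two results, and your elaboration of the underlying pseudomonad data is accurate.
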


\section{The First Isomorphism Theorem}
\label{sec:isomorphism_theorem}
The first isomorphism theorem is one of the classical results in universal algebra.
While this statement is usually about algebraic structures in sets \cite{lynge2019}, we look at a generalization to the 1-truncated case.
More specifically, we formulate and prove this theorem for algebras for the signatures defined in Definition \ref{def:signature}.

Before stating and proving the isomorphism theorem, we need to generalize several notions from universal algebra to the 1-truncated case.
First of all, following the approach of Section \ref{sec:finite_limits}, we define the image of an algebra morphism using displayed algebras.
Second of all, we define congruence relations and their quotients.
The main idea here is to use Construction \ref{constr:alg_biadj}, where
we lifted the groupoid quotient to a biadjunction from $\algG{\sign}$ to $\algebra{\sign}$.
This result also indicates how congruence relations will be defined.
Basically, a congruence relation on $X : \algebra{\sign}$ is a groupoid structure on $X$ which gives an algebra of groupoids.
Once these notions are in place, we can formulate and prove the first isomorphism theorem similarly to the set-theoretical version.

\subsection{The Image}
We start by defining the image of an algebra morphism.
For each $f : A \onecell B$ we are after a factorization $A \onecell \Im{f} \onecell B$.

\begin{prob}
\label{prob:image}
Given algebras $A$ and $B$ for a signature $\sign$ and an algebra morphism $f : A \onecell B$,
to construct an algebra $\Im(f)$ and morphisms $\ImProj{f} : A \onecell \Im{f}$ and $\ImInc{f} : \Im(f) \onecell B$. 
\end{prob}

\begin{construction}{prob:image}
We define $\Im(f)$ to be the total algebra $\TotalAlg Y$ of the following displayed algebra $Y$ over $B$:
\begin{itemize}
\item The underlying family of 1-types over $B$ is $Y(x) \eqdef \exists a : A.\ f(a) = x$.
\item For each $x : \polyAct{\pointconstr}{A}$, we are required to construct
a map $\DispAlgPoint{Y} : \polyDact{\pointconstr}{Y}(x) \rightarrow
\exists a : A.\ f(a) = \AlgPoint{B}(x)$. Suppose we have $\pover{x}
: \polyDact{\pointconstr}{Y}(x)$. By induction on the polynomial
$\pointconstr$, it is possible to derive from $\pover{x}$ an
inhabitant of $\exists y
: \polyAct{\pointconstr}{A}.\ \polyAct{\pointconstr}{f}(y) = x$. By
invoking the elimination principle of propositional truncation, it is sufficient to
construct a map $f : \Sum {y
: \polyAct{\pointconstr}{A}}.\ \polyAct{\pointconstr}{f}(y) =
x \rightarrow \Sum {a : A}.\ f(a) = \AlgPoint{B}(x)$ in
order to define the map $\DispAlgPoint{Y}$.  So, assume that we have
$y : \polyAct{\pointconstr}{A}$ and a path $p
: \polyAct{\pointconstr}{f}(y) = x$. Take $f(y,p)$ to be the pair
consisting of $\AlgPoint{A}(y)$ and the path
\[
f(\AlgPoint{A}(y))
\stackrel{\AlgMapPoint{f}(y)}{=}
\AlgPoint{B}(\polyAct{\pointconstr}{f}(y))
\stackrel{\ap {\AlgPoint{B}} p}{=}
\AlgPoint{B}(x)
\]
\item Since $Y(x)$ is a proposition for all $x : B$, the construction of
paths $\DispAlgPath{Y}{j}$ and globes $\DispAlgHomot{Y}{j}$ is
straightforward.
\end{itemize}

The map $\ImInc{f}$ is the first projection and $\ImProj{f}$ sends $x : A$ to
$(f(x),\PC(x,\refl{f(x)}))$, where $\PC$ is the point constructor of
propositional truncation. 
\end{construction}

\subsection{Congruence Relations}
Next we define congruence relations and we show that each such relation gives an algebra in groupoids.
The difficulty here is constructing the homotopy constructor of that algebra.
For this reason, we define congruence relations in two steps, and we start by defining \emph{path congruence relations} and showing that these give rise to path algebras in groupoids.

\begin{defi}
Let $\sign$ be a signature and let $X : \algebra{\sign}$.
A \fat{path congruence relation} $R$ on $X$ consists of
\begin{itemize}
	\item a groupoid structure $R$ on $X$;
	\item for each $x, y : \polyAct{\pointconstr[\sign]}{X}$ and $f : \mor{\polyAct{\pointconstr[\sign]}{R}}{x}{y}$, a morphism $\AlgPoint{R} : \mor{R}{\AlgPoint{X}(x)}{\AlgPoint{X}(y)}$ and a proof that this assignment is functorial;
	\item a proof that the assignment $\Lam x, \idtoiso(\AlgPath{X}{j}(x))$ gives a natural transformation $\semEG{\pathleft[\sign]_j} \twocell \semEG{\pathright[\sign]_j}$ for each $j : \pathlabel[\sign]$.
\end{itemize}
\end{defi}

\begin{prob}
\label{prob:path_congruence_relation_to_grpd}
Given a path congruence relation $R$,
to construct $\toGrpdPathAlg{R} : \pathalgG{\sign}$.
\end{prob}

\begin{construction}{prob:path_congruence_relation_to_grpd}
We only discuss the data involved.
\begin{itemize}
	\item The carrier $\toGrpdPathAlg{R}$ is the groupoid whose type of objects is $X$ and whose morphisms are given by $R$.
	\item Next we define a functor $\semPG{\pointconstr[\sign]}(\toGrpdPathAlg{R}) \onecell \toGrpdPathAlg{R}$. It is defined to be $\AlgPoint{X}$ on objects and $\AlgPoint{R}$ on morphisms.
	\item Lastly, we define a natural transformation $\semEG{\pathleft[\sign]_j} \twocell \semEG{\pathright[\sign]_j}$ for each $j : \pathlabel[\sign]$. The component function of this transformation is $\Lam x, \idtoiso(\AlgPath{X}{j}(x))$. \qedhere
\end{itemize}
\end{construction}

To show that $\toGrpdPathAlg{R}$ is an algebra, we also need to give the homotopy constructor.
Since this constructor has both a point and a path constructor, we need to check an equality of morphisms of $\toGrpdPathAlg{R}$, which depend on points of $X$ and morphisms of $\toGrpdPathAlg{R}$.
As a result, we cannot reuse the homotopy constructor of $X$, because it only depends on points of $X$ and paths in $X$.
This means that to construct the groupoid algebra, the homotopy constructor needs to be constructed from scratch.
This is reflected in the following definition.

\begin{defi}
A path congruence relation $R$ is a \fat{congruence} relation if $\toGrpdPathAlg{R}$ is an algebra in groupoids.
\end{defi}

Note that each congruence relation $R$ gives rise to an algebra $\toGrpdAlg{R} : \algG{\sign}$.
By Construction \ref{constr:alg_biadj}, the groupoid quotient lifts to $\alggquot : \pseudo(\algG{\sign}, \algebra{\sign})$.
Hence, we can construct an algebra $\alggquot(\toGrpdAlg{R}) : \algebra{\sign}$ from a congruence relation $R$.
Note that the map $\gcl : X \rightarrow \alggquot(\toGrpdAlg{R})$ is a homomorphism of algebras.
Since $\alggquot$ is a left biadjoint, we also get a mapping property for $\alggquot(\toGrpdAlg{R})$.

\begin{rem}
\label{remark:mapping_property_congruence}
Suppose that we have two algebras $X, Y : \algebra{\sign}$ and a congruence relation $R$ on $X$.
From the biadjunction, we get a unique morphism $\overline{g} : \alggquot{\toGrpdAlg{R}} \onecell Y$ from a functor $g : R \onecell \algpgrpd{Y}$ such that $\overline{g}(\gcl(x)) = g(x)$ and $\ap{\overline{g}}{(\gcleq(p))} = g(p)$ where $p$ is a morphism in $R$ from $x$ to $y$.

Using this, we can give conditions for when an algebra morphism $f : X \onecell Y$ factors through $\alggquot{\toGrpdAlg{R}}$.
We can unfold the definition to find the necessary ingredients for the factorization.
For example, we need to show that $f$ lifts to a functor, which means we need to provide a path $f_2 : f(x) = f(y)$ for each $x, y : X$ and $p : \mor{R}{x}{y}$, and equalities $f_2(\idpath(x)) =\idpath(f_2(x))$ and $f_2(p \cdot q) = f_2(p) \vcomp f_2(q)$.
\end{rem}

\subsection{The First Isomorphism Theorem}
Now let us prove a generalization of the first isomorphism theorem to the 1-truncated case.
Note that our proof follows the same steps as the proof of the first isomorphism theorem for sets.
Let us start by characterizing adjoint equivalences in $\algM(\sign)$.

\begin{prop}
\label{prop:algebra_adjequiv}
Suppose we have a signature $\sign$, algebras $X, Y : \algM(\sign)$, and an algebra morphism $f : X \onecell Y$.
Then $f$ is an adjoint equivalence in $\algM(\sign)$ if its carrier is an adjoint equivalence in $\onetypes$.
\end{prop}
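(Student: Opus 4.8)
The plan is to exploit the layered construction of $\algM(\sign)$ as a tower of displayed bicategories over $\onetypes$: the layer $\DFAlg(\semP{\pointconstr})$ giving $\prealg{\sign}$, the dependent product of $\DCell$'s from Definition~\ref{def:pathalg} giving $\pathalg{\sign}$, and finally the full subbicategory $\FSub$ of Example~\ref{ex:fullsub} giving $\algM(\sign)$. I would use the result of Ahrens \etal \ \cite{bicatjournal} that adjoint equivalences in a total bicategory $\total{\D}$ decompose as a pair consisting of an adjoint equivalence in the base together with a displayed adjoint equivalence over it. Consequently the projection reflects adjoint equivalences as soon as the displayed $1$-cell component of $f$ extends to a displayed adjoint equivalence. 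Applying this three times reduces the claim to constructing, at each layer, the displayed adjoint-equivalence data lying over the carrier adjoint equivalence in $\onetypes$.

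First I would dispatch the two outer layers, which are essentially free. For the full subbicategory the displayed $1$- and $2$-cells are inhabitants of the unit type, so adjoint equivalences are reflected immediately (the object property is a proposition and already holds for $X$ and $Y$). For the $\DCell$ layer the displayed $1$-cells are equalities of $2$-cells and the displayed $2$-cells are trivial; since $2$-cells of the base form a set these equality types are propositions, so the only content is to produce the displayed $1$-cell lying over the inverse carrier map, which follows from naturality of the pseudotransformations $\semE{\pathleft_j}$ and $\semE{\pathright_j}$ together with the displayed $1$-cell already given over $f$.

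The substantive step is the $\DFAlg$ layer. Here the displayed $1$-cell of $f$ is an invertible $2$-cell $\tcC_f : h_X \cdot f \twocell \semP{\pointconstr}(f) \cdot h_Y$, and I must construct a corresponding invertible $2$-cell $\tcC_g$ over the inverse carrier map $g$. I would obtain $\tcC_g$ as the \emph{mate} of $\tcC_f^{-1}$ under the carrier adjoint equivalence: whisker $\tcC_f^{-1}$ with $h_Y$ and $h_X$ and conjugate by the unit and counit of the carrier equivalence and by the functoriality cells of $\semP{\pointconstr}$, so as to produce a $2$-cell $h_Y \cdot g \twocell \semP{\pointconstr}(g) \cdot h_X$; its invertibility is immediate since every cell used is invertible. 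The displayed unit and counit lying over the carrier unit and counit are then forced: being equalities in the set of $2$-cells they are proposition-valued, so once the displayed $1$-cell of the pseudoinverse is in place the displayed triangle coherences hold automatically.

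Assembling the three layers yields a displayed adjoint equivalence over the carrier adjoint equivalence, whose total is the desired adjoint equivalence $f$ in $\algM(\sign)$. The main obstacle is the mate construction at the $\DFAlg$ layer: arranging the whiskerings and the insertion of the (co)unit and of the compositor and identitor of $\semP{\pointconstr}$ so that the result typechecks into a $2$-cell with the correct source and target. As an alternative that sidesteps the triangle identities, I note that $\algM(\sign)$ is a univalent bicategory, built by the methods of Ahrens \etal \ \cite{bicatjournal} as already observed after Proposition~\ref{thm:initial_alg_sem}; since being an adjoint equivalence is then a proposition and every equivalence in a bicategory promotes to an adjoint equivalence, it suffices to exhibit a pseudoinverse $g$ together with invertible $2$-cells $f \cdot g \iso \id$ and $g \cdot f \iso \id$, all of whose algebra $2$-cell coherence conditions are proposition-valued and hence hold for free.
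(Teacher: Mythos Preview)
The paper does not give a proof of this proposition in the text; it is stated and then immediately used in the proof of Theorem~\ref{thm:iso_thm}, with the argument deferred to the Coq formalization. Your layer-by-layer strategy via displayed adjoint equivalences is exactly in the spirit of the paper's displayed-bicategory methodology and is the natural way to carry this out; the mate construction at the $\DFAlg$ layer is indeed the only nontrivial step, and your description of it is correct.

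One small caution on the alternative route: saying the algebra $2$-cell coherences ``hold for free'' because they are proposition-valued is slightly too quick. In a $1$-type the coherence $\AlgCellPoint{\theta}(x)$ is an equality of paths and hence a proposition, so its proof is unique \emph{if it exists}; but you still must establish that the required equation is inhabited. In practice these equations do follow from the mate construction together with the pseudofunctoriality cells of $\semP{\pointconstr}$ and the triangle identities of the carrier equivalence, but that verification is precisely the content of the argument and should not be waved away. With that adjustment your proposal is sound.
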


Before we state and prove the isomorphism theorem, we define the kernel of algebra morphisms.

\begin{defi}[Kernel]
Given a signature $\sign$, algebras $X, Y : \algM(\sign)$, and an algebra morphism $f : X \onecell Y$, we define a groupoid algebra $\Ker(f)$ on $X$, called the \fat{kernel} of $f$, by setting $R(x, y) \eqdef f(x) = f(y)$.
\end{defi}

\begin{thm}\label{thm:iso_thm}
Let $\sign$ be a signature and suppose we have algebras $X, Y : \algM(\sign)$ and an algebra morphism $f : X \onecell Y$.
Then
\begin{itemize}
	\item there is a unique map $\overline{f} : \alggquot(\toGrpdAlg{\Ker(f)}) \onecell Y$ such that we have propositional equalities $p : \depprod{(x : X)}{\overline{f}(\gcl(x)) = f(x)}$ and $\concat{\ap{\overline{f}}{(\gcleq(r))}}{p(y)} = \concat{p(x)}{r}$ where $r : f(x) = f(y)$;
	\item there is an adjoint equivalence $\widetilde{f} : \alggquot(\toGrpdAlg{\Ker(f)}) \onecell \Im(f)$ such that $\pi_1(\widetilde{f}(x)) = \overline{f}(x)$.
\end{itemize}
\end{thm}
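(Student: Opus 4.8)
The plan is to mirror the classical proof of the first isomorphism theorem, with quotients replaced by the lifted groupoid quotient $\alggquot$ of Construction \ref{constr:alg_biadj} and the surjection/embedding factorization replaced by its $1$-truncated analogue. I would treat the two bullet points in turn, the first producing $\overline{f}$ and the second factoring it through $\Im(f)$ and proving that factorization is an adjoint equivalence.

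For the first bullet point, I would obtain $\overline{f}$ directly from the mapping property of $\alggquot(\toGrpdAlg{\Ker(f)})$ recorded in Remark \ref{remark:mapping_property_congruence}. By that remark, an algebra morphism into $Y$ arises from a functor $g : \Ker(f) \onecell \algpgrpd{Y}$, and the induced morphism $\overline{g}$ satisfies $\overline{g}(\gcl(x)) = g(x)$ together with $\ap{\overline{g}}{(\gcleq(r))} = g(r)$. I take $g$ to be $f$ on objects, and on a morphism $r : \mor{\Ker(f)}{x}{y}$, which is by definition a path $r : f(x) = f(y)$, I let $g(r)$ be $r$ itself regarded as a morphism of $\algpgrpd{Y}$. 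Functoriality is immediate, since identities and composition in $\Ker(f)$ are computed as reflexivity and concatenation of paths, matching the groupoid structure of $\algpgrpd{Y}$. Setting $\overline{f} \eqdef \overline{g}$ then yields the stated equalities $p(x) : \overline{f}(\gcl(x)) = f(x)$ and the compatibility with $\gcleq$, while uniqueness is inherited from the uniqueness clause of the biadjunction.

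For the second bullet point, I would first factor $\overline{f}$ through the image. Since $\overline{f}(\gcl(x)) = f(x)$, every value of $\overline{f}$ merely lies in the image of $f$: on the point generators this is witnessed by the preimage $x$, and the assignment extends over the path, homotopy, and truncation constructors of the groupoid quotient because $\exists a : X.\, f(a) = b$ is a proposition. This yields an algebra morphism $\widetilde{f} : \alggquot(\toGrpdAlg{\Ker(f)}) \onecell \Im(f)$ with $\pi_1(\widetilde{f}(z)) = \overline{f}(z)$. It then remains to show $\widetilde{f}$ is an adjoint equivalence, and by Proposition \ref{prop:algebra_adjequiv} it suffices to show its carrier is an equivalence of $1$-types; I would do this by verifying that the carrier is simultaneously surjective and an embedding, which for a map of $1$-types forces it to be an equivalence. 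Surjectivity is exactly the defining property of $\Im(f)$. For the embedding property I would compare path spaces: an encode--decode argument for the groupoid quotient identifies $\gcl(x) = \gcl(y)$ with the kernel hom-set $\mor{\Ker(f)}{x}{y} = (f(x) = f(y))$, while on the image side a path $(b_1, u_1) = (b_2, u_2)$ reduces to a path $b_1 = b_2$ in $Y$ since the fibre $\exists a.\, f(a) = b$ is a proposition; under these identifications $\widetilde{f}$ acts as the identity, hence is an equivalence on all path spaces.

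The main obstacle I expect is precisely this path-space computation: the encode--decode characterization identifying $\gcl(x) = \gcl(y)$ with $f(x) = f(y)$, and the verification that $\widetilde{f}$ induces the identity under this identification and the corresponding reduction of paths in $\Im(f)$. By contrast, the construction of $\overline{f}$ is a direct application of the mapping property, and both the surjectivity and the factorization through the image are routine because the relevant fibres are propositions. Some care is also needed to confirm that $\widetilde{f}$ is genuinely an algebra morphism, but this follows from realizing $\Im(f)$ as the underlying factorization of $\overline{f}$ as an algebra morphism together with the fact that $\ImInc{f}$ is a monomorphism of algebras.
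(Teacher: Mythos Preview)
Your proposal is correct and matches the paper's approach: both obtain $\overline{f}$ and its uniqueness from Remark~\ref{remark:mapping_property_congruence}, then reduce the adjoint-equivalence claim for $\widetilde{f}$ via Proposition~\ref{prop:algebra_adjequiv} to showing the carrier has contractible fibers (equivalently, is surjective and an embedding), exactly as in the set-level first isomorphism theorem. The only cosmetic difference is that the paper also constructs $\widetilde{f}$ directly from Remark~\ref{remark:mapping_property_congruence} (by lifting $f$ to a functor into $\algpgrpd(\Im(f))$) rather than factoring $\overline{f}$ through $\Im(f)$ after the fact, which spares you the separate argument that the factorization is an algebra morphism.
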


\begin{proof}
Using Remark \ref{remark:mapping_property_congruence}, we can define the maps $\overline{f} : \alggquot(\toGrpdAlg{\Ker(f)}) \onecell Y$ and $\widetilde{f} : \alggquot(\toGrpdAlg{\Ker(f)}) \onecell \Im(f)$.
Note that from Remark \ref{remark:mapping_property_congruence}, we also get that $\overline{f}$ satisfies the required equalities and that $\overline{f}$ is unique.
To show that $\overline{f}$ is an adjoint equivalence, we use Proposition \ref{prop:algebra_adjequiv}, and we do that by showing that the fibers are contractible.
Proving that the fibers are inhabited and propositional is similar to proving the surjectivity and injectivity for the set-theoretical first isomorphism theorem \cite{lynge2019}.
\end{proof}

\section{Calculating Fundamental Groups}
\label{sec:fundamental_groups}
Let us finish by using Construction \ref{constr:hit_exist} to determine the fundamental group of some HITs.
Such results are often proven by encode-decode method \cite{LicataS13,LicataF14}, but we take a different approach.
We only use encode-decode to determine the path space of the groupoid quotient.
For the HITs considered in this section, we give a simpler description of the initial groupoid algebra, which fixes the fundamental group.
We start by determining the path space of the groupoid quotient.

\begin{prop}
\label{prop:groupoid_quot_encode_decode}
Let $G$ be a groupoid and let $x$ and $y$ be objects in $G$.
Then the types $\gcl(x) = \gcl(y)$ and $\mor{G}{x}{y}$ are equivalent.
\end{prop}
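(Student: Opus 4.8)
The plan is to compute the path space of $\gquotType(G)$ by the encode--decode method, using only the elimination principle of Figure~\ref{fig:gquot}. First I would fix an object $x : G$ and define a family $\constfont{code}_x : \functions{\gquotType(G)}{\hSet}$ by groupoid-quotient induction into the $1$-type $\hSet$ of sets. On points I set $\constfont{code}_x(\gcl(y)) \defeq \mor{G}{x}{y}$. For the path constructor $\gcleq$, given $g : \mor{G}{y}{z}$ I must produce a path $\mor{G}{x}{y} = \mor{G}{x}{z}$ in $\hSet$; by univalence it suffices to give the equivalence $h \mapsto \compgrpd{h}{g}$ (postcomposition), whose inverse is postcomposition with $\inverse{g}$. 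The obligation for $\ge$ holds because postcomposition with $\idgrpd{y}$ is the identity equivalence, and the obligation for $\gconcat$ holds because postcomposition with $\compgrpd{g_1}{g_2}$ is the composite of the two postcompositions, by associativity in $G$. Since these last two obligations are equalities living in $\hSet$, hence in a set, no further coherence is needed.

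Next I would define the two maps. The map $\constfont{encode} : \depprod{(p : \gquotType(G))}{(\gcl(x) = p) \function \constfont{code}_x(p)}$ is transport, $\constfont{encode}(p, q) \defeq \transp{q}(\idgrpd{x})$, using the basepoint $\idgrpd{x} : \mor{G}{x}{x} = \constfont{code}_x(\gcl(x))$. The map $\constfont{decode} : \depprod{(p : \gquotType(G))}{\constfont{code}_x(p) \function (\gcl(x) = p)}$ is defined by a second groupoid-quotient induction whose point case sends $g : \mor{G}{x}{y}$ to the path $\gcleq(g) : \gcl(x) = \gcl(y)$. The target family here is $p \mapsto (\constfont{code}_x(p) \function (\gcl(x) = p))$, a family of sets (the path types of the $1$-type $\gquotType(G)$ are sets), so the obligations for $\gcleq$, $\ge$, and $\gconcat$ are propositions, and they are discharged using the constructors $\gconcat$ and $\ge$ together with the groupoid laws.

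It then remains to prove that $\constfont{encode}$ and $\constfont{decode}$ are mutually inverse. For $\constfont{decode} \circ \constfont{encode} = \id$ I proceed by path induction, reducing $q : \gcl(x) = p$ to $\refl{\gcl(x)}$; then $\constfont{encode}$ returns $\idgrpd{x}$ and $\constfont{decode}$ returns $\gcleq(\idgrpd{x})$, which equals $\refl{\gcl(x)}$ by the constructor $\ge(x)$. For $\constfont{encode} \circ \constfont{decode} = \id$ I induct on $p$; on a point $\gcl(y)$ and a morphism $g : \mor{G}{x}{y}$ the goal becomes $\transp{\gcleq(g)}(\idgrpd{x}) = g$, and the cases for $\gcleq$, $\ge$, $\gconcat$ are automatic because the fibers are sets.

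The hard part is exactly this last computation: I must show that transport along $\gcleq(g)$ in the family $\constfont{code}_x$ is postcomposition by $g$. This combines two ingredients, namely that transporting along a path produced by univalence applies the underlying equivalence, and the computation rule for $\gcleq$ of Figure~\ref{fig:gquot} (which here holds only propositionally), which identifies the value of $\constfont{code}_x$ on $\gcleq(g)$ with the chosen equivalence. Granting this, the goal reduces to $\compgrpd{\idgrpd{x}}{g} = g$, i.e.\ left unitality in $G$. Assembling $\constfont{encode}$ and $\constfont{decode}$ with the two homotopies and specializing $p$ to $\gcl(y)$ yields the desired equivalence $(\gcl(x) = \gcl(y)) \adjequiv \mor{G}{x}{y}$.
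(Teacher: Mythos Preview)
Your proposal is correct and follows exactly the approach the paper indicates: the text states only that ``we only use encode-decode to determine the path space of the groupoid quotient'' and gives no further detail, so your argument is a faithful expansion of what the authors leave to the formalization. The key technical points---that $\hSet$ is a 1-type so the induction defining $\constfont{code}_x$ is well-typed, that the higher obligations for $\constfont{decode}$ and for the round-trip $\constfont{encode}\circ\constfont{decode}$ collapse because the relevant fibers are sets (respectively propositions), and that transport along $\gcleq(g)$ computes to postcomposition via the propositional computation rule plus univalence---are all handled correctly.
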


As a result, the type of paths between two points in a HIT is the type of morphisms in the initial groupoid algebra.
Since this algebra is unique, we can determine the fundamental group by finding a simpler description of the initial groupoid algebra.

\subsection{Circle}
\label{sec:circle_fund_group}
Recall that the circle is defined as the following HIT

\begin{lstlisting}[mathescape=true]
Inductive $\circleS$ :=
| $\baseS$ : $\circleS$
| $\SLoop$ : $\baseS = \baseS$
\end{lstlisting}

We can define a signature $\circleS$ that represents this HIT. Notice that the HIT generated by this signature includes also
a 1-truncation constructor, which is superflous in this case since $\circleS$ is already provably
1-truncated.

Next we construct a groupoid algebra $\circlegrpd$ of this signature and we prove that $\circlegrpd$ is biinitial.

\begin{defi}
We define a groupoid $\circlegrpd$ as follows
\begin{itemize}
	\item the type of objects is the unit type;
	\item the type of morphisms from $\unitt$ to $\unitt$ is the type integers.
\end{itemize}
\end{defi}

\begin{prob}
\label{prob:initial_grpd_alg_circle}
To construct an $\circleS$-algebra structure on $\circlegrpd$.
\end{prob}

\begin{construction}{prob:initial_grpd_alg_circle}
To construct the desired algebra structure, we first need to define a functor $\circlegrpdbase$ from the unit category to $\circlegrpd$.
This functor sends the unique element to $\unitt$.
Furthermore, we need to define a natural transformation $\circlegrpdloop$ from $\circlegrpdbase$ to $\circlegrpdbase$.
On each component, this transformation is defined to be $1$.
\end{construction}

\begin{prop}
The groupoid $\circlegrpd$ is biinitial in $\algG{\circleS}$.
\end{prop}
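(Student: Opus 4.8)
The plan is to verify the three conditions in the definition of a biinitial object directly for $\circlegrpd$, after unfolding what the bicategory $\algG{\circleS}$ concretely is. Because $\homotlabel[\circleS]$ is the empty type, the homotopy-algebra condition of Definition \ref{def:bicat_grpd} is vacuous, so an object of $\algG{\circleS}$ is exactly a path algebra in groupoids: a groupoid $H$ with a distinguished object $b$ (the value of the point constructor, coming from $\pointconstr[\circleS] \eqdef \constantP{\unit}$) together with a distinguished endomorphism $\ell : \mor{H}{b}{b}$ (the component of the natural transformation encoding the loop, coming from $\patharg[\circleS] \eqdef \constantP{\unit}$ and $\pathleft[\circleS] \eqdef \pathright[\circleS] \eqdef \constr$). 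Unfolding the morphism and $2$-cell data likewise, a $1$-cell $(F, \beta) : (H, b, \ell) \onecell (H', b', \ell')$ is a functor $F$ together with a morphism $\beta : \mor{H'}{F(b)}{b'}$ for which the loop square $\compgrpd{F(\ell)}{\beta} = \compgrpd{\beta}{\ell'}$ commutes, and a $2$-cell $(F,\beta) \twocell (F', \beta')$ is a natural transformation $\eta$ whose component at the base point satisfies $\compgrpd{\eta(b)}{\beta'} = \beta$.

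With these identifications, the first condition is handled as follows. Given an algebra $(H, b, \ell)$, I would define the functor $\circlegrpd \onecell H$ on the unique object by $\unitt \mapsto b$ and on the generating morphism (the image of $\circlegrpdloop$, namely $1$) by $1 \mapsto \ell$; on an arbitrary integer $n$ this forces $n \mapsto \ell^{n}$, using inverses for negative $n$, which is functorial because $\mathbb{Z}$ is the free group on one generator. Choosing $\beta \eqdef \idgrpd{b}$ makes the loop square commute, since then both sides equal $\ell$. For the second condition, given parallel $1$-cells $(F, \beta)$ and $(F', \beta')$ out of $\circlegrpd$, the coherence equation for $2$-cells forces the single component of the sought natural transformation at $\unitt$ to be $\eta(\unitt) \eqdef \compgrpd{\beta}{\inverse{\beta'}}$; I would take this as the component and check the naturality square at the generator $1$, which follows by substituting the two loop squares and cancelling $\beta'$ against its inverse. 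The third condition is then immediate: a $2$-cell out of $\circlegrpd$ is determined by its unique component at $\unitt$, and that component is pinned down uniquely by the coherence equation $\compgrpd{\eta(\unitt)}{\beta'} = \beta$, so any two parallel $2$-cells are equal.

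The main obstacle is not any of the three arguments individually — each reduces to a short computation in a groupoid — but the translation between the abstract algebra data of $\algG{\circleS}$, phrased through the pseudotransformation $\semEG{\constr}$ and the displayed-bicategory coherences underlying Definition \ref{def:bicat_grpd}, and the concrete group-theoretic picture in terms of $b$, $\ell$, $\beta$ and $\eta(\unitt)$. In particular, one must check that the semantics of the endpoints $\pathleft[\circleS] = \pathright[\circleS] = \constr$ genuinely reduce the path coherence of a morphism to the commuting square $\compgrpd{F(\ell)}{\beta} = \compgrpd{\beta}{\ell'}$, and the cell coherence $\AlgCellPoint{\theta}$ to $\compgrpd{\eta(\unitt)}{\beta'} = \beta$. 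Once these reductions are in place, functoriality of $n \mapsto \ell^{n}$ and the naturality computation are routine, using only that $\mathbb{Z}$ is generated by $1$ and that hom-types of groupoids are sets, so that the coherence conditions are propositions.
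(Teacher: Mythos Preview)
Your proposal is correct. The paper does not give a proof of this proposition in the text; the result is stated and the verification is deferred to the Coq formalization. Your direct check of the three biinitiality conditions is the natural approach: unfold an object of $\algG{\circleS}$ to a triple $(H,b,\ell)$, a $1$-cell to $(F,\beta)$ with the loop square $\compgrpd{F(\ell)}{\beta} = \compgrpd{\beta}{\ell'}$, and a $2$-cell to a natural transformation with the base-point coherence $\compgrpd{\eta(\unitt)}{\beta'} = \beta$; then use that $\mathbb{Z}$ is the free group on one generator to define the outgoing functor by $n \mapsto \ell^{n}$ and to reduce naturality of the $2$-cell to the single square at $1$. Your assessment of the main obstacle is also accurate: the work lies in the translation from the displayed-bicategory description (Examples~\ref{ex:DFAlg} and~\ref{ex:DCell}) to this concrete picture, after which the remaining computations are routine and the uniqueness of $2$-cells follows because hom-types in a groupoid are sets and the coherence pins down $\eta(\unitt)$.
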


From Construction \ref{constr:hit_exist} and the fact that biinitial objects are unique up to equivalence,
we can deduce that the circle is the groupoid quotient of $\circlegrpd$ and that its base point $\baseS$ is $\gcl(\unitt)$.
Since the morphisms of $\circlegrpd$ are just the integers, we immediately get the following from Proposition \ref{prop:groupoid_quot_encode_decode}

\begin{cor}
The type $\baseS = \baseS$ is equivalent to the integers.
\end{cor}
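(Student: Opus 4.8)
The plan is to reduce the statement to the encode-decode description of the groupoid quotient from Proposition~\ref{prop:groupoid_quot_encode_decode}. First I would assemble the identification of $\circleS$ with the groupoid quotient $\gquotType(\circlegrpd)$. By the preceding proposition, $\circlegrpd$ is biinitial in $\algG{\circleS}$, so Proposition~\ref{prop:biinitial_in_grpd} yields that $\alggquot(\circlegrpd)$ is biinitial in $\algebra{\circleS}$, and its carrier is $\gquotType(\circlegrpd)$. On the other hand, the algebra $\circleS$ is itself biinitial in $\algebra{\circleS}$ by initial algebra semantics (Proposition~\ref{thm:initial_alg_sem}). Since biinitial objects are unique up to adjoint equivalence, there is an adjoint equivalence of algebras between $\circleS$ and $\alggquot(\circlegrpd)$; in particular its carrier is an equivalence of $1$-types, and because an algebra morphism commutes with the point constructor this equivalence sends $\baseS$ to $\gcl(\unitt)$.

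The remaining step is then a direct application of Proposition~\ref{prop:groupoid_quot_encode_decode} with $G \defeq \circlegrpd$ and both objects taken to be $\unitt$, giving an equivalence between $\gcl(\unitt) = \gcl(\unitt)$ and $\mor{\circlegrpd}{\unitt}{\unitt}$. By the definition of $\circlegrpd$, the hom-type $\mor{\circlegrpd}{\unitt}{\unitt}$ is precisely the type of integers. Composing the equivalence $(\baseS = \baseS) \simeq (\gcl(\unitt) = \gcl(\unitt))$ induced by the algebra equivalence with this encode-decode equivalence produces the desired equivalence between $\baseS = \baseS$ and the integers.

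The only substantive ingredient is Proposition~\ref{prop:groupoid_quot_encode_decode}, which is available, so everything else is bookkeeping with the results already established. The one point requiring a little care is the claim that the adjoint equivalence of algebras sends $\baseS$ to $\gcl(\unitt)$, but this is forced: both are the images of the respective point constructors, which are preserved up to a path by any algebra morphism, so the induced equivalence identifies the two base points and hence their loop types. Thus no further calculation beyond the cited proposition is needed.
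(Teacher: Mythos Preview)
Your proposal is correct and follows essentially the same route as the paper: identify $\circleS$ with $\gquotType(\circlegrpd)$ via uniqueness of biinitial algebras (the paper records this in the sentence immediately preceding the corollary, invoking Construction~\ref{constr:hit_exist}), then apply Proposition~\ref{prop:groupoid_quot_encode_decode}. You spell out the biinitiality argument in slightly more detail by explicitly naming Propositions~\ref{thm:initial_alg_sem} and~\ref{prop:biinitial_in_grpd}, and you correctly note that the base points are identified only up to a path, which suffices.
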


\subsection{Torus}
\label{sec:torus_fund_group}
Next we look at the torus, which we defined in Example \ref{ex:torus}.
We use the same approach to determine its fundamental group, so we start by giving a simpler description of the biinitial algebra for $\torus$ in groupoids.

\begin{defi}
Define a groupoid $\torusgrpd$ as follows
\begin{itemize}
	\item the type of objects is the unit type;
	\item the type of morphisms from $\unitt$ to $\unitt$ is $\mathbb{Z} \times \mathbb{Z}$.
\end{itemize}
\end{defi}

\begin{prob}
\label{prob:initial_grpd_alg_torus}
To construct an $\torus$-algebra structure on $\torusgrpd$.
\end{prob}

\begin{construction}{prob:initial_grpd_alg_torus}
\label{constr:initial_grpd_alg_torus}
Again the point constructor is given by a functor $\torusgrpdbase$ from the unit category to $\torusgrpd$, which sends the unique element to $\unitt$.
The two natural transformations $\torusgrpdloopl$ and $\torusgrpdloopr$ from $\torusgrpdbase$ to $\torusgrpdbase$ send each element to $(1, 0)$ and $(0, 1)$ respectively.
For the last component, we need to give an equality $\torusgrpdsurf$ between $(1, 0) + (0, 1) = (0, 1) + (1, 0)$.
This holds definitionally.
\end{construction}

\begin{prop}
The groupoid $\torusgrpd$ is biinitial in $\algG{\torus}$.
\end{prop}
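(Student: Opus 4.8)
The plan is to verify directly the three clauses of biinitiality for $\torusgrpd$ in $\algG{\torus}$. First I would unfold what the objects, 1-cells and 2-cells of $\algG{\torus}$ are for this particular signature. Unwinding Definition~\ref{def:bicat_grpd} for the torus signature of Example~\ref{ex:torus}, an algebra $G$ in groupoids amounts to a groupoid $G$, an object $b : G$ (the image under the functor out of the unit groupoid interpreting $\pointconstr[\torus] = \constantP{\unit}$), two endomorphisms $p, q : \mor{G}{b}{b}$ interpreting the path constructors $\leftLoop$ and $\rightLoop$, together with a proof of $\compgrpd{p}{q} = \compgrpd{q}{p}$ coming from the homotopy constructor $\surface$ (all the associators and unitors occurring in the homotopy endpoints of Example~\ref{ex:torus} collapse, leaving exactly the commutativity equation). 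Thus a torus-algebra in groupoids is precisely a pointed groupoid equipped with two commuting loops.

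For the first clause (existence of a 1-cell into any $G$) I would define a functor $F : \torusgrpd \to G$ sending the unique object $\unitt$ to $b$ and a morphism $(m, n) : \mathbb{Z} \times \mathbb{Z}$ to $\compgrpd{p^{m}}{q^{n}}$, where $p^{m}$ denotes the $m$-fold composite (using inverses when $m < 0$). Functoriality, i.e.\ $F((m, n) + (m', n')) = \compgrpd{F(m,n)}{F(m',n')}$, is exactly where commutativity $\compgrpd{p}{q} = \compgrpd{q}{p}$ is used, to reorder the factors. I would then equip $F$ with its algebra-morphism structure: the point is preserved on the nose, so the witnessing cell $\AlgMapPoint{F}$ is the identity at $b$; the two path-preservation cells hold because $F(\torusgrpdloopl) = p$ and $F(\torusgrpdloopr) = q$ by definition; and the coherence over $\surface$ is automatic, being an equation between morphisms of $G$, which form a set.

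For the second and third clauses I would first observe that the algebra-morphism laws rigidify any 1-cell out of $\torusgrpd$: a 1-cell $F : \torusgrpd \onecell G$ carries an isomorphism $\AlgMapPoint{F} : F(\unitt) \iso b$, and its path-preservation data force $F$ to send the generators $\torusgrpdloopl, \torusgrpdloopr$ to $p, q$ up to conjugation by $\AlgMapPoint{F}$; since every morphism of $\torusgrpd$ is a $\mathbb{Z}$-combination of these two generators, the action of $F$ on all of $\mathbb{Z} \times \mathbb{Z}$ is thereby determined. Given two such $F, F'$, I would take the 2-cell $\theta$ with single component $\theta_{\unitt} \eqdef \compgrpd{\AlgMapPoint{F}}{\inverse{\AlgMapPoint{F'}}}$; naturality need only be checked on the two generators and follows from the previous observation, while the algebra 2-cell condition $\AlgCellPoint{\theta}$ is exactly the triangle relating $\theta_{\unitt}$ to $\AlgMapPoint{F}$ and $\AlgMapPoint{F'}$, which holds by construction. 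Uniqueness (third clause) is then immediate: any 2-cell $\theta'$ must satisfy the same condition $\AlgCellPoint{\theta'}$, which pins down its component at the unique object $\unitt$; since a 2-cell of $\algG{\torus}$ is determined by its underlying natural transformation and that transformation has the single forced component, equality of 2-cells is a proposition and we conclude $\theta' = \theta$.

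The main obstacle I anticipate is the rigidity step underlying both the functoriality of $F$ and the 2-cell analysis: one must show that an algebra morphism out of $\torusgrpd$ is completely determined by the images of the two generating loops, i.e.\ that a homomorphism out of $\mathbb{Z} \times \mathbb{Z}$ is fixed by its values on the generators. This is precisely where the description of the morphisms of $\torusgrpd$ as $\mathbb{Z} \times \mathbb{Z}$ and the commutativity witness $\torusgrpdsurf$ enter. Once it is established, all remaining verifications are routine, because every cell above the level of morphisms lives in a set and the coherences are thus automatically satisfied.
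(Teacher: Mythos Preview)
Your proposal is correct and is the natural direct verification. The paper itself does not give a proof of this proposition in the text; it is merely stated, with the argument residing in the Coq formalisation. Your unfolding of the data of $\algG{\torus}$ (pointed groupoid with two commuting endomorphisms of the base object), the explicit construction of the 1-cell via $(m,n)\mapsto p^{m}\cdot q^{n}$, and the use of the 2-cell coherence $\AlgCellPoint{\theta}$ to force $\theta_{\unitt}=\compgrpd{\AlgMapPoint{F}}{\inverse{\AlgMapPoint{F'}}}$ are exactly the steps one carries out; in particular your observation that naturality need only be checked on the two generators $(1,0),(0,1)$, where it reduces to the path-preservation coherences of $F$ and $F'$, is the key point and is handled correctly. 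One small remark: in the uniqueness clause you should also note that the displayed 2-cells over the path layer ($\DCell$) are inhabitants of the unit type (Example~\ref{ex:DCell}), so once the component at $\unitt$ is pinned down by $\AlgCellPoint{\theta}$ there is genuinely nothing else to compare; you implicitly use this when saying the 2-cell is determined by its underlying natural transformation.
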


Again we can deduce that the torus is the groupoid quotient of $\torusgrpd$ and that its base point $\base$ is $\gcl(\unitt)$.
From Proposition \ref{prop:groupoid_quot_encode_decode}, we immediately get

\begin{cor}
The type $\base = \base$ is equivalent to $\mathbb{Z} \times \mathbb{Z}$.
\end{cor}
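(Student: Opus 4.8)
The plan is to identify the underlying $1$-type of the torus with the groupoid quotient of $\torusgrpd$ and then read off its loop space from Proposition \ref{prop:groupoid_quot_encode_decode}. First I would invoke the preceding proposition, which states that $\torusgrpd$, equipped with the algebra structure of Construction \ref{constr:initial_grpd_alg_torus}, is biinitial in $\algG{\torus}$. By Construction \ref{constr:hit_exist}, together with Propositions \ref{thm:initial_alg_sem} and \ref{prop:biinitial_in_grpd}, the torus HIT arises as $\alggquot(\torusgrpd)$, a biinitial object of $\algM(\torus)$; since $\alggquot$ acts as the groupoid quotient on carriers, the underlying $1$-type of $\torus$ is $\gquotType(\torusgrpd)$.

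Next I would transfer the base point along this identification. Because biinitial objects are unique up to adjoint equivalence and $\algM(\torus)$ is univalent, the underlying algebra of any HIT for $\torus$ is path-equal to $\alggquot(\torusgrpd)$; in particular their carriers are equivalent. The induced equivalence of algebras is an algebra morphism whose carrier is an equivalence, so it commutes with the point constructors up to the point-constructor coherence cell carried by an algebra morphism. Since the point constructor of $\torusgrpd$ is the functor $\torusgrpdbase$ sending the unique element $\unitt$ to its class, this shows that the base point $\base$ corresponds to $\gcl(\unitt)$ under the equivalence $\torus \adjequiv \gquotType(\torusgrpd)$.

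Finally, the equivalence on carriers induces an equivalence of loop spaces $(\base = \base) \adjequiv (\gcl(\unitt) = \gcl(\unitt))$, and Proposition \ref{prop:groupoid_quot_encode_decode} supplies $(\gcl(\unitt) = \gcl(\unitt)) \adjequiv \mor{\torusgrpd}{\unitt}{\unitt}$. By the definition of $\torusgrpd$, the latter hom-type is $\mathbb{Z} \times \mathbb{Z}$. Composing these three equivalences yields $(\base = \base) \adjequiv \mathbb{Z} \times \mathbb{Z}$, as claimed.

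The main obstacle I anticipate is the bookkeeping in the second step: faithfully tracking the base point through the equivalence of algebras, that is, verifying that the identification of carriers genuinely carries $\base$ to $\gcl(\unitt)$ rather than merely to some point propositionally equal to it. This requires unpacking the point-constructor coherence of the algebra equivalence and its interaction with $\torusgrpdbase$ from Construction \ref{constr:initial_grpd_alg_torus}; once this compatibility is settled, the remaining manipulations are routine transports of paths along the resulting equivalence.
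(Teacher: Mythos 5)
Your proposal is correct and takes essentially the same approach as the paper: identify the torus with the groupoid quotient of $\torusgrpd$ via biinitiality of $\torusgrpd$ and uniqueness of biinitial objects (Construction \ref{constr:hit_exist}), track the base point $\base$ to $\gcl(\unitt)$, and conclude with Proposition \ref{prop:groupoid_quot_encode_decode}. The care you devote to transporting the base point through the algebra equivalence is bookkeeping the paper leaves implicit, but the argument is the same.
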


\subsection{Group Quotient}
\label{sec:group_quotient_fund_group}
For the remainder of this section, we assume that a group $G$ is given.
Our goal is to show that the fundamental group of the group quotient at its base point is $G$.
Again we do this by determining the biinitial groupoid algebra.
The algebra structure is constructed in a similar fashion to Construction \ref{constr:initial_grpd_alg_torus}.

\begin{defi}
We define a groupoid $\grquotgrpd$ as follows
\begin{itemize}
	\item the type of objects is the unit type;
	\item the type of morphisms from $\unitt$ to $\unitt$ is $G$.
\end{itemize}
\end{defi}

\begin{prob}
\label{prob:initial_grpd_alg_group_quot}
To construct an $\groupquot{G}$-algebra structure on $\grquotgrpd$.
\end{prob}

\begin{prop}
The groupoid $\grquotgrpd$ is biinitial in $\algG{\groupquot{G}}$.
\end{prop}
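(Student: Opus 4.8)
The plan is to verify the three clauses of biinitiality directly, after first unpacking what an algebra for $\groupquot{G}$ in groupoids is. Since $\pointconstr \eqdef \constantP{\unit}$, the point constructor $\AlgPoint{H}$ of an algebra $H$ picks out a single object $b : H$. Since $\patharg \eqdef \constantP{G}$ with both endpoints equal to $\comp{\Ce(\unitt)}{\constr}$, and $\semPG{\constantP{G}}(H)$ is the discrete groupoid on the set $G$, the path structure is a natural transformation between two functors constant at $b$, which collapses to a plain function $\ell : G \to \mor{H}{b}{b}$. Decoding the two homotopy labels $\booltrue, \boolfalse : \homotlabel$ shows that the fullness conditions of $\algG{\groupquot{G}}$ say exactly $\ell(\id) = \idgrpd{b}$ and $\ell(\compgrpd{x}{y}) = \compgrpd{\ell(x)}{\ell(y)}$. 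Hence an object of $\algG{\groupquot{G}}$ is a groupoid $H$ together with a group homomorphism $\ell$ from $G$ into the automorphisms of a chosen object $b$. In these terms $\grquotgrpd$ carries the algebra structure with $b \eqdef \unitt$ and $\ell \eqdef \operatorname{id}_G$, using that composition in $\grquotgrpd$ is the multiplication of $G$.

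For the first clause, given such an $H$, I would define the 1-cell $\grquotgrpd \onecell H$ to be the functor sending $\unitt$ to $b$ and each morphism $x : G = \mor{\grquotgrpd}{\unitt}{\unitt}$ to $\ell(x) : \mor{H}{b}{b}$. Functoriality is precisely the two laws $\ell(\id) = \idgrpd{b}$ and $\ell(\compgrpd{x}{y}) = \compgrpd{\ell(x)}{\ell(y)}$ recorded above. The point-constructor compatibility cell is the identity isomorphism $b \iso b$, and the path-constructor compatibility reduces, after unwinding $\semEG{\comp{\Ce(\unitt)}{\constr}}$, to the statement that this functor sends $\ell_{\grquotgrpd} = \operatorname{id}_G$ to $\ell$, which holds definitionally. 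No further datum is needed, since $\algG{\groupquot{G}}$ is a full subbicategory of the path-algebra bicategory, so its morphisms carry no extra homotopy component.

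For the second and third clauses I would exploit that $\grquotgrpd$ has a single object. Given 1-cells $F, F' : \grquotgrpd \onecell H$, their point-compatibility cells are isomorphisms $\beta_F, \beta_{F'} : F(\unitt) \iso b$; I define the candidate 2-cell by its single component $\theta(\unitt) \eqdef \compgrpd{\beta_F}{\inverse{\beta_{F'}}}$. The only condition to check is naturality at each $x : G$: the path-compatibility of $F$ and $F'$ forces both to send $x$ to the $\beta$-conjugate of $\ell(x)$, and a short computation then yields $\compgrpd{F(x)}{\theta(\unitt)} = \compgrpd{\theta(\unitt)}{F'(x)}$. For uniqueness, the point-constructor compatibility $\AlgCellPoint{\theta}$ of \emph{any} 2-cell forces $\theta(\unitt) = \compgrpd{\beta_F}{\inverse{\beta_{F'}}}$; since a natural transformation out of a one-object groupoid is determined by this single component and the remaining data are equalities in the hom-sets of $H$ (which are sets), any two 2-cells $F \twocell F'$ coincide.

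The main obstacle is the bookkeeping in the first step: checking that the semantics of the two homotopy endpoints for $\groupquot{G}$, which involve the constructors $\hassocN$, $\hcompconst{-}$, and $\constructor{ap}$, really evaluate in a groupoid to the clean unit and multiplicativity laws for $\ell$, and that $\semEG{\comp{\Ce(\unitt)}{\constr}}$ collapses the path structure to a function $G \to \mor{H}{b}{b}$. Once this decoding is done, all three clauses are essentially forced by $\grquotgrpd$ having a single object. As a sanity check one could instead appeal to uniqueness of biinitial objects: by Construction \ref{constr:initial_grpd_alg} the bicategory $\algG{\groupquot{G}}$ has a biinitial object, and it would suffice to exhibit an adjoint equivalence between it and $\grquotgrpd$, but the direct verification above is more transparent.
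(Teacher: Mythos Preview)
Your proposal is correct. The paper does not spell out a proof of this proposition in the text; like the analogous statements for the circle and the torus in Sections~\ref{sec:circle_fund_group} and~\ref{sec:torus_fund_group}, it is simply asserted and the verification is left to the Coq formalization. Your direct unpacking of the algebra structure to a group homomorphism $\ell : G \to \mor{H}{b}{b}$ and the subsequent verification of the three biinitiality clauses is exactly the natural approach and is what any such formalization must unwind to.

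One small remark on the second clause: your naturality check for $\theta(\unitt) \eqdef \compgrpd{\beta_F}{\inverse{\beta_{F'}}}$ relies on the path-compatibility condition pinning down $F(x)$ as the $\beta_F$-conjugate of $\ell(x)$. This is correct, but it is worth making explicit that the naturality cells of the endpoint pseudotransformation $\semEG{\comp{\Ce(\unitt)}{\constr}}$ at a prealgebra morphism $(F,\beta_F)$ are precisely $\beta_F$ itself (since the endpoint factors through $\constr$), so the $\DCell$ equation literally reads $\compgrpd{F(x)}{\beta_F} = \compgrpd{\beta_F}{\ell(x)}$. With that identity in hand, both your naturality computation and the uniqueness argument via $\AlgCellPoint{\theta}$ go through exactly as you wrote them.
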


\begin{cor}
The fundamental group of the group quotient of $G$ is $G$ itself.
\end{cor}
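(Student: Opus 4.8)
The plan is to mirror the arguments used for the circle and the torus. By Construction \ref{constr:hit_exist} together with Propositions \ref{thm:initial_alg_sem} and \ref{prop:biinitial_in_grpd}, the group quotient $\groupquot{G}$ is the underlying 1-type of $\alggquot$ applied to a biinitial object of $\algG{\groupquot{G}}$. Since biinitial objects are unique up to adjoint equivalence and, by the preceding proposition, $\grquotgrpd$ is biinitial in $\algG{\groupquot{G}}$, it follows that $\groupquot{G}$ is the groupoid quotient $\gquotType(\grquotgrpd)$ and that its base point $\groupquotbase$ is identified with $\gcl(\unitt)$.

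Next I would apply Proposition \ref{prop:groupoid_quot_encode_decode} with both $x$ and $y$ taken to be $\unitt$. This yields an equivalence between the type $\gcl(\unitt) = \gcl(\unitt)$ and $\mor{\grquotgrpd}{\unitt}{\unitt}$. Unfolding the definition of $\grquotgrpd$, the latter type is exactly the carrier of $G$. Transporting along the identification $\groupquotbase = \gcl(\unitt)$ then gives an equivalence between the loop space $\groupquotbase = \groupquotbase$ and $G$, which (since $\groupquot{G}$ is a 1-type) is an equivalence of sets.

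The remaining, and most delicate, point is to check that this equivalence is a group isomorphism rather than a mere equivalence of types. The group structure on $\groupquotbase = \groupquotbase$ is given by concatenation of loops, while the structure on $G$ is its multiplication. The equivalence of Proposition \ref{prop:groupoid_quot_encode_decode} sends a morphism $f$ to the path $\gcleq(f)$, and the $\gconcat$ constructor of the groupoid quotient equates $\gcleq(\compgrpd{f}{g})$ with $\concat{\gcleq(f)}{\gcleq(g)}$, while $\ge$ sends the identity morphism to the reflexivity loop. Hence composition of morphisms in $\grquotgrpd$, which is precisely multiplication in $G$, is carried to concatenation of loops, so the equivalence respects the group operations and units. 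I expect verifying this compatibility---ensuring the encode-decode equivalence is natural with respect to composition---to be the main obstacle, although it is essentially forced by the computation rules for $\gcleq$ recorded in Figure \ref{fig:gquot}.
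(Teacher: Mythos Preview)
Your proposal is correct and follows essentially the same route as the paper: identify $\groupquot{G}$ with $\gquotType(\grquotgrpd)$ via biinitiality, then invoke Proposition~\ref{prop:groupoid_quot_encode_decode} to obtain the equivalence between $\groupquotbase = \groupquotbase$ and $\mor{\grquotgrpd}{\unitt}{\unitt} = G$. The paper states the corollary without proof, treating it as immediate from the preceding proposition in exact parallel with the circle and torus cases; your additional paragraph verifying that the equivalence respects the group operations via $\gconcat$ and $\ge$ goes beyond what the paper makes explicit, but is the natural way to justify the phrase ``fundamental group'' rather than merely ``loop space as a type''.
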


\section{Conclusion and Further Work}
\label{sec:conclusion}
We showed how to construct finitary 1-truncated higher inductive types using the propositional truncation, quotient, and the groupoid quotient.
This reduces the existence of a general class of HITs to simpler ones.
We needed the types to be 1-truncated, so that we could use the framework of bicategory theory,
and the HITs we studied had to be finitary to guarantee that the groupoid quotient
commutes with the involved operations \cite{DBLP:journals/mscs/ChapmanUV19}.
On the way, we also proved that HITs are unique and we studied universal algebra with our signatures.
We showed that the bicategory of algebras has finite limits and we proved the first isomorphism theorem for these algebras.
Lastly, we used the way we constructed HITs to calculate fundamental groups.

There are numerous ways to improve on these results.
First of all, we only constructed finite limits of algebras while it should also be possible to construct finite colimits of algebras.
The scheme studied in this paper is not flexible enough to support these colimits since we do not have a path endpoint that represents the action of a polynomial on the point constructor.
Hence, if we want to internally construct these colimits, then we need to define a more permissive signature for higher inductive types.

Secondly, it should be possible to modify our approach to obtain HITs in directed type theory (DTT) \cite{north2019towards}.
In the model of DTT provided by North, types are interpreted as categories and higher inductive types in DTT could be interpreted as initial algebras.
We constructed such algebras in the bicategory of groupoids and in a similar way, one should be able to construct the desired algebras in the bicategory of categories.

Lastly, our construction only considers a rather simple scheme of HITs.
In particular, we restrict ourselves to the 1-truncated case. % \cite{KaposiK18}.
Since untruncated types correspond to $\infty$-groupoids,
generalizing the methods used in this paper to the untruncated case,
requires formalizing notions from $\infty$-category theory in type theory
\cite{DBLP:conf/csl/AltenkirchR12,DBLP:journals/pacmpl/CapriottiK18,FinsterM17}.
This also requires finding an $\infty$-dimensional generalization of the groupoid quotient.
An alternative approach to deal with untruncated HITs, pointed out by Ali Caglayan,
would be using wild categories \cite{DBLP:conf/tlca/HirschowitzHT15,KrausRaumer}.
Note that generalizing Construction \ref{constr:initial_grpd_alg} would also pose a challenge, because the set quotient cannot be used.
Instead all $n$-morphisms must be freely generated by the $n$-path constructors.
We would also like to extend our scheme to incorporate both indexed HITs and higher inductive-inductive types \cite{CavalloH19,KaposiK18}.

\section*{Acknowledgments}
The authors thank Herman Geuvers, Dan Frumin, Benedikt Ahrens, and Ali Caglayan for helpful comments and discussions.
The authors also thank the anonymous reviewers for their helpful comments and suggestions.
Niccol{\`o} Veltri was supported by the Estonian Research Council grant PSG659 and by the ESF funded
Estonian IT Academy research measure (project 2014-2020.4.05.19-0001).

\bibliographystyle{alpha}
\bibliography{literature}

\end{document}